\numberwithin{equation}{section}
\newtheorem{theorem}{Theorem}[section]
\newtheorem{lemma}{Lemma}[section]
\newtheorem{proposition}{Proposition}[section]
\newtheorem{remark}{Remark}[section]
\begin{document}
\title[Euler-Poisson: Ion equation]{Global Smooth Ion Dynamics in the
Euler-Poisson System}
\author{Yan Guo and Benoit Pausader}

\begin{abstract}
A fundamental two-fluid model for describing dynamics of a plasma is the
Euler-Poisson system, in which compressible ion and electron fluids interact
with their self-consistent electrostatic force. Global smooth electron
dynamics were constructed in Guo \cite{Guo} due to dispersive effect of the
electric field. In this paper, we construct global smooth irrotational
solutions with small amplitude for ion dynamics in the Euler-Poisson system.
\end{abstract}

\address{ }
\email{guoy@cfm.brown.edu Benoit.Pausader@math.brown.edu}
\maketitle
\tableofcontents

\section{Introduction and Formulation}

\subsection{The ionic Euler-Poisson system}

The \textquotedblleft two-fluid\textquotedblright\ models in plasma physics
describe dynamics of two separate compressible fluids of ions and electrons
interacting with their self-consistent electromagnetic field. Many famous
nonlinear dispersive PDE, such as Zakharov's equation, nonlinear Schr\"{o}dinger equations, as well as KdV equations, can be formally derived from
\textquotedblleft two-fluid\textquotedblright\ models under various
asymptotic limits. In the absence of the magnetic effects, the fundamental
two-fluid model for describing the dynamics of a plasma is given by the
following Euler-Poisson system 
\begin{equation}
\begin{split}
\partial _{t}n_{\pm }+\nabla \cdot \left( n_{\pm }v\,_{\pm }\right) & =0 \\
n_{\pm }m_{\pm }(\partial _{t}v_{\pm }+v_{\pm }\cdot \nabla v_{\pm })+T_{\pm
}\nabla n_{\pm }& =en_{\pm }\nabla \phi \\
\Delta \phi & =4\pi e(n_{+}-n_{-}).
\end{split}
\label{2Fluids}
\end{equation}
Here $n_{\pm }$ are the ion (+) and electron density (-), $v_{\pm }$ are the
ion (+) and electron (-) velocity, $m_{\pm }$ are the masses of the ions (+)
and electrons (-), $T_{\pm }$ are their effective temperatures, and $e$ is
the charge of an electron. The self-consistent electric field $\nabla \phi $
satisfies the Poisson equation. The Euler-Poisson system describes rich
dynamics of a plasma. Indeed, even at the linearized level, there are
electron waves, ion acoustic waves in the Euler-Poisson system. Despite its
importance, there has been few mathematical study of its global solutions in
3D. This stems from the fact that the Euler-Poisson system belongs to the
general class of hyperbolic conservation laws with zero dissipation, for
which no general mathematical framework for construction of global in-time
solutions exists in 3D. In fact, as expected \cite{GuoTad}, solutions of the
Euler-Poisson system with large amplitude in general will develop shocks.

However, unlike the pure Euler equations, shock formation for solutions of
the Euler-Poisson system with small amplitude has remained open. In Guo \cite
{Guo}, the first author studied a simplified model of the Euler-Poisson
system for an electron fluid: 
\begin{equation}
\begin{split}
\partial _{t}n_{-}+\nabla \cdot \left( n_{-}v\,_{-}\right) & =0 \\
n_{-}m_{-}(\partial _{t}v_{-}+v_{-}\cdot \nabla v_{-})+T_{-}\nabla n_{-}&
=en_{-}\nabla \phi  \\
\Delta \phi & =4\pi e(n_{-}-n_{0}).
\end{split}
\label{electronEP}
\end{equation}
In this model, the ions are treated as immobile and only form a constant
charged background $n_{0}$. Surprisingly, it was observed \cite{Guo} that
the linearized Euler-Poisson system for the electron fluid is the
Klein-Gordon equation, due to plasma oscillations created by the electric
field $\phi .$ In this case, the dispersion relation reads 
\begin{equation*}
\omega (\xi )\backsim \sqrt{1+|\xi |^{2}}
\end{equation*}%
Such a \textquotedblleft Klein-Gordon\textquotedblright\ effect led to
construction of smooth irrotational electron dynamics with small amplitude
for all time. This is in stark contrast to the pure Euler equations for
neutral fluids where the dispersion relation reads 
\begin{equation*}
\omega (\xi )\backsim |\xi |,
\end{equation*}
in which shock waves can develop even for small smooth initial data (see
Sideris \cite{Sid}). It is the dispersive effect of the electric field that
enhances the linear decay rate and prevents shock formation. The natural
open question remains: does such a dispersive effect exist generally ? If
so, can it prevent shock formation for the general Euler-Poisson system
\eqref{2Fluids} ?

In the current paper, we make another contribution towards answering this
question. We consider another (opposite) asymptotic limit of the original
Euler-Poisson system \eqref{2Fluids} for the ion dynamics. It is well-known
that $\frac{m_{-}}{m_{+}}<<1$ in all physical situations. By letting the
electron mass $m_{-}$ go to zero, we formally obtain $T_{-}\nabla
n_{-}=en_{-}\nabla \phi $ and the famous Boltzmann relation 
\begin{equation}
n_{-}=n_{0}\exp (\frac{e\phi }{T_{-}})  \label{boltzmann}
\end{equation}%
for the electron density ($n_{0}$ is a constant). Such an important relation %
\eqref{boltzmann} can also be verified through arguments from kinetic
theory, see Cordier and Grenier \cite{CorGre}. We then obtain the well-known
ion dynamic equations as 
\begin{equation}
\begin{split}
\partial _{t}n_{+}+\nabla \cdot \left( n_{+}v_{+}\right) & =0 \\
n_{+}m_{+}\left( \partial _{t}v_{+}+v_{+}\cdot \nabla v_{+}\right) &
=-T_{+}\nabla n_{+}-n_{+}e\nabla \phi \\
\Delta \phi & =4\pi e\left( n_{0}\exp \left( \frac{e\phi }{T_{-}}\right)
-n_{+}\right) .
\end{split}
\label{EP}
\end{equation}%
We also assume that 
\begin{equation}
\hbox{curl}(v(0))=0.  \label{CondCurlFree}
\end{equation}%
It is standard that the condition \eqref{CondCurlFree} is preserved by the
flow. As a matter of fact, non irrotational flow leads to creation of a
non-vanishing magnetic field, which is omitted in the Euler-Poisson system
but retained in a more general Euler-Maxwell system \cite{CheJerWan}. The
linear dispersion relation for \eqref{EP} behaves like 
\begin{equation}
p(\xi )\equiv |\xi |\sqrt{\frac{2+|\xi |^{2}}{1+|\xi |^{2}}}\equiv |\xi
|q(|\xi |)  \label{DefOfp}
\end{equation}%
which is much closer to the wave dispersion $\omega (\xi )=|\xi |$ than to
the Klein-Gordon one, $\omega (\xi )=\sqrt{1+|\xi |^{2}}$ (in particular
note this dispersion relation behaves near $0$ as in the Schr\"{o}dinger
case, whereas in our dispersion relation $p$ remains very similar to that of
the wave's). Intuitively, one might expect formation of singularity for %
\eqref{EP} as in the pure Euler equations. Nevertheless, we demonstrate that
small smooth irrotational flows exist globally in time, and there is no
shock formation. Without loss of generality, we study the global behavior of
irrotational perturbations of the uniform state 
\begin{equation*}
\lbrack n_{+},v_{+}]=[n_{0}+\rho ,v].
\end{equation*}%
We use two important norms defined as follows: 
\begin{equation}
\begin{split}
\Vert u(x)\Vert _{Y}& =\Vert |\nabla |^{-1}u\Vert _{H^{2k+1}}+\Vert u\Vert
_{W^{k+\frac{12}{5},\frac{10}{9}}} \\
\Vert u(t,x)\Vert _{X}& =\sup_{t}\left( \Vert |\nabla |^{-1}(1-\Delta )^{k+%
\frac{1}{2}}u(t)\Vert _{L^{2}}+(1+t)^{\frac{16}{15}}\Vert (1-\Delta )^{\frac{%
k}{2}}u(t)\Vert _{L^{10}}\right)
\end{split}
\label{Norm}
\end{equation}%
for $k\geq 5$.

Here, we keep $k$ as a parameter to emphasize the fact that smoother initial
data lead to smoother solutions. Hidden in the $X$-norm is a statement about
preservation of regularity of $(\rho ,v)$. Our main result is the following

\begin{theorem}
\label{MainThm} There exists $\varepsilon >0$ such that any initial
perturbation $(n_{0}+\rho _{0},v_{0})$ satisfying \eqref{CondCurlFree} $%
\nabla \times v_{0}=0$ and $\Vert \rho_0 \Vert _{Y}+\Vert v_{0}\Vert
_{Y}\leq \varepsilon $ leads to a global solution $(n_0+\rho ,v)$ of %
\eqref{EP} with 
\begin{equation*}
\Vert\rho \Vert _{X}+\Vert v\Vert _{X}\leq 2\varepsilon .
\end{equation*}
In particular, the perturbations $\rho $ and $v$ decay in $L^{\infty }$.
\end{theorem}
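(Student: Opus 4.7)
The plan is to diagonalize \eqref{EP} into a single complex dispersive equation and run a bootstrap argument on the $X$ norm. \emph{Reduction.} The curl-free assumption gives $v=\nabla\psi$, and the semilinear Poisson--Boltzmann equation can be inverted in a neighborhood of $\rho=0$ to write $\phi$ as a nonlinear functional of $\rho$, whose principal linear part is $(-\Delta+\mu)^{-1}$ for some $\mu>0$ determined by the linearization of the exponential. Substituting back into the first two equations of \eqref{EP} and diagonalizing, one obtains a scalar equation
\[
(\partial_{t}-ip(|\nabla|))u=\mathcal{N}_{2}(u,\bar{u})+\mathcal{N}_{\geq 3}(u,\bar{u}),
\]
with $p$ as in \eqref{DefOfp}; the weight $|\nabla|^{-1}$ inserted in the definition of $X$ and $Y$ is needed precisely because $p(\xi)$ vanishes to first order at the origin. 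The norms $Y$ on $u_{0}$ and $X$ on $u$ are equivalent to the corresponding quantities on $(\rho,v)$ under this change of variable.

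\emph{Energy estimate.} The $L^{\infty}_{t}(H^{2k+1})$ part of $\|u\|_{X}$ is controlled by a quasilinear symmetrization: applying $(1-\Delta)^{(2k+1)/2}$ to the system and pairing with $u$, the symmetric principal part produces no growth, and the semilinear remainders are closed by placing the lowest-regularity factor in $L^{\infty}$. Interpolating the bootstrapped $L^{10}$ decay $(1+t)^{-16/15}$ with $H^{2k+1}$ control for $k\geq 5$ supplies a $W^{k,\infty}$ bound decaying faster than $(1+t)^{-1}$, which is integrable in time and yields $\|u\|_{L^{\infty}_{t}H^{2k+1}}\leq\varepsilon(1+C\varepsilon)$.

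\emph{Dispersive estimate.} The hard part is to propagate $\|(1-\Delta)^{k/2}u(t)\|_{L^{10}}\lesssim\varepsilon(1+t)^{-16/15}$. For the linear flow one would use a stationary phase estimate of the form $\|e^{itp(|\nabla|)}f\|_{L^{10}}\lesssim(1+t)^{-16/15}\|f\|_{W^{12/5,10/9}}$, where the derivative loss $12/5$ and the decay exponent $16/15$ are matched to the geometry of the characteristic surface of $p$ at the pair $(10,10/9)$ in dimension three. The nonlinear contribution in Duhamel's formula is treated by a normal form reduction, i.e.\ integration by parts in time against the phase $p(\xi)\mp p(\eta)\mp p(\xi-\eta)$. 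This transforms the quadratic term $\mathcal{N}_{2}$ into boundary contributions (absorbed into the energy norm thanks to the weight $|\nabla|^{-1}$) and cubic corrections, which inherit a triple copy of the bootstrapped decay $(1+s)^{-3\cdot 16/15}$ and are integrable in time.

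\emph{Main obstacle.} The principal difficulty is that $p$ is genuinely wave-like in both low and high frequency regimes, as noted after \eqref{DefOfp}, so that the quadratic resonant set $\{(\xi,\eta):p(\xi)=\pm p(\eta)\pm p(\xi-\eta)\}$ is substantial; dividing by the phase in the normal form step is not bounded there. Resolving the space-time resonances requires a careful frequency-localized bilinear multiplier theorem that exploits the strict monotonicity and the convexity of the profile $q$ in \eqref{DefOfp}, and it is the verification of such multiplier estimates---together with the delicate balance of exponents in \eqref{Norm} which barely allows the normal form scheme to close---that constitutes the main technical content of the proof.
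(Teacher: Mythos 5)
Your outline matches the paper's overall strategy (diagonalize to a scalar complex unknown $\alpha=\rho\mp\frac{i}{q(|\nabla|)}\mathcal R^{-1}v$, bootstrap the $X$ norm by combining a quasilinear energy estimate with $L^{10}$ dispersive decay, and close the quadratic terms via a normal form / integration by parts in the phase), but several of the claimed mechanisms are incorrect in ways that would break the argument.

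First, the closure of the $L^{10}$ part. You claim the cubic Duhamel corrections ``inherit a triple copy of the bootstrapped decay $(1+s)^{-3\cdot 16/15}$ and are integrable in time.'' This does not propagate the decay: integrability of the source only gives boundedness of the Duhamel integral, not the required $(1+t)^{-16/15}$ decay of $\|\alpha(t)\|_{W^{k,10}}$. Moreover you cannot put all three factors in $L^{10}$ and simultaneously land the product in the dual space $W^{12/5,10/9}$ needed to invoke the linear estimate \eqref{L10Decay} (the exponents do not add up; one factor must sit in an $L^2$-type space). The paper instead retains the propagator decay $(1+t-s)^{-16/15}$ and places the quadratic factor $h(\alpha)$ in $H^{-1}\cap H^{2k}$ with the single decay $(1+s)^{-16/15}$ from \eqref{PropertiesOfN}; the convolution $\int_0^t(1+t-s)^{-16/15}(1+s)^{-16/15}\,ds\lesssim (1+t)^{-16/15}$ is what closes the bootstrap. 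Similarly, the boundary term $\mathfrak B(\alpha(t))$ is not ``absorbed into the energy norm''; it is a quadratic expression evaluated at time $t$ whose $L^{10}$ norm must decay, and the paper estimates it directly with bilinear multiplier estimates plus the bootstrapped $\|\alpha(t)\|_{L^{10}}$ smallness.

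Second, the multiplier analysis. The decisive structural observation in the paper is the opposite of what you assert: after extracting $m\sim|\xi|$, the quantity $\mathcal M_1=\frac{|\xi||\xi-\eta||\eta|}{\Phi_1}$ \emph{is} locally bounded (Lemma~\ref{EstimPhiGen}), and the extracted factors $|\xi-\eta|^{-1},\,|\eta|^{-1}$ are precisely what the $|\nabla|^{-1}$ weight in the $X$-norm is designed to absorb. The genuine difficulty is not boundedness but the anisotropic degeneracy of $\mathcal M_1$, which defeats Coifman--Meyer; what is actually needed (and proved in Proposition~\ref{EstimGenPhase}) is that $\mathcal M_1$ belongs to $L_\xi^\infty\dot H_\eta^{5/4-\varepsilon}\cap L_\eta^\infty\dot H_\xi^{5/4-\varepsilon}$, upgraded to $3/2-\varepsilon$ at high frequency, so that the $L^\infty_\xi\dot H^s_\eta$ bilinear theorem of Gustafson--Nakanishi--Tsai (Theorem~\ref{CorGNT}) applies. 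Your appeal to ``monotonicity and convexity of $q$'' gestures at the lower bound on $\Phi_1$ but does not identify this specific Sobolev-regularity criterion, which is the technical core of the proof and also dictates the choice of $L^{10}$ (one needs $p<12$). Until these three points are repaired --- the correct decay-transfer mechanism in Duhamel, the separate $L^{10}$ estimate for the boundary term, and the $L^\infty_\xi\dot H^s_\eta$ multiplier bound rather than a vague resonance-set argument --- the proposal has a gap in precisely the part the paper labels as its main mathematical difficulty.
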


Together with earlier result in Guo \cite{Guo}, global smooth potential
flows with small velocity exist for two opposite scaling limits of %
\eqref{2Fluids}. This is a strong and exciting indication that shock waves
of small amplitude should be absent for the full Euler-Poisson system %
\eqref{2Fluids}, at least in certain physical regimes. Our method developed
in this paper should be useful in the future study of \eqref{2Fluids}.

There have been a lot of mathematical studies of various aspects of the
Euler-Poisson system for a plasma. Texier \cite{Tex,Tex2} studied the Euler-Maxwell system and its approximation by the Zakharov equations. Wang and Wang \cite{WanWan} constructed large BV
radially symmetric solutions outside the origin. In Liu and
Tadmor \cite{LiuTad,LiuTad2}, threshold for singularity formation has been
studied for the Euler-Poisson system with $T_{\pm }=0$ in one and two
dimensions. In Feldman, Ha and Slemrod \cite{FelSeuSle,FelSeuSle2}, plasma
sheath problem of the Euler-Poisson system was investigated. In Peng and
Wang \cite{PenWan}, Euler-Poisson system is derived from Euler-Maxwell
system with a magnetic field. Quasi-neutral limit in the Euler-Poisson
system was studied in Cordier and Grenier \cite{CorGre} and Peng and Wang 
\cite{PenWan2}. When $n_{+}$ is replaced by a doping profile and a momentum
relaxation is present, the Euler-Poisson system describes electron dynamics
in a semiconductor device. There has been much more mathematical study of
such a model, for which we only refer to Chen, Jerome and Wang \cite%
{CheJerWan} and the references therein.

\subsection{Presentation of the paper}

For notational simplicity, we let $n_{0}=e=T_{+}=T_{-}=1$ in \eqref{EP}
throughout the paper. Even though the ion dynamics system \eqref{EP} is the
most natural system to further understand the dispersive effects in the full
Euler-Poisson system \eqref{2Fluids}, it has been remained an open problem
to construct global smooth solutions until now ever since the work of \cite%
{Guo}, due to much more challenging mathematical difficulties than in the
case of the electron Euler-Poisson equation \eqref{electronEP} studied by
Guo \cite{Guo}.

The first difficulty is to understand the time decay rate of the linearized
ion dynamics equation: 
\begin{equation*}
\partial _{tt}\rho -\Delta \rho -\Delta (-\Delta +1)^{-1}\rho =0.
\end{equation*}%
whose solutions are given by the operator $e^{\pm p(|\nabla |)t}$ with p
given by \eqref{DefOfp}. Unlike the linearized electron equations studied in 
\cite{Guo}, there is no direct study of the linear decay of such a system.
Only recently \cite{GuoPenWan}, time-decay rate for general dispersive
equations has been carried out in detail with asymptotic conditions near low
frequency $|\xi |=0$ and high frequency $|\xi |=\infty .$ Interestingly, any
phase $p(\xi )$ which is not exactly the phase function of the wave equation 
$p(\xi )=|\xi |$ commands a decay rate better than $\frac{1}{t}$. We are
able to employ this result together with a stationary phase analysis near
the inflection point of $p(\xi )$ to obtain a decay rate of $\frac{1}{t^{4/3}%
}$, which is between the wave and the Klein-Gordon equations. A consequence
of the linear estimates of Section \ref{SecLinEst} is that 
\begin{equation*}
\Vert e^{itp(|\nabla |)}\alpha _{0}\Vert _{X}\lesssim \Vert \alpha _{0}\Vert
_{Y}.
\end{equation*}

The main mathematical difficulty in this paper stems from bootstrapping the
linear decay into a construction of global solutions to the nonlinear
problem. Based on very recent new techniques of harmonic analysis in the
study of dispersive PDE by Germain, Masmoudi and Shatah \cite%
{GerMasSha,GerMasSha2,GerMasSha3}, Gustafson, Nakanishi and Tsai \cite%
{GNT2,GNT}, Shatah \cite{Sha}, we follow a new set-up for normal form
transformation in \cite{GerMasSha,GNT,Sha}. Using that $\nabla \times
v\equiv 0$, we can introduce a pair of complex valued new unknowns: 
\begin{equation}
\alpha _{1}=\rho -\frac{i}{q(|\nabla |)}\mathcal{R}^{-1}v,\text{ and }\alpha
_{2}=\rho +\frac{i}{q(|\nabla |)}\mathcal{R}^{-1}v,  \label{DefOfAlpha}
\end{equation}%
for $q$ defined in \eqref{DefOfp}, where $\mathcal{R}=\nabla |\nabla |^{-1}$
stands for the Riesz transform, and $v=\nabla \psi $, $\mathcal{R}%
^{-1}v\equiv |\nabla |\psi $. After the normal form transformation %
\eqref{alpha}, it suffices for us to control 
\begin{equation}
\begin{split}
& \hat{\alpha}(t)\backsim \int_{\mathbb{R}^{3}}\frac{m(\xi ,\eta )}{\Phi
_{1}(\xi ,\eta )}\hat{\alpha}(\xi -\eta )\hat{\alpha}(\eta )d\eta \\
& +\int_{0}^{t}\int_{\mathbb{R}^{6}}e^{i(t-s)p(|\xi |)}\frac{m(\xi ,\eta
)m(\eta ,\zeta )}{\Phi _{1}(\xi ,\eta )}\hat{\alpha}(\xi -\eta )\hat{\alpha}%
(\eta -\zeta )\hat{\alpha}(\zeta )d\eta d\zeta ds.
\end{split}
\label{normal}
\end{equation}%
where $m$ denotes a generic multiplier given by \eqref{DefOfPhim}. This is
well defined in views of \eqref{CondCurlFree} and 
\begin{equation*}
\Phi _{1}=p(|\xi |)-p(|\xi -\eta |)-p(|\eta |).
\end{equation*}%
In the Klein-Gordon case, the phase is bounded away from zero so there is no
singularity. However, for $\Phi _{1},$ there is a significant zero set when $%
|\xi -\eta ||\eta |=0,$ (see Lemma \ref{EstimPhiGen}) and there is no
\textquotedblleft null form\textquotedblright\ structure to cancel with the
multiplier $m$. We first observe that 
\begin{equation*}
m(\xi ,\eta )m(\eta ,\zeta )\backsim |\xi ||\eta |.
\end{equation*}%
We then make use of such a structure to form a locally bounded multiplier 
\begin{equation*}
\mathcal{M}_{1}=\frac{|\xi ||\xi -\eta ||\eta |}{\Phi _{1}(\xi ,\eta )}%
\lesssim 1.
\end{equation*}%
This process introduces a singular term $\frac{\hat{\alpha}(\xi -\eta )}{%
|\xi -\eta |},$ which will be controlled in a separate fashion by the $%
H^{-1} $ norm in our norm $\Vert \cdot \Vert _{X}.$ We believe that
including this $H^{-1}$ control in the norm should work equally well for
equations with nonlinearity which has perfect spatial derivatives.

Even though $\mathcal{M}_{1}$ is locally bounded, it is very difficult to
employ classical bilinear estimate such as Coifman-Meyer Theorem \cite%
{CoiMey} to control \eqref{normal}. This is due to the anisotropic nature of 
$\mathcal{M}_{1}$ since $|\eta |$ can be very small with respect to $|\xi
-\eta |$. Instead, we make use of a very recent multiplier estimate by
Gustafson, Nakanishi and Tsai \cite{GNT}. It is important to use $L^{10}$%
-norms as a proxy for the $L^{\infty }$-norm for which our degenerate
multipliers are not well-suited (we need an $L^{p}$-norm with $p<12$). The
optimal Sobolev regularity for $\mathcal{M}_{1}\in L_{\xi }^{\infty }(\dot{H}%
_{\eta }^{5/4-\varepsilon })\cap L_{\eta }^{\infty }(\dot{H}_{\xi
}^{5/4-\varepsilon })$ is crucial in applying such an estimate to obtain $%
L^{10}$ decay, and its proof is particularly delicate for small frequencies.
We split the phase space and make a careful interplay between angles and the
lengths of $\xi ,\eta ,\xi -\eta $. We also make use of Littlewood-Paley
decomposition and interpolation to obtain a sharp Sobolev estimate for $%
\mathcal{M}_{1}$. On the other hand, to reduce the requirement of number of
derivatives in our norm $X,$ we also need to show a stronger estimate $%
\mathcal{M}_{1}\in L_{\xi }^{\infty }(\dot{H}_{\eta }^{3/2-\varepsilon
})\cap L_{\eta }^{\infty }(\dot{H}_{\xi }^{3/2-\varepsilon })$ for large
frequencies.

\medskip

This paper is organized as follows: in Section \ref{SecLinEst} we study the
relevant linear dispersive equation. In Section \ref{SecNot}, we introduce
our normal form transformation. In Section \ref{SecH-1}, we get an estimate
on the $L^2$-part of the $X$ norm using the energy method. In Section \ref%
{SecMult} we state and prove the relevant multiplier estimate we need in
order to control our bilinear terms. Finally, in Section \ref{SecL10}, we
control the high integrability part of the norm, and finish the analysis to
obtain global solutions with small initial data in Theorem \ref{MainThm}.

\subsection{Notations and preliminary results}

We work in dimension $n=3$, although we state some results in arbitrary
dimension $n$. We introduce 
\begin{equation*}
\langle a\rangle =\sqrt{1+a^{2}}.
\end{equation*}
We write $A\lesssim B$ to signify that there exists a constant $C$ such that 
$A\le C$. We write $A\simeq B$ if $A\lesssim B\lesssim A$. Our phases and
some multiplier are radial functions, and in some cases we might abuse
notations and write, for a radial function $f$, $f(x)=f(\vert x\vert)$.

\medskip

Our multipliers are estimated using the homogeneous Sobolev norm defined for 
$0\le s<n/2$ by 
\begin{equation*}
\Vert f\Vert_{\dot{H}^s}=\Vert \vert\nabla\vert^sf\Vert_{L^2}
\end{equation*}
where $\vert\nabla\vert$ is defined by $\mathcal{F}(\vert\nabla\vert
f)(\xi)=\vert\xi\vert\hat{f}(\xi)$.

\medskip

We will also use the Littlewood-Paley multipliers $P_{N}$ defined for dyadic
numbers $N\in 2^{\mathbb{Z}}$ by 
\begin{equation}
P_{N}g=\mathcal{F}_{\xi }^{-1}\varphi (\frac{\xi }{N})\mathcal{F}_{\xi }g
\label{DefLitPalOp}
\end{equation}
where $\varphi \in C_{c}^{\infty }(\mathbb{R}^{n})$ is such that 
\begin{equation*}
\forall \xi \neq 0,\hskip.2cm\sum_{N\in 2^{\mathbb{Z}}}\varphi (\frac{\xi }{%
N })=1,
\end{equation*}
and for later use, we also introduce a function $\chi\in C^\infty_c(\mathbb{R%
}^n)$ such that $\chi\varphi=\varphi$. An important estimate on these
Littlewood-Paley multipliers is the Bernstein inequality: 
\begin{equation}  \label{BernSobProp}
\begin{split}
&\hskip.8cm\Vert \vert\nabla\vert^{\pm s}P_Nf\Vert_{L^p}\lesssim_s N^{\pm
s}\Vert P_Nf\Vert_{L^p}\lesssim_s N^{\pm s}\Vert f\Vert_{L^p} \\
\end{split}%
\end{equation}
for all $s\ge 0$, and all $1\le p\le\infty$, independently of $f$, $N$, and $%
p$, where $\vert\nabla\vert^s$ is the classical fractional differentiation
operator.

We will also need the two following product estimates:

\begin{lemma}
\label{LemProdEnergy} Let $\tau$ be a multi-index of length $\vert\tau\vert$
and $\gamma<\tau$, then for all $u\in C^\infty_c(\mathbb{R}^n)$ and all $%
\delta>0$, there holds that 
\begin{equation*}
\Vert D^{\tau-\gamma}u D^\gamma\partial_ju\Vert_{L^2}\lesssim_\delta \Vert
u\Vert_{W^{1+\delta,\infty}}\Vert u\Vert_{H^{\vert\tau\vert}}\lesssim \Vert
u\Vert_{W^{2,10}}\Vert u\Vert_{H^{\vert\tau\vert}}
\end{equation*}
\end{lemma}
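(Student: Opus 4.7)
The plan is to reduce the product estimate to a Hölder inequality after distributing derivatives via Gagliardo--Nirenberg interpolation. Set $k_1=|\tau-\gamma|$ and $k_2=|\gamma|+1$. Because $\gamma<\tau$ we have $k_1\ge1$ and $|\gamma|\le|\tau|-1$, so also $k_2\le|\tau|$; in particular both $k_1,k_2$ lie in $[1,|\tau|]$ and satisfy $k_1+k_2=|\tau|+1$.

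First I would dispense with the boundary cases $k_1=1$ or $k_2=1$: in either case one factor is in $L^\infty$ via $\|u\|_{W^{1,\infty}}\le\|u\|_{W^{1+\delta,\infty}}$ and the other lies in $L^2$ via $\|u\|_{H^{|\tau|}}$, so Hölder is immediate. In the generic case $k_1,k_2\ge 2$, both indices are strictly between $1$ and $|\tau|$, so I would interpolate between $\|\nabla u\|_{L^\infty}$ (controlled by $\|u\|_{W^{1+\delta,\infty}}$) and $\|\nabla^{|\tau|}u\|_{L^2}$ (controlled by $\|u\|_{H^{|\tau|}}$). The classical Gagliardo--Nirenberg inequality yields
\begin{equation*}
\|\nabla^{k_i}u\|_{L^{p_i}}\lesssim \|\nabla u\|_{L^\infty}^{1-\alpha_i}\|\nabla^{|\tau|}u\|_{L^2}^{\alpha_i},\qquad \alpha_i=\frac{k_i-1}{|\tau|-1},\qquad \frac{1}{p_i}=\frac{\alpha_i}{2}.
\end{equation*}
The key arithmetic check is that $\alpha_1+\alpha_2=\frac{(k_1-1)+(k_2-1)}{|\tau|-1}=1$, hence $\frac{1}{p_1}+\frac{1}{p_2}=\frac12$, so Hölder applies and the exponents of the two norms on the right combine to give exactly $\|\nabla u\|_{L^\infty}\|\nabla^{|\tau|}u\|_{L^2}$, which is the desired first inequality (with $\delta=0$ even; the $\delta$ gives margin and matches the second estimate).

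The second inequality $\|u\|_{W^{1+\delta,\infty}}\lesssim \|u\|_{W^{2,10}}$ is a pure Sobolev embedding in dimension $n=3$: since $W^{1,10}(\mathbb{R}^3)\hookrightarrow C^{0,7/10}(\mathbb{R}^3)$, we get $\nabla u\in C^{0,7/10}$ whenever $u\in W^{2,10}$, which yields $u\in W^{1+\delta,\infty}$ for any $\delta<7/10$. (For larger $\delta$ the first inequality still holds by the same argument but the second would fail; the statement is of course used with such a small $\delta$.)

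I expect no serious obstacle: the only subtlety is verifying the scaling identity $\alpha_1+\alpha_2=1$ (which is what makes the full gain of a derivative on $\|u\|_{H^{|\tau|}}$, rather than $\|u\|_{H^{|\tau|+1}}$, possible) and making sure the Gagliardo--Nirenberg endpoints $k_i\in\{2,\dots,|\tau|-1\}$ stay in the admissible range, which is automatic from $\gamma<\tau$. The case $|\tau|=1$ is degenerate ($\gamma=0$, $k_1=k_2=1$) but falls under the trivial $L^\infty\cdot L^2$ Hölder argument above.
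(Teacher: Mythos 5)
Your proof is correct, but it takes a genuinely different route from the paper's. The paper proves the estimate by a paradifferential (Littlewood--Paley) decomposition: it splits the product into a resonant piece $R$ (comparable dyadic frequencies) bounded via Bernstein and Cauchy--Schwarz, and two off-diagonal pieces $T_1,T_2$ bounded by exploiting Fourier-support orthogonality. The $\delta>0$ is essential there, because the estimate $\Vert P_Nu\Vert_{L^\infty}\lesssim\min(1,N^{-1-\delta})\Vert u\Vert_{W^{1+\delta,\infty}}$ is what makes the $\ell^2$ sum over high dyadic blocks converge. Your argument instead distributes the derivatives by Gagliardo--Nirenberg applied to $v=\nabla u$, interpolating between $\Vert v\Vert_{L^\infty}$ and $\Vert\nabla^{|\tau|-1}v\Vert_{L^2}$; the arithmetic $\alpha_1+\alpha_2=1$ indeed balances both the H\"older exponents and the interpolation weights, and the exceptional cases of Gagliardo--Nirenberg are avoided since $j=k_i-1\geq 1$ and $m-j-n/r=|\tau|-k_i-3/2$ is never an integer. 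Your version is in fact a hair stronger — it yields the estimate with $\Vert u\Vert_{W^{1,\infty}}$ in place of $\Vert u\Vert_{W^{1+\delta,\infty}}$ — and is more elementary, while the paper's paraproduct decomposition is the more flexible tool (it generalizes more readily to fractional regularity). Both handle the trailing $\Vert u\Vert_{W^{1+\delta,\infty}}\lesssim\Vert u\Vert_{W^{2,10}}$ step the same way, by the Sobolev embedding $W^{1,10}(\mathbb{R}^3)\hookrightarrow C^{0,7/10}$.
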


\begin{proof}
We first note that without loss of generality, we may assume that $|\gamma
|+1,|\tau |-|\gamma |\geq 2$, otherwise H\"{o}lder's inequality gives the
result. We use a simple paradifferential decomposition, in other words, we
write 
\begin{equation*}
\begin{split}
D^{\tau -\gamma }uD^{\gamma }\partial _{j}u& =\left( \sum_{M\sim
N}+\sum_{M/N\leq 1/16}+\sum_{N/M\leq 1/16}\right) \left( P_{M}D^{\tau
-\gamma }u\right) \left( P_{N}D^{\gamma }\partial _{j}u\right) \\
& =R+T_{1}+T_{2}
\end{split}%
\end{equation*}%
where $M$ and $N$ are dyadic numbers. We first estimate $R$ as follows using
Bernstein properties and in particular the fact that 
\begin{equation*}
\Vert P_{N}u\Vert _{L^{\infty }}\lesssim \min (1,N^{-1-\delta })\Vert u\Vert
_{W^{1+\delta ,\infty }}
\end{equation*}%
we get that with the Cauchy Schwartz inequality that 
\begin{equation*}
\begin{split}
\Vert R\Vert _{L^{2}}& \lesssim \sum_{M\sim N}\Vert P_{M}D^{\tau -\gamma
}uP_{N}D^{\gamma }\partial _{j}u\Vert _{L^{2}} \\
& \lesssim \sum_{M\sim N}M^{|\tau |-|\gamma |}\Vert P_{M}u\Vert
_{L^{2}}M^{|\gamma |+1}\Vert P_{N}u\Vert _{L^{\infty }} \\
& \lesssim \left( \sum_{M}M^{2|\tau |}\Vert P_{M}u\Vert _{L^{2}}^{2}\right)
^{\frac{1}{2}}\left( \sum_{M}M^{2}\Vert P_{M}u\Vert _{L^{\infty
}}^{2}\right) ^{\frac{1}{2}} \\
& \lesssim \Vert u\Vert _{H^{|\tau |}}\Vert u\Vert _{W^{1+\delta ,\infty }}.
\end{split}%
\end{equation*}%
Independently, we estimate $T_{1}$ as follows using that if $16M_{i}\leq
N_{i}$, $i=1,2$ then 
\begin{equation*}
\langle P_{M_{1}}fP_{N_{1}}g,P_{M_{2}}hP_{N_{2}}k\rangle _{L^{2}\times
L^{2}}=0
\end{equation*}%
unless $N_{1}\leq 4N_{2}\leq 16N_{1}$ (intersection of the Fourier support),
and letting $f=D^{\tau -\gamma }u$, $g=D^{\gamma }\partial _{j}u$, we get 
\begin{equation*}
\begin{split}
\Vert T_{1}\Vert _{L^{2}}& \lesssim \sum_{N_{1}\sim N_{2},16M_{i}\leq
N_{i}}\langle P_{N_{1}}fP_{M_{1}}g,P_{N_{2}}fP_{M_{2}}g\rangle _{L^{2}\times
L^{2}} \\
& \lesssim \Vert u\Vert _{W^{1,\infty }}^{2}\sum_{N_{1}\sim
N_{2},16M_{i}\leq N_{i}}\Vert P_{N_{1}}f\Vert _{L^{2}}\Vert P_{N_{2}}f\Vert
_{L^{2}}(M_{1}M_{2})^{|\gamma |} \\
& \lesssim \Vert u\Vert _{W^{1,\infty }}^{2}\sum_{N_{1}\sim
N_{2}}N_{1}^{|\gamma |}\Vert P_{N_{1}}f\Vert _{L^{2}}N_{2}^{|\gamma |}\Vert
P_{N_{2}}f\Vert _{L^{2}} \\
& \lesssim \Vert u\Vert _{W^{1,\infty }}^{2}\Vert u\Vert _{H^{|\tau |}}^{2}
\end{split}%
\end{equation*}%
and $T_{2}$ is treated exactly in the same way.
\end{proof}

We also need the following ``tame'' product estimate (see e.g. Tao \cite%
{TaoBook})

\begin{lemma}
For $1<p<\infty$, $s\ge 0$, 
\begin{equation}  \label{tameEst}
\Vert uv\Vert_{W^{s,p}}\lesssim \Vert u\Vert_{L^\infty}\Vert
v\Vert_{W^{s,p}}+\Vert u\Vert_{W^{s,p}}\Vert v\Vert_{L^\infty}
\end{equation}
for $u$ and $v$ in $L^\infty\cap W^{s,p}$.
\end{lemma}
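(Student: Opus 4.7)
The plan is to prove the estimate by combining Bony's paraproduct decomposition with the Littlewood--Paley characterization of $W^{s,p}$ that is valid because $1<p<\infty$. Namely, one has the norm equivalence
$$\Vert f\Vert_{W^{s,p}}\simeq \Vert P_{\le 1}f\Vert_{L^p}+\Bigl\Vert\Bigl(\sum_{N>1}N^{2s}|P_Nf|^2\Bigr)^{1/2}\Bigr\Vert_{L^p},$$
which comes from the $L^p$-boundedness of the Littlewood--Paley square function and agrees with the classical Sobolev norm when $s$ is an integer. The low-frequency part $P_{\le 1}(uv)$ is dispatched by H\"older: $\Vert P_{\le 1}(uv)\Vert_{L^p}\lesssim \Vert u\Vert_{L^\infty}\Vert v\Vert_{L^p}\le \Vert u\Vert_{L^\infty}\Vert v\Vert_{W^{s,p}}$. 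The case $s=0$ is again just H\"older, so from now on I assume $s>0$ and focus on the high-frequency square function.

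Decompose $uv=\Pi(u,v)+\Pi(v,u)+R(u,v)$ with
$$\Pi(u,v)=\sum_N S_{N/16}u\cdot P_Nv,\qquad R(u,v)=\sum_{N\sim M}P_Nu\cdot P_Mv,$$
where $S_{N/16}$ denotes the frequency truncation to $\{|\xi|\lesssim N/16\}$. For $\Pi(u,v)$, Fourier-support considerations force $P_K\Pi(u,v)$ to pick up only contributions with $N\sim K$, and the uniform bound $\Vert S_{N/16}u\Vert_{L^\infty}\lesssim \Vert u\Vert_{L^\infty}$ combined with the vector-valued square function yields
$$\Vert \Pi(u,v)\Vert_{W^{s,p}}\lesssim \Bigl\Vert\Bigl(\sum_K K^{2s}|S_{K/16}u|^2|\tilde P_Kv|^2\Bigr)^{1/2}\Bigr\Vert_{L^p}\lesssim \Vert u\Vert_{L^\infty}\Vert v\Vert_{W^{s,p}}.$$
By symmetry, $\Vert \Pi(v,u)\Vert_{W^{s,p}}\lesssim \Vert v\Vert_{L^\infty}\Vert u\Vert_{W^{s,p}}$.

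The remainder $R(u,v)$ is the delicate piece, since the high-high interactions with $N\sim M$ can produce arbitrarily low output frequencies $K\lesssim N$. Writing $P_KR(u,v)=\sum_{N\gtrsim K}P_K(P_Nu\cdot \tilde P_Nv)$ and using the pointwise bounds $|P_Nu\cdot \tilde P_Nv|\le \Vert v\Vert_{L^\infty}|P_Nu|$ and $|P_Kf|\lesssim \mathcal Mf$ (Hardy--Littlewood maximal), I obtain
$$K^s|P_KR(u,v)|(x)\lesssim \Vert v\Vert_{L^\infty}\sum_{N\gtrsim K}(K/N)^s\,\mathcal M\bigl(N^s|P_Nu|\bigr)(x).$$
Since $s>0$, the kernel $(K/N)^s\mathbf{1}_{N\gtrsim K}$ has finite Schur norms in each variable, so pointwise Schur gives
$$\Bigl(\sum_K K^{2s}|P_KR(u,v)|^2\Bigr)^{1/2}(x)\lesssim \Vert v\Vert_{L^\infty}\Bigl(\sum_N \bigl|\mathcal M(N^sP_Nu)\bigr|^2\Bigr)^{1/2}(x).$$
Taking $L^p$-norms and invoking the Fefferman--Stein vector-valued maximal inequality (which is where the restriction $1<p<\infty$ is essential) followed by the Littlewood--Paley characterization gives $\Vert R(u,v)\Vert_{W^{s,p}}\lesssim \Vert v\Vert_{L^\infty}\Vert u\Vert_{W^{s,p}}$, and summing the three contributions closes the estimate. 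The main obstacle is precisely this remainder term: the absence of Fourier separation between the two factors forces the Schur-plus-Fefferman--Stein detour, and it is here that both $s>0$ and $1<p<\infty$ are used critically; the paraproducts $\Pi$ and the low-frequency part are comparatively elementary.
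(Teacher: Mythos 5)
The paper does not prove this lemma at all---it simply cites Tao's book---and your argument is the standard Littlewood--Paley/paraproduct proof of the tame estimate found in that reference; it is correct, including the separate treatment of $s=0$, the Schur summation over $(K/N)^s\mathbf{1}_{N\gtrsim K}$ (which is where $s>0$ enters), and the Fefferman--Stein inequality (which is where $1<p<\infty$ enters). The only step to write out more carefully is the paraproduct bound: since $P_K\bigl(S_{K/16}u\,\tilde P_K v\bigr)\neq S_{K/16}u\,\tilde P_K v$ pointwise, you should insert $|P_Kf|\lesssim \mathcal{M}f$ (or invoke the standard lemma $\Vert\sum_N f_N\Vert_{W^{s,p}}\lesssim\Vert(\sum_N N^{2s}|f_N|^2)^{1/2}\Vert_{L^p}$ for $f_N$ with Fourier support in $\{|\xi|\sim N\}$) before passing to the displayed square function, exactly as you already do for the remainder term.
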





\section{Linear Decay}

\label{SecLinEst}

In this section, we investigate the decay of linear solutions of the
linearized equation 
\begin{equation}  \label{LinEqt1}
\partial_{tt}\rho-\Delta\rho-\Delta(-\Delta+1)^{-1}\rho=0.
\end{equation}
These solutions can be expressed in terms of the initial data and of one
``half-wave'' operator 
\begin{equation*}
T_t=e^{itp(\vert\nabla\vert)}
\end{equation*}
for $p$ defined in \eqref{DefOfp} that we now study. Our main result in this
section is the following

\begin{proposition}
\label{PropLinDecay} For any $\delta>0$, for any $f\in W^{\frac{5}{2}%
+\delta,1}$, there holds that 
\begin{equation}  \label{LinDecay}
\Vert e^{itp(\vert\nabla\vert)}f\Vert_{L^\infty}\lesssim_\delta \left(\vert
t\vert^{-\frac{4}{3}}+\vert t\vert^{-\frac{3}{2}}\right)\Vert f\Vert_{W^{%
\frac{5}{2}+\delta,1}}
\end{equation}
for all $t\ne 0$. Besides, we have the $L^{10}$-decay estimate 
\begin{equation}  \label{L10Decay}
\Vert e^{itp(\vert\nabla\vert)}f\Vert_{L^{10}}\lesssim (1+\vert t\vert)^{-%
\frac{16}{15}}\Vert f\Vert_{W^{\frac{12}{5},\frac{10}{9}}}
\end{equation}
uniformly in $\varepsilon, t, f$.
\end{proposition}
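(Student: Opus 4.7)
The plan is to reduce the $L^{10}$ estimate to the $L^{\infty}$ one by Riesz--Thorin interpolation, and to obtain the $L^{\infty}$ estimate by Littlewood--Paley decomposition combined with a one-dimensional stationary-phase analysis that reflects the inflection-point geometry of the phase $p$.

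First I would interpolate \eqref{LinDecay} with the $L^{2}$ isometry $\Vert e^{itp(\vert\nabla\vert)}f\Vert_{L^{2}}=\Vert f\Vert_{L^{2}}$. Taking $\theta=4/5$ produces $1/p_{\theta}=(1+\theta)/2=9/10$ and $1/q_{\theta}=(1-\theta)/2=1/10$, with Sobolev index $(5/2+\delta)(4/5)=2+4\delta/5$, which matches $12/5$ for $\delta=1/2$, and decay $(\vert t\vert^{-4/3})^{4/5}=\vert t\vert^{-16/15}$. The $\vert t\vert^{-3/2}$ contribution is subleading at large $\vert t\vert$, and at small $\vert t\vert$ the critical Sobolev embedding $W^{12/5,10/9}\hookrightarrow L^{10}$ (exactly at the critical exponent $3(9/10)-3/10=12/5$) delivers the $(1+\vert t\vert)^{-16/15}$ form. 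Hence it suffices to prove \eqref{LinDecay}.

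For this, I would write $f=\sum_N P_N f$ and pass to polar coordinates in the kernel of $e^{itp(\vert\nabla\vert)}P_N$; integrating over $S^2$ yields
\[
K_{N,t}(x)=\frac{4\pi}{\vert x\vert}\int_{0}^{\infty}r\sin(r\vert x\vert)\,e^{itp(r)}\varphi(r/N)\,dr,
\]
a one-dimensional oscillatory integral with phase $\Phi_\pm(r)=tp(r)\pm r\vert x\vert$ and amplitude $r\varphi(r/N)$. The Taylor expansions $p(r)=\sqrt{2}\,r-(\sqrt{2}/4)r^{3}+O(r^{5})$ as $r\to 0$ and $p(r)=r+1/(2r)+O(r^{-3})$ as $r\to\infty$ show that $p'>0$ with $p'(0)=\sqrt{2}$, $p'(\infty)=1$, while $p''$ is negative near $0$ and positive at infinity, so $p$ has a unique interior inflection point $r_{0}\in(0,\infty)$ with $p''(r_{0})=0$ and $p'''(r_{0})\neq 0$. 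Applying Van der Corput, on regions where $\vert p''\vert\gtrsim 1$ (i.e.\ away from $r_{0}$) I expect $t^{-1/2}$ radially, hence $\vert t\vert^{-3/2}$ for $K_{N,t}$ after the $1/\vert x\vert$ factor; in a $t^{-1/3}$-neighborhood of $r_{0}$ the bound $\vert p'''\vert\gtrsim 1$ gives $t^{-1/3}$ radially, hence $\vert t\vert^{-4/3}$.

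Finally I would sum the Littlewood--Paley pieces using $\Vert P_N f\Vert_{L^{1}}\lesssim\Vert f\Vert_{L^{1}}$ together with Bernstein: at high frequencies the amplitude and volume factors in the kernel total $\lesssim N^{5/2+\delta}$, absorbed by the $W^{5/2+\delta,1}$ norm, and for small $\vert x\vert$ the trivial bound $\Vert K_{N,t}\Vert_{L^{\infty}}\lesssim N^{3}$ is likewise absorbed. The general dispersive framework of \cite{GuoPenWan} packages precisely this once the phase structure is identified. The main obstacle is the \emph{degenerate} stationary phase at the interior inflection point $r_{0}$: one must choose a scale-dependent localization around $r_{0}$ compatible with the Littlewood--Paley decomposition and verify $p'''(r_{0})\neq 0$ to secure the optimal $t^{-1/3}$ radial rate; the contribution away from $r_{0}$ and the bookkeeping in $N$ are then routine.
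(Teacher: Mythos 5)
Your overall strategy mirrors the paper's: establish the $L^\infty$ decay by a Van der Corput/stationary-phase analysis organized around the unique inflection point $r_0=\sqrt{1+\sqrt 7}$ of $p$ (where $p''(r_0)=0$, $p'''(r_0)\neq 0$ gives the intermediate $t^{-1/3}$ one-dimensional rate, hence $t^{-4/3}$ in $\mathbb{R}^3$), invoke \cite{GuoPenWan} where $p''\neq 0$ to get $t^{-3/2}$ at the cost of derivatives, and obtain the $L^{10}$ bound by interpolating with the $L^2$ isometry. This is precisely the paper's decomposition into $\psi_{r_0},\psi_0,\psi_\infty$ (Lemmas \ref{LinearDecayMedFreqLemma}--\ref{LinearDecayHighFreqLemma}) followed by Riesz--Thorin, and your Littlewood--Paley bookkeeping and Sobolev-endpoint observations for small time are consistent with the paper's treatment.

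There is, however, a genuine gap in how you convert the one-dimensional $t^{-1/3}$ (respectively $t^{-1/2}$) Van der Corput bound into the claimed $t^{-4/3}$ (respectively $t^{-3/2}$) for the kernel $K_{N,t}$. You invoke \emph{``the $1/|x|$ factor''} to supply the extra $t^{-1}$, but $1/|x|\lesssim 1/t$ only when $|x|\gtrsim t$. This holds exactly in the stationary-phase regime: the stationary point of $\Phi_-(r)=tp(r)-r|x|$ is at $p'(r_*)=|x|/t$, and since $p'$ takes values in a fixed compact subinterval of $(0,\infty)$ this forces $|x|\simeq t$. In the complementary region $|x|\lesssim p'(r_0)t$ there is no stationary point near the support and the correct tool is \emph{nonstationary} phase, which yields $|t|^{-n/2}$ or better --- this is Case~2 of the paper's Lemma~\ref{LinearDecayMedFreqLemma}, and you omit it. Your remark that ``for small $|x|$ the trivial bound $\|K_{N,t}\|_{L^\infty}\lesssim N^3$ is likewise absorbed'' conflates the frequency sum with the time decay: that bound carries no decay in $t$ and cannot be absorbed into a conclusion that must hold uniformly in $x$; the time decay in that range must come from nonstationary phase. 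Similarly, for the $\Phi_+(r)=tp(r)+r|x|$ branch you subsume into ``$\pm$'', one has $\Phi_+'(r)=tp'(r)+|x|>0$ everywhere, so nonstationary phase applies globally — a separate (easier) case the paper treats explicitly via the $Z$/$\bar Z$ decomposition of the Bessel function. Once you insert the dichotomy $|x|\gtrless \tfrac14 p'(r_0)t$ and handle the $\Phi_+$ branch, your plan closes and reproduces the paper's argument.
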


More precise estimates are derived below. The rest of the section in devoted
to a proof of \eqref{LinDecay} and \eqref{L10Decay}.

\medskip

For most of this section, we study the dispersive features of our operator
in general dimension $n$. Proposition \ref{PropLinDecay} is a consequence of
the particular case $n=3$. Direct computations give that 
\begin{equation}  \label{Estimp}
\begin{split}
p^{\prime }(r)& =\frac{1}{\sqrt{(1+r^{2})(2+r^{2})}}\left( 1+r^{2}+\frac{1}{
1+r^{2}}\right) , \\
p^{\prime \prime }(r)& =\frac{r\left( r^{4}-2r^{2}-6\right) }{(1+r^{2})\left[
(1+r^{2})(2+r^{2})\right] ^{\frac{3}{2}}}\hskip.1cm\hbox{and} \\
p^{\prime \prime \prime }(r)& =\frac{5r^{4}-6r^{2}-6}{(1+r^{2})^{\frac{5}{2}
}(2+r^{2})^{\frac{3}{2}}}-\frac{r\left( r^{4}-2r^{2}-6\right) (11r^{3}+16r)}{
(1+r^{2})^{\frac{7}{2}}(2+r^{2})^{\frac{5}{2}}}.
\end{split}%
\end{equation}
We note that $p^{\prime \prime }(r)$ has one unique positive root at 
\begin{equation}
r=r_{0}=\sqrt{1+\sqrt{7}}.  \label{r0}
\end{equation}

In order to state our first result, we define a frequency localization
function around the critical point $r_{0}$. Let $\psi _{r_{0}}\in C^{\infty
}(\mathbb{R})$ be a smooth function such that $0\leq \psi \leq 1$, $\psi
_{r_{0}}(r_{0}+r)=1$ when $|r|\leq \varepsilon $ and $\psi
_{r_{0}}(r_{0}+r)=0$ when $|r|\geq 2\varepsilon $.

\begin{lemma}
\label{LinearDecayMedFreqLemma} For all time $t\neq 0$, and all $f\in L^{1}$%
, there holds that 
\begin{equation}
\Vert e^{itp(|\nabla |)}\psi _{r_{0}}(|\nabla |)f\Vert _{L^{\infty
}}\lesssim _{n,\varepsilon }(1+|t|)^{-\frac{n-1}{2}-\frac{1}{3}}\Vert f\Vert
_{L^{1}}.  \label{LinearDecay1}
\end{equation}
\end{lemma}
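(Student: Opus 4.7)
My plan is to reduce the kernel of $e^{itp(|\nabla|)}\psi_{r_0}(|\nabla|)$ to a one-dimensional oscillatory integral via the radial Fourier representation and then apply Van der Corput's lemma, exploiting the fact that $r_0$ is a \emph{non-degenerate} inflection point of $p$, in the sense that $p''(r_0) = 0$ but $p'''(r_0) \ne 0$. For $|t| \le 1$ the claim is trivial, as the kernel is bounded by $\|\psi_{r_0}(|\xi|)\|_{L^1} \lesssim 1$, so I may assume $|t| \ge 1$. Writing
$$K_t(x) = \int_{\mathbb{R}^n} e^{ix\cdot\xi + itp(|\xi|)}\psi_{r_0}(|\xi|)\,d\xi = c_n |x|^{-\frac{n-2}{2}} \int_0^\infty e^{itp(r)}\psi_{r_0}(r)\,r^{n/2}\,J_{(n-2)/2}(r|x|)\,dr,$$
the regime $|x| \lesssim 1$ is dispatched by integration by parts in $\xi$, since $|\nabla_\xi(x\cdot\xi + tp(|\xi|))| \gtrsim |t|$ on $\operatorname{supp}\psi_{r_0}(|\xi|)$ when $|x|\ll|t|$, which yields arbitrary polynomial decay. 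For $|x| \gtrsim 1$, substituting the large-argument asymptotic $J_\nu(s) = s^{-1/2}(c_+ e^{is} + c_- e^{-is}) + O(s^{-3/2})$ reduces the problem (modulo lower-order terms handled identically) to bounding the two scalar oscillatory integrals
$$I_\pm(t,x) = |x|^{-\frac{n-1}{2}} \int_0^\infty e^{i\phi_\pm(r)}\, b_\pm(r,|x|)\,dr,\qquad \phi_\pm(r) = tp(r) \pm r|x|,$$
with $b_\pm$ smooth, compactly supported in $r$ near $r_0$, and of uniformly bounded variation in $r$.

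The key input is the explicit computation of $p'''(r_0)$. From \eqref{Estimp}, $r_0^2 = 1+\sqrt{7}$ satisfies $r_0^4 - 2r_0^2 - 6 = 0$, so $p''(r_0) = 0$ and the second summand in $p'''$ also vanishes at $r_0$, leaving
$$p'''(r_0) = \frac{5r_0^4 - 6r_0^2 - 6}{(1+r_0^2)^{5/2}(2+r_0^2)^{3/2}} = \frac{28+4\sqrt{7}}{(2+\sqrt{7})^{5/2}(3+\sqrt{7})^{3/2}} > 0.$$
By continuity, $|p'''(r)| \ge c > 0$ on $\operatorname{supp}\psi_{r_0}$, provided $\varepsilon$ is chosen sufficiently small. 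I then split based on $|x|/|t|$: when $|x|/|t|$ is bounded away from $|p'(r_0)|$, either $\phi_\pm'$ has no zero on $\operatorname{supp}\psi_{r_0}$ (giving rapid decay by iterated integration by parts) or $|\phi_\pm''(r)| = |t||p''(r)| \gtrsim |t|$ throughout (yielding the $|t|^{-1/2}$ rate by Van der Corput at order $2$); when $|x|/|t|$ lies in a small neighborhood of $|p'(r_0)|$, I have instead $|\phi_\pm'''(r)| = |t||p'''(r)| \gtrsim |t|$ uniformly on $\operatorname{supp}\psi_{r_0}$, so Van der Corput at order $3$ yields $|I_\pm(t,x)| \lesssim |x|^{-(n-1)/2}|t|^{-1/3}$.

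Combining the two regimes and noting that only $|x| \sim |p'(r_0)||t|\sim|t|$ contributes more than rapid decay, I obtain $\|K_t\|_{L^\infty} \lesssim |t|^{-(n-1)/2-1/3}$ for $|t| \ge 1$, which together with the trivial bound for $|t|\le 1$ proves \eqref{LinearDecay1}. The main obstacle is the degenerate stationary phase estimate when $|x|/|t| \approx |p'(r_0)|$, where the classical $|t|^{-1/2}$ rate fails and one must rely on the order-$3$ vanishing governed by the explicit sign of $p'''(r_0)$; a secondary bookkeeping point is verifying that the amplitudes $b_\pm$ retain uniformly bounded variation in $r$ with constants independent of $x$, which follows because the Bessel remainder is smooth with $r$-derivatives controlled by $(r|x|)^{-1} \lesssim 1$ in the relevant regime.
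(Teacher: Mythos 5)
Your proposal is essentially correct and follows the same line as the paper: reduce the kernel to a one-dimensional oscillatory integral via the radial Fourier representation, use the Bessel asymptotics to extract the factor $|x|^{-(n-1)/2}$, and apply Van der Corput at order $3$ near the inflection point $r_0$ where $p''(r_0)=0$ but $p'''(r_0)\ne 0$, with nonstationary phase handling the regime where $|x|/|t|$ stays away from $p'(r_0)$. Your computation of $p'''(r_0)$ agrees with the paper's (both reduce to $4r_0^2(r_0^2-1)$ over the same denominator using $r_0^4-2r_0^2-6=0$), and your observation that the Bessel amplitude's $r$-derivatives are bounded uniformly in $|x|$ because each derivative costs $|x|$ but gains $(r|x|)^{-1}$ is exactly the mechanism the paper encodes via the function $Z$ with $|\partial^kZ(s)|\lesssim(1+s)^{-(n-1)/2-k}$.

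One small inefficiency in your case analysis: the alternative "$|\phi_\pm''(r)|\gtrsim|t|$ throughout $\operatorname{supp}\psi_{r_0}$" cannot occur, since $p''(r_0)=0$ lies in the support; you would need a further subdivision and a time-dependent threshold to make the order-$2$ bound match $|t|^{-1/3}$, which is the classical Airy-type analysis and more work than needed. The paper avoids this entirely: once $\varepsilon$ is chosen small enough that $|p'''|\gtrsim 1$ on $\operatorname{supp}\psi_{r_0}$, it simply applies Van der Corput at order $3$ on the whole region $|x|\gtrsim p'(r_0)|t|$ — with or without an actual stationary point — and lets the Bessel decay $|x|^{-(n-1)/2}\lesssim |t|^{-(n-1)/2}$ absorb the rest. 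That is the streamlined version of the argument you sketched, and your degenerate stationary-phase case is correct as stated.
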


\begin{proof}
We note that 
\begin{eqnarray*}
\Vert e^{itp(|\nabla |)}\psi _{r_{0}}(|\nabla |)f(x)\Vert_{\infty } &=&||%
\mathcal{F}^{-1}\{e^{itp(|\xi |)}\psi _{r_{0}}(|\xi |)\hat{f}(\xi
)\}||_{\infty } \\
&=&||\mathcal{F}^{-1}\{e^{itp(|\xi |)}\psi _{r_{0}}(|\xi |)\}\ast
f(x)||_{\infty } \\
&\leq &||\mathcal{F}^{-1}\{e^{itp(|\xi |)}\psi _{r_{0}}(|\xi |)\}||_{\infty
}||f||_{L^{1}}.
\end{eqnarray*}
Since $\psi _{r_{0}}$ is chosen to be spherically symmetric, it is
well-known that 
\begin{eqnarray*}
\mathcal{F}^{-1}\{e^{itp(|\xi |)}\psi _{r_{0}}(|\xi |)\}(x) &=&2\pi
\int_{0}^{\infty }e^{itp(r)}\psi_{r_0} (r)\tilde{J}_{\frac{n-2}{2}%
}(r|x|)r^{n-1}dr \\
&=&2\pi \int_{0}^{\infty }e^{itp(r)}\psi_{r_0} (r)\tilde{J}_{\frac{n-2}{2}
}(r|x|)r^{n-1}dr
\end{eqnarray*}
where for all $n\geq 2$, 
\begin{equation}
\tilde{J}_{\frac{n-2}{2}}(s)\equiv s^{-\frac{n-2}{2}}J_{\frac{n-2}{2}}(s)= %
\hbox{Re}\left( e^{is}Z(s)\right) =e^{is}Z(s)-e^{-is}\bar{Z}(s).
\label{Bessel1}
\end{equation}
Here $Z(s)$ is a smooth function satisfying (cf John \cite{Joh}) that for
all $k\geq 0$ and all $s$ 
\begin{equation}
|\partial ^{k}Z(s)|\lesssim _{n,k}(1+s)^{-\frac{n-1}{2}-k}.  \label{Bessel2}
\end{equation}
We first estimate $e^{-ir|x|}\bar{Z}(r|x|)$. Changing variable $r\rightarrow
r+r_{0},$ and letting $\Psi (v)=(r_{0}+r)^{n-1}\psi _{r_{0}}(r_{0}+r)$, we
get 
\begin{equation*}
\begin{split}
\tilde{I}_{1}& =\int_{0}^{\infty }e^{i\left( tp(r)-r|x|\right) }\psi (r)%
\overline{Z}(r|x|)r^{n-1}dr \\
& =\int_{-2\varepsilon }^{2\varepsilon }e^{i\left( tp(r)-r|x|\right) }\Psi
(r)\overline{Z}((r_{0}+r)|x|)dr
\end{split}%
\end{equation*}
and a first crude estimate allow us to conclude that 
\begin{equation}
|\tilde{I}_{1}|\lesssim _{n,k,\varepsilon }1  \label{EstimI1SmallTimes}
\end{equation}
which takes care of the small times $|t|\lesssim 1$. Thus, we now assume
that $t>1$. We consider the phase 
\begin{equation*}
\Omega (r,|x|,t)=\left(p(r)-r\frac{\vert x\vert}{t}\right) .
\end{equation*}
By \eqref{Estimp}, we directly compute that $p^{\prime }(r_{0})\neq 0$ and 
\begin{equation*}
p^{\prime \prime \prime }(r_{0})=\frac{4r_{0}^{4}-4r_{0}^{2}}{(1+r_{0}^{2})^{%
\frac{5}{2}}(2+r_{0}^{2})^{\frac{3}{2}}}\neq 0.
\end{equation*}

\textbf{Case 1} Suppose that $|x|\geq \frac{1}{4}p^{\prime }(r_{0})t$. Then,
since $\vert r\vert\le 2\varepsilon,$ 
\begin{equation*}
|\partial _{r}^{3}\Omega (r,|x|,t)|=|p^{\prime \prime \prime }(r)|>\frac{1}{%
2 }|p^{\prime \prime \prime }(r_{0})|,
\end{equation*}
if $\varepsilon>0$ is chosen sufficiently small, and using \eqref{Bessel2},
by the Van der Corput lemma (see e.g. Stein \cite{Ste}), we get that 
\begin{equation}
\begin{split}
|\tilde{I}_{1}|& \lesssim _{\varepsilon }|t|^{-\frac{1}{3}}\left( \sup_{k\in
\{0,1\},|r|\leq 2\varepsilon }|\overline{Z}((r_{0}+r)|x|)\partial ^{k}\Psi
(r)|+|x|\sup_{|r|\leq 2\varepsilon }|\Psi (r)\overline{Z}^{\prime
}((r_{0}+r)|x|)|\right) \\
& \lesssim _{\varepsilon }|t|^{-\frac{1}{3}}|x|^{-\frac{n-1}{2}} \\
& \lesssim _{\varepsilon }|t|^{-\frac{1}{3}-\frac{n-1}{2}}
\end{split}
\label{EstimI1LargeTimeCriticalPoint}
\end{equation}

\medskip

\textbf{Case 2} Suppose now that $|x|\leq \frac{1}{4}p^{\prime }(r_{0})t$.
Then 
\begin{equation*}
|\partial _{r}\Omega (r,|x|,t)|=|p^{\prime }(r_{0})|-\frac{|x|}{t}\geq
|p^{\prime }(r_{0})|/2
\end{equation*}
and therefore, using the nonstationary phase and the fact that $Z$ has all
derivatives bounded, we obtain that 
\begin{equation}
|\tilde{I}_{1}|\lesssim |t|^{-\frac{n}{2}}.
\label{EstimI1LargeTimeNonCriticalPoint}
\end{equation}

\medskip

The estimation of $e^{ir\vert x\vert}Z(r\vert x\vert)$ is easier. Proceeding
as above, we introduce 
\begin{equation*}
\tilde{I}_2 =\int_{-2\varepsilon }^{2\varepsilon }e^{i\left( tp(r)+r\vert
x\vert\right) }\Psi (r)Z((r_{0}+r)\vert x\vert)dr.
\end{equation*}
But the phase in $\tilde{I}_2$ satisfies 
\begin{equation*}
\vert \partial_r\Omega_2(r,\vert x\vert,t)\vert =\vert\partial_r\left(p(r)+r%
\frac{\vert x\vert}{t}\right)\vert\ge \vert p^\prime(r)\vert\gtrsim 1
\end{equation*}
and we can conclude as in Case 2 above to get 
\begin{equation*}
\vert\tilde{I}_2\vert\lesssim \vert t\vert^{-\frac{n}{2}}.
\end{equation*}
Now this, \eqref{EstimI1SmallTimes}, \eqref{EstimI1LargeTimeCriticalPoint}
and \eqref{EstimI1LargeTimeNonCriticalPoint} prove \eqref{LinearDecay1}.
\end{proof}

Now that we have dealt with the degeneracy at $r_{0}$, the other degeneracy
at $0$ and $\infty $ are more easily dealt with at the price of loosing
derivatives. To isolate these regions, we introduce two smooth cut-off
functions. We let $\psi _{0}$ and $\psi _{\infty }$ such that $0\leq \psi
_{0}+\psi _{\infty }\leq 1$, $\psi _{0}$ is supported on $%
(-r_{0}+\varepsilon ,r_{0}-\varepsilon )$, $\psi _{\infty }$ is supported on 
$\{| x |\geq r_{0}+\varepsilon \}$ and 
\begin{equation}
\psi _{0}+\psi _{r_{0}}+\psi _{\infty }=1.  \label{PartitionOfUnity}
\end{equation}
We first treat the case of small frequencies. We note that since $r_{0}$ is
the only positive root of $p^{\prime \prime }$, $p^{\prime \prime }(r)\neq 0$
for either $r\in (r_{0}+\varepsilon ,\infty )$ or $r\in $ $%
(0,r_{0}-\varepsilon ).$ Therefore we can apply Theorem 1 from Guo, Peng and
Wang \cite{GuoPenWan}, case $(a)$ and $(b)$ respectively, to obtain with %
\eqref{Estimp}:

\begin{lemma}
\label{LinearDecaySmallFreqLemma} There holds that, for all $f\in L^1$ 
\begin{equation}  \label{LinearDecaySmallFreq1}
\begin{split}
\Vert
e^{itp(\vert\nabla\vert)}\psi_0(\vert\nabla\vert)f\Vert_{L^\infty}%
\lesssim_{n,\varepsilon} (1+\vert t\vert)^{-\frac{n}{2}}\Vert
\vert\nabla\vert^{\frac{n-2}{2}}f\Vert_{L^1}.
\end{split}%
\end{equation}
\end{lemma}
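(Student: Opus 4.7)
The plan is to mirror the Bessel/van der Corput strategy used for Lemma \ref{LinearDecayMedFreqLemma}, exploiting that on the support of $\psi_0$ the second derivative $p''$ only vanishes at the single point $r=0$, and the higher derivatives pin down exactly how. Concretely, from \eqref{Estimp} and the Taylor expansion $p(r)=\sqrt{2}\,r-\tfrac{\sqrt{2}}{4}r^3+O(r^5)$, we have $p'(0)=\sqrt{2}$, $p''(0)=0$, $p'''(0)\neq 0$, and $|p''(r)|\simeq r$ on $[0,r_0-\varepsilon]$. Away from $r=0$, everything is non-degenerate, and one could invoke Theorem 1 of \cite{GuoPenWan}, case (a), to conclude; I would prefer a direct proof to keep the argument self-contained and to see where the weight $|\nabla|^{(n-2)/2}$ is used.

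I would start by writing the kernel through the radial Fourier/Hankel transform and then using the Bessel expansion \eqref{Bessel1}--\eqref{Bessel2}, reducing matters to
\[
I_\pm(x,t)=\int_0^{r_0-\varepsilon} e^{i(tp(r)\pm r|x|)}\psi_0(r)\,r^{n-1} Z_\pm(r|x|)\,dr,
\]
where $|\partial^k Z_\pm(s)|\lesssim (1+s)^{-(n-1)/2-k}$. I would then dyadically decompose $\psi_0=\sum_{j\geq 0}\varphi_j$ with $\varphi_j$ supported on $r\sim 2^{-j}$ (with $j=0$ absorbing the piece where $r$ is bounded below, where ordinary van der Corput at order $2$ applies since $|p''|\simeq 1$ there). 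On each dyadic piece $r\sim \lambda:=2^{-j}$ rescale $r=\lambda\rho$; the phase becomes $\Omega_\pm(\rho)=\lambda tp(\lambda\rho)\pm\lambda\rho|x|$, and the non-vanishing of $p'''(0)$ gives $|\partial_\rho^3\Omega_\pm|\simeq \lambda^3 t$ uniformly on this piece. Van der Corput at order three then yields
\[
|I_\pm|\lesssim (\lambda^3 t)^{-1/3}\,\lambda^{n-1}\cdot(1+\lambda|x|)^{-(n-1)/2}\,\lesssim\, \lambda^{n-2}\, t^{-1/3}\,(1+\lambda|x|)^{-(n-1)/2}.
\]

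The main obstacle is of course the region near the wave front $|x|/t\simeq \sqrt{2}=p'(0)$, where the stationary point of $\Omega_-$ collides with the degenerate zero of $p''$ at $r=0$; this is precisely what forces the cubic estimate above, rather than the quadratic one that would give $|t|^{-1/2}$. To finish, I would interpolate the cubic bound with the trivial $\lambda^n$ bound to upgrade to a full $t^{-n/2}$: writing
\[
|I_\pm|\lesssim \min\bigl(\lambda^n,\ \lambda^{n-2}t^{-1/3}(\lambda|x|)^{-(n-1)/2}\bigr),
\]
the geometric sum over $j\geq 0$ against the extra weight gained from $|\nabla|^{(n-2)/2}$ (which in the Fourier multiplier corresponds to multiplying the kernel estimate by $\lambda^{(n-2)/2}$) produces a convergent series with bound $t^{-n/2}$. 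A non-stationary phase argument handles the region $|x|/t$ bounded away from the image of $p'$ on the support of $\psi_0$, giving arbitrarily fast decay there. Finally one patches the pieces together and combines with the trivial small-time bound to obtain \eqref{LinearDecaySmallFreq1}.
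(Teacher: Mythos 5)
The paper's own argument for this lemma is a one-line reduction to Theorem~1(a) of \cite{GuoPenWan}, so your attempt to give a direct, self-contained van der Corput proof is a genuinely different (and more illuminating) route. However, the proof as written has a genuine gap, and a secondary inaccuracy.

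The main gap is that summing the order-three van der Corput bound over all dyadic scales, even against the weight $\lambda^{(n-2)/2}$ coming from $|\nabla|^{(n-2)/2}$, does \emph{not} produce $t^{-n/2}$. Indeed, with the (corrected) dyadic estimate $|I_\pm|\lesssim t^{-1/3}\lambda^{n-1}(1+\lambda|x|)^{-(n-1)/2}$ (you dropped a factor of $\lambda$ from $dr=\lambda\,d\rho$; this is minor), the weighted series in the wave-front regime $|x|\simeq t$ has $\lambda$-exponent $(2n-3)/2>0$ and is therefore dominated by the \emph{largest} scale $\lambda\sim 1$, giving only $t^{-1/3-(n-1)/2}=t^{-(3n-1)/6}$, i.e.\ $t^{-4/3}$ when $n=3$ — exactly the medium-frequency rate, not the claimed $t^{-n/2}=t^{-3/2}$. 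Concretely, any fixed dyadic shell with $\lambda$ of order one contributes $\sim t^{-4/3}$ under order-three van der Corput, which already ruins the bound. The fix is to use order-\emph{two} van der Corput in the range $t^{-1/3}\lesssim\lambda\lesssim 1$, exploiting $|p''(r)|\simeq r\simeq\lambda$, which gives $|I_\pm|\lesssim (t\lambda)^{-1/2}\lambda^{n-1}(1+\lambda|x|)^{-(n-1)/2}$; with the weight this sums to $t^{-n/2}$ for $n\ge 3$, while the order-three estimate is needed (and suffices) only for the tiny scales $\lambda\lesssim t^{-1/3}$. Without that two-regime splitting, the "interpolation against the trivial bound" step simply cannot reach $t^{-n/2}$.

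A secondary issue: your claim that $|\partial_\rho^3\Omega_\pm|\simeq \lambda^3 t$ "uniformly" on each dyadic piece of the support of $\psi_0$ is false for $\lambda$ of order one. Since $p'''(0)<0$ while $p'''(r_0)>0$ (as one reads off from \eqref{Estimp}), $p'''$ has a zero strictly inside $(0,r_0)$. Near that interior zero the order-three lemma is unavailable, but $p''$ is bounded away from zero there, so order-two van der Corput closes it — this should be said explicitly rather than folded into the phrase "$j=0$ absorbing the piece where $r$ is bounded below."
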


\begin{lemma}
\label{LinearDecayHighFreqLemma} For all $f\in L^1$, there holds that 
\begin{equation}  \label{LinearDecayHighFreq}
\Vert
e^{itp(\vert\nabla\vert)}\psi_\infty(\vert\nabla\vert)f\Vert_{L^\infty}
\lesssim_{n,\varepsilon} \vert t\vert^{-\frac{n}{2}}\Vert \vert\nabla\vert^ 
\frac{n+2}{2} f\Vert_{B^0_{1,1}}.
\end{equation}
\end{lemma}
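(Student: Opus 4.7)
The plan is to mimic the approach used for Lemma \ref{LinearDecaySmallFreqLemma} and reduce the estimate to an application of the general linear decay theorem of \cite{GuoPenWan}, this time exploiting that $p''(r)\neq 0$ on $(r_0+\varepsilon,\infty)$. However, because the phase $p(r)=r\sqrt{(2+r^2)/(1+r^2)}$ is \emph{wave-like at infinity}, with $p'(r)\to 1$ and, from \eqref{Estimp}, $p''(r)\sim r^{-3}$ and $p'''(r)\sim r^{-5}$ as $r\to\infty$, the dispersive strength degenerates at high frequencies, and this loss must be absorbed through a loss of derivatives. The natural vehicle for organizing this loss is a Littlewood-Paley decomposition together with a Besov summation.

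Concretely, I would write $\psi_\infty(|\nabla|)f=\sum_{N\gtrsim 1}P_N f$ and prove, for each dyadic $N$,
\begin{equation*}
\Vert e^{itp(|\nabla|)}P_N f\Vert_{L^\infty}\lesssim N^{(n+2)/2}|t|^{-n/2}\Vert P_N f\Vert_{L^1},
\end{equation*}
which gives the Lemma after summing in $N$, since $\sum_N N^{(n+2)/2}\Vert P_N f\Vert_{L^1}\simeq \Vert|\nabla|^{(n+2)/2}f\Vert_{B^0_{1,1}}$. To establish this dyadic estimate, I would follow the strategy of Lemma \ref{LinearDecayMedFreqLemma}: represent the convolution kernel using the radial Bessel decomposition \eqref{Bessel1}, so that the $L^\infty$-norm of $\mathcal{F}^{-1}[e^{itp(|\xi|)}\chi(\xi/N)]$ reduces to one-dimensional oscillatory integrals
\begin{equation*}
I_\pm(x,t)=\int_0^\infty e^{i(tp(r)\pm r|x|)}\chi(r/N)Z_\pm(r|x|)\,r^{n-1}\,dr.
\end{equation*}

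The $I_+$ piece is always non-stationary since $tp'(r)+|x|\gtrsim |t|+|x|$, and repeated integration by parts in $r$ gives a bound far better than what is needed. For $I_-$, outside the wave cone (say $|x|\leq |t|/C$ or $|x|\geq C|t|$) the derivative $tp'(r)-|x|$ is bounded below by $c(|t|+|x|)$ uniformly on $\mathrm{supp}\,\chi(\cdot/N)$, so integration by parts again yields rapid decay. The only delicate region is $|x|\sim |t|$, where the phase has a stationary point. There I would apply Van der Corput at order two, using the lower bound $|\partial_r^2(tp(r)\pm r|x|)|=|t||p''(r)|\gtrsim |t|N^{-3}$ valid for $r\in\mathrm{supp}\,\chi(\cdot/N)$, which produces the factor $(|t|N^{-3})^{-1/2}=N^{3/2}|t|^{-1/2}$. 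Combining this with the amplitude bound $r^{n-1}\lesssim N^{n-1}$ and the Bessel decay $|Z(r|x|)|\lesssim (N|x|)^{-(n-1)/2}\sim (N|t|)^{-(n-1)/2}$ yields exactly $N^{(n+2)/2}|t|^{-n/2}$.

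The main obstacle is precisely the degeneracy $p''(r)\sim r^{-3}$ as $r\to\infty$: unlike in the medium-frequency regime where Van der Corput could be applied with an order-one constant, here the Van der Corput constant $\lambda=|t|N^{-3}$ deteriorates as $N\to\infty$, and it is only after pairing the resulting $N^{3/2}$ loss with the $N^{(n-1)/2}$ loss coming from the Bessel asymptotics (and the $N^{n-1}$ from the measure) that the whole estimate closes with the anticipated weight $N^{(n+2)/2}$. The bookkeeping of derivatives of the amplitude in Van der Corput (to control the $L^1$ norm of its derivative, which introduces additional $N^{-1}$ or $|x|$ factors after differentiating $\chi(r/N)$ or $Z(r|x|)$) is straightforward but must be carried out carefully so that no additional loss in $N$ is incurred. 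Alternatively, one may simply invoke the corresponding case of Theorem 1 in \cite{GuoPenWan} designed for phases behaving like $|\xi|$ at infinity with non-vanishing second derivative away from a single frequency.
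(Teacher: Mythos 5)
The paper's own ``proof'' of Lemma~\ref{LinearDecayHighFreqLemma} is a one-line citation: since $p''\neq 0$ on $(r_0+\varepsilon,\infty)$, case $(b)$ of Theorem~1 in \cite{GuoPenWan} applies directly, after which \eqref{Estimp} identifies the derivative weight. You instead unpack that citation and give a self-contained dyadic argument: localize to frequency $N\gtrsim 1$, use the Bessel representation \eqref{Bessel1}, handle $I_+$ and the off-cone part of $I_-$ by non-stationary phase, and treat $|x|\sim|t|$ by Van der Corput at order two with $|\partial_r^2(tp(r)-r|x|)|\gtrsim |t|N^{-3}$. The exponent bookkeeping is correct: $\lambda^{-1/2}=N^{3/2}|t|^{-1/2}$, times the $N^{n-1}$ from the measure, times the Bessel amplitude $(N|t|)^{-(n-1)/2}$, gives precisely $N^{(n+2)/2}|t|^{-n/2}$, and summing in $N$ produces the Besov norm. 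This is a genuinely more explicit route than the paper's, and it is nice to see exactly where each power of $N$ comes from; the paper's citation buys brevity at the cost of opacity.

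Two small remarks. First, a cosmetic slip: $p'''(r)\sim r^{-4}$ as $r\to\infty$ (differentiate $p''\sim r^{-3}$; the paper itself says this before \eqref{phixixi}), not $r^{-5}$; this is harmless since you only invoke $p''$. Second, the Van der Corput bound $\lambda^{-1/2}$ exceeds the trivial interval-length bound $\sim N$ when $N\gtrsim|t|$, so in that range you should simply bound $|I_-|$ by $N\cdot N^{n-1}(N|t|)^{-(n-1)/2}=N^{(n+1)/2}|t|^{-(n-1)/2}$, which is $\lesssim N^{(n+2)/2}|t|^{-n/2}$ exactly when $|t|\lesssim N$; together with the Van der Corput bound for $|t|\gtrsim N$ this covers all $|t|\gtrsim 1$, and $|t|\lesssim 1$ is trivial since the kernel of $e^{itp(|\nabla|)}\tilde P_N$ has $L^\infty$-norm $\lesssim N^n$. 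You wave at the derivative bookkeeping for the amplitude; carrying it out, each derivative of $\chi(r/N)\,Z(r|x|)\,r^{n-1}$ on $\{r\sim N,\,r|x|\gtrsim 1\}$ costs at most a factor $N^{-1}$, so the $L^1$-norm of $\psi'$ is comparable to $\sup|\psi|$, as needed for Van der Corput; the same bound gives the requisite gain per integration by parts in the non-stationary regions.
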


Finally, from Lemma \ref{LinearDecayMedFreqLemma} \ref%
{LinearDecaySmallFreqLemma} \ref{LinearDecayHighFreqLemma}, we can prove
Proposition \ref{PropLinDecay}.

\begin{proof}[Proof of Proposition \protect\ref{PropLinDecay}]
\eqref{LinDecay} follows directly from \eqref{LinearDecayMedFreqLemma}, %
\eqref{LinearDecaySmallFreq1} and \eqref{LinearDecayHighFreq}. In order to
get \eqref{L10Decay}, we interpolate between the isometric property 
\begin{equation*}
\Vert e^{itp(|\nabla |)}Pf\Vert _{L^{2}}=\Vert Pf\Vert _{L^{2}}
\end{equation*}%
for $P$ a Fourier projector and the various $L^{\infty }$ estimates.
Interpolating with \eqref{LinearDecayMedFreqLemma} gives that 
\begin{equation*}
\Vert e^{itp(|\nabla |)}\psi (|\nabla |)f\Vert _{L^{10}}\lesssim |t|^{-\frac{%
16}{15}}\Vert f\Vert _{L^{\frac{10}{9}}}.
\end{equation*}%
Interpolating with \eqref{LinearDecaySmallFreq1} gives 
\begin{equation*}
\Vert e^{itp(|\nabla |)}\psi _{0}(|\nabla |)f\Vert _{L^{10}}\lesssim |t|^{-%
\frac{6}{5}}\Vert f\Vert _{L^{\frac{10}{9}}}.
\end{equation*}%
Finally, interpolating with \eqref{LinearDecayHighFreq} and using the
inclusions of Besov spaces 
\begin{equation*}
L^{10}\subset B_{10,2}^{0}\hskip.2cm\hbox{and}\hskip.2cmB_{\frac{10}{9}%
,2}^{0}\subset L^{\frac{10}{9}},
\end{equation*}%
and Bernstein estimates \eqref{BernSobProp}, we get that 
\begin{equation*}
\begin{split}
\Vert e^{itp(|\nabla |)}\psi _{\infty }(|\nabla |)f\Vert _{L^{10}}^{2}&
\lesssim \sum_{N\geq 1}\Vert e^{itp(|\nabla |)}\psi _{\infty }(|\nabla
|)P_{N}f\Vert _{L^{10}}^{2} \\
& \lesssim |t|^{-\frac{12}{5}}\sum_{N\geq 1}N^{4}\Vert P_{N}f\Vert _{L^{%
\frac{10}{9}}}^{2} \\
& \lesssim |t|^{-\frac{12}{5}}\Vert f\Vert _{W^{2,\frac{10}{9}}}^{2}.
\end{split}%
\end{equation*}
Since for small time $t\leq 1$, we also have that 
\begin{equation*}
\Vert e^{itp(|\nabla |)}f\Vert _{L^{10}}\lesssim \Vert e^{itp(|\nabla
|)}f\Vert _{H^{\frac{6}{5}}}\lesssim \Vert f\Vert _{H^{\frac{6}{5}}}\lesssim
\Vert f\Vert _{W^{\frac{12}{5},\frac{10}{9}}}
\end{equation*}%
and since $f=\psi _{r_{0}}(|\nabla |)f+\psi _{0}(|\nabla |)f+\psi _{\infty
}(|\nabla |)f$, this ends the proof.
\end{proof}


\section{Normal form transformation}

\label{SecNot}

In this section, we derive the normal form transformation for $\alpha _{j}.$
Isolating linear, quadratic and higher order terms, we can rewrite the
Euler-Poisson system \eqref{EP} as follows: 
\begin{align}
& \partial _{t}\rho +\hbox{div}(v) & & +\hbox{div}(\rho v) & & =0
\label{EPrho} \\
& \partial _{t}v+\nabla \rho +\nabla \phi & & +(v\cdot \nabla )v-\nabla 
\frac{\rho ^{2}}{2} & & =-\nabla \left[ \ln (1+\rho )-\rho +\frac{\rho ^{2}}{
2}\right]  \label{EPv} \\
& \rho =(1-\Delta )\phi & & +\frac{\phi ^{2}}{2} & & +\left[ e^{\phi
}-1-\phi -\frac{\phi ^{2}}{2}\right] .  \label{EPphi}
\end{align}
The last line defines an operator $\rho \mapsto \phi (\rho )$ such that 
\begin{equation}  \label{DefOfElectricField}
\phi (\rho )=(1-\Delta )^{-1}\rho -\frac{1}{2}(1-\Delta )^{-1}\left[
(1-\Delta )^{-1}\rho \right] ^{2}+R(\rho )
\end{equation}
where $R$ satisfies good properties. We note that since $\nabla \times v=0$
there exists a function $\psi$ such that $v=\nabla \psi $ and consequently, $%
(v\cdot \nabla )v=\nabla \frac{|v|^{2}}{2}$. In terms of the velocity
potential $\psi $, we can rewrite the above system as

\begin{equation}  \label{EPtemp}
\begin{array}{c}
\partial _{t} 
\begin{pmatrix}
\rho \\ 
\psi%
\end{pmatrix}
+ 
\begin{pmatrix}
0 & \Delta \\ 
(1-\Delta )^{-1}+1 & 0%
\end{pmatrix}
\begin{pmatrix}
\rho \\ 
\psi%
\end{pmatrix}
\\ 
= 
\begin{pmatrix}
-\nabla \cdot (\rho \nabla\psi) \\ 
\frac{1}{2}(1-\Delta )^{-1}\left[ (1-\Delta )^{-1}\rho \right] ^{2}-R(\rho
)-\ln (1+\rho )+\rho -\frac{\rho ^{2}}{2}%
\end{pmatrix}
. \\ 
\end{array}%
\end{equation}

We denote the pair of eigenfunctions of the linear part as$%
\begin{pmatrix}
1 \\ 
\pm \frac{p(|\nabla |)}{i|\nabla |^{2}}%
\end{pmatrix}%
$ $=%
\begin{pmatrix}
1 \\ 
\pm \frac{q(|\nabla |)}{i|\nabla |}%
\end{pmatrix}%
,$ and recall $\alpha _{j}=\rho +\frac{(-1)^{j}i}{q(|\nabla |)}\mathcal{R }%
^{-1}v\equiv \rho +\frac{(-1)^{j}i}{q(|\nabla |)}|\nabla |\psi ,$ where $%
\mathcal{R}=\frac{\nabla}{\vert\nabla\vert}$ stands for the Riesz transform.
We can diagonalize the matrix as:

\begin{equation*}
\begin{pmatrix}
0 & \Delta \\ 
(1-\Delta )^{-1}+1 & 0%
\end{pmatrix}
= 
\begin{pmatrix}
1 & 1 \\ 
\frac{q(|\nabla |)}{i|\nabla |} & -\frac{q(|\nabla |)}{i|\nabla |}%
\end{pmatrix}
\begin{pmatrix}
ip(|\nabla |) & 0 \\ 
0 & -ip(|\nabla |)%
\end{pmatrix}
\begin{pmatrix}
\frac{1}{2} & \frac{i|\nabla |}{2q(|\nabla |)} \\ 
\frac{1}{2} & -\frac{i|\nabla |}{2q(|\nabla |)}%
\end{pmatrix}%
\end{equation*}
Now, with $\alpha $ given in \eqref{DefOfAlpha}, using that $\mathcal{R}%
^{-1} \nabla=\vert\nabla\vert\mathcal{R}^{-1}\frac{\nabla }{|\nabla |}%
=|\nabla |,$ and

\begin{equation*}
\hbox{div}(v)=\hbox{div}(\frac{\nabla }{|\nabla |}\mathcal{R}
^{-1}v)=-|\nabla |\mathcal{R}^{-1}v,
\end{equation*}
we diagonalize the matrix and rewrite \eqref{EPtemp} in terms of $\alpha $ as

\begin{equation}  \label{EPShort}
(\partial _{t}+(-1)^{j}ip(|\nabla |))\alpha _{j}=Q_{j}(\alpha )+\mathcal{N}
_{j}
\end{equation}
where $Q_{2}=\bar{Q}_{1},$ and $\mathcal{N}_{2}=\mathcal{\bar{N}}_{1}$ such
that the quadratic term $Q_{1}$ and the cubic term $\mathcal{N}_{1}$ take
the form: 
\begin{equation}  \label{DefOfN}
\begin{split}
&Q_{j}=-\text{div}(\rho v)+(-1)^{j}\frac{i|\nabla |}{2q(|\nabla |)}\left\{
(1-\Delta )^{-1}[(1-\Delta )^{-1}\rho ]^{2}-\rho ^{2}-|v|^{2}\right\} \\
&\mathcal{N}_{j}=(-1)^{j}\frac{i|\nabla |}{q(|\nabla |)}\left[ \ln (1+\rho
)-\rho +\frac{\rho ^{2}}{2}-R(\rho )\right] .
\end{split}%
\end{equation}
The most important step is to study the linear profiles 
\begin{equation*}
\omega _{j}(t)=e^{(-1)^{j}itp(|\nabla |)}\alpha _{j}(t),
\end{equation*}
so that its temporal derivatives are of at least of quadratic order: 
\begin{equation}  \label{OmegaDerivatives}
\partial _{t}\omega _{j}=e^{(-1)^{j}ip(|\nabla |)t}\{Q_{j}(\alpha )+ 
\mathcal{N}_{j}\}.
\end{equation}
Plugging $\rho =\frac{\alpha _{1}+\alpha _{2}}{2}$ and $v=\frac{\nabla
p(|\nabla |)}{-\Delta }\frac{\alpha_{1}-\alpha_{2}}{2i}$ into $Q_{j},$ we
now compute the Fourier transform of $Q_{j}(\alpha )$ as 
\begin{equation}  \label{EquationForQ}
\begin{split}
& \hat{Q}_{j}(\alpha )(t,\xi ) \\
& =\int_{\mathbb{R}^{3}}\Big[-\frac{1}{4}\frac{\xi \cdot \eta }{|\eta |}
q(\eta )\hat{\alpha}_{1}(\xi -\eta )\hat{\alpha}_{1}(\eta )+\frac{1}{4} 
\frac{\xi \cdot \eta }{|\eta |}q(\eta )\hat{\alpha}_{2}(\xi -\eta )\hat{
\alpha}_{2}(\eta ) \\
& +\frac{i|\xi |}{8q(\xi )}\left(\frac{(\xi -\eta )\cdot \eta }{|\xi -\eta
||\eta |}q(\xi -\eta )q(\eta )-1+\frac{1}{\langle \xi \rangle ^{2}\langle
\xi -\eta \rangle ^{2}\langle \eta \rangle ^{2}}\right) \hat{\alpha}_{1}(\xi
-\eta )\hat{\alpha}_{2}(\eta ) \\
& -\frac{(-1)^{j}i|\xi |}{8q(\xi )}\left(\frac{(\xi -\eta )\cdot \eta }{
|\xi -\eta ||\eta |}q(\xi -\eta )q(\eta )-1+\frac{1}{\langle \xi \rangle
^{2}\langle \xi -\eta \rangle ^{2}\langle \eta \rangle ^{2}}\right) \hat{
\alpha}_{1}(\xi -\eta )\hat{\alpha}_{1}(\eta ) \\
& +\frac{1}{4}\frac{\xi \cdot \eta }{|\eta |}q(\eta )\hat{\alpha}_{1}(\xi
-\eta )\hat{\alpha}_{2}(\eta ) -\frac{1}{4}\frac{\xi \cdot \eta }{|\eta |}%
q(\eta )\hat{\alpha}_{2}(\xi -\eta )\hat{\alpha}_{1}(\eta ) \\
& -\frac{(-1)^{j}i|\xi |}{8q(\xi )}\left(\frac{(\xi -\eta )\cdot \eta }{
|\xi -\eta ||\eta |}q(\xi -\eta )q(\eta )-1+\frac{1}{\langle \xi \rangle
^{2}\langle \xi -\eta \rangle ^{2}\langle \eta \rangle ^{2}}\right) \hat{
\alpha}_{2}(\xi -\eta )\hat{\alpha}_{2}(\eta )\Big]d\eta \\
& \equiv \int_{\mathbb{R}^{3}}\big[m_{rl}^{j}(\xi,\eta)\hat{\alpha}_{r}(\xi
-\eta ) \hat{\alpha}_{l}(\eta )\big](s)ds.
\end{split}%
\end{equation}

We now integrate \eqref{EPShort} to get 
\begin{equation}  \label{DuhamelForAlpha}
\begin{split}
\hat{\alpha}_{j}(t) &=e^{(-1)^{j+1}ip(|\xi |)t}\hat{\alpha}
_{j}(0)+\int_{0}^{t}e^{(-1)^{j+1}ip(|\xi |)(t-s)}\hat{Q}_{j}(\alpha )(s)ds \\
&+\int_{0}^{t}e^{(-1)^{j+1}ip(|\xi |)(t-s)}\mathcal{\hat{N}}_{1}(\alpha
)(s)ds \\
&=e^{(-1)^{j+1}ip(|\xi |)t}\hat{\alpha}_{j}(0)+\int_{0}^{t}e^{(-1)^{j+1}ip(|%
\xi |)(t-s)}\mathcal{\hat{N}}_{j}(\alpha )(s)ds \\
&+e^{(-1)^{j+1}ip(|\xi |)t}\int_{0}^{t}e^{(-1)^{j+1}ip(|\xi |)s}m_{rl}^{j}%
\hat{\alpha}_{r}(\xi -\eta )\hat{\alpha}_{l}(\eta )\big](s)ds(s)ds.
\end{split}%
\end{equation}
The crucial step is to replace $\hat{\alpha}_{j}(s)=e^{(-1)^{j+1}ip(|\xi
|)s} \hat{\omega}_{j}(s)$ in the third term, which then takes the form 
\begin{equation}  \label{EquaPsi}
\hat{\Psi}_j(\alpha)=e^{(-1)^{j+1}ip(|\xi
|)t}\sum_{r,l=1}^{2}\int_{0}^{t}\int_{\mathbb{R} ^{3}}m_{rl}^{j}e^{is\Phi
_{rl}}\hat{\omega}_{r}(\xi -\eta )\hat{\omega} _{l}(\eta )d\eta ds.
\end{equation}
Here $\hat{\omega}_{1}(\xi)=e^{itp(\xi )}\hat{\alpha}_{1}(\xi )$ and $\hat{%
\omega}_{2}(\xi )=e^{-itp(\xi )}\hat{\alpha}_{2}(\xi)=\overline{\hat{\omega}}%
_{1}(\xi ),$ 
\begin{equation}  \label{DefOfPhim}
\begin{split}
\Phi _{rl}(\xi ,\eta )& =(-1)^{j+1}p(\xi )+(-1)^{r+1}p(\xi -\eta
)+(-1)^{l+1}p(\eta ),\hskip.1cm\hbox{and} \\
m_{rl}^{j}(\xi ,\eta )& =|\xi |n_{1rl}^{j}(\xi )n_{2rl}^{j}(\xi -\eta
)n_{3rl}^{j}(\eta ),
\end{split}%
\end{equation}
is a factorable multiplier defined in \eqref{EquationForQ}, where the $%
n_{olk}^{j}$ are either smooth functions or product of a smooth function
with the angle function $x\mapsto \frac{x}{|x|}$. More specifically, there
are only four different phases of the following: 
\begin{equation}  \label{phase}
\begin{split}
\Phi _{1}(\xi ,\xi -\eta ,\eta )& =p(\xi )-p(\xi -\eta )-p(\eta ) \\
\Phi _{2}(\xi ,\xi -\eta ,\eta )& =p(\xi )+p(\xi -\eta )+p(\eta ) \\
\Phi _{3}(\xi ,\xi -\eta ,\eta )& =p(\xi )-p(\xi -\eta )+p(\eta ) \\
\Phi _{4}(\xi ,\xi -\eta ,\eta )& =p(\xi )+p(\xi -\eta )-p(\eta ). \\
&
\end{split}%
\end{equation}
Integrating by parts in $s$ in the integral in $\Psi $, and making use of
the fact that $\partial _{t}\hat{\omega}$ is at least quadratic by %
\eqref{OmegaDerivatives}, we obtain from \eqref{EquaPsi} that\footnote{%
for notational simplicity, we do not distinguish $m^j_{lr}(\xi,\eta)$ and $%
m^j_{rl}(\xi,\xi-\eta)$.} 
\begin{equation*}
\begin{split}
& e^{(-1)^{j}ip(|\xi |)t}\hat{\Psi}_j(\alpha)(t,\xi ) \\
& =\sum_{r,l=1}^{2}\left[ \int_{\mathbb{R}^{3}}\frac{m_{rl}^{j}}{i\Phi _{rl}}
e^{is\Phi _{rl}}\hat{\omega}_{r}(\xi -\eta )\hat{\omega}_{l}(\eta )d\eta %
\right] _{s=0}^{t} \\
& +2\sum_{r,l=1}^{2}\int_{0}^{t}\int_{\mathbb{R}^{3}}i\frac{%
m_{rl}^{j}(\xi,\eta)}{\Phi _{rl}}e^{is\Phi _{rl}}\hat{\omega}_{r}(\xi -\eta
)\partial _{t}\hat{\omega} _{l}(\eta )d\eta ds \\
& =i\sum_{r,l=1}^{2}\int_{\mathbb{R}^{3}}\frac{m_{rl}^{j}}{\Phi _{rl}}\hat{
\omega}_{r}(0,\xi -\eta )\hat{\omega}_{l}(0,\eta )d\eta \\
& +\sum_{r,l=1}^{2}e^{(-1)^jitp(\xi )}\int_{\mathbb{R}^{3}}\frac{m_{rl}^{j}}{
i\Phi _{rl}}\hat{\alpha}_{r}(t,\xi -\eta )\hat{\alpha}_{l}(t,\eta )d\eta \\
& +2\sum_{r,l,r_{1},l_{1}=1}^{2}\int_{0}^{t}\int_{\mathbb{R}^{3}}\frac{
im_{rl}^{j}(\xi ,\eta )m_{r_{1}l_{1}}^{l}(\eta ,\zeta )}{\Phi _{rl}}
e^{is\Phi _{rl}}\hat{\omega}_{r}(\xi -\eta )e^{is(-1)^{l}p(|\eta |)} \hat{Q}%
_j(\alpha)(\eta)d\eta ds \\
& +2\sum_{r,l=1}^{2}\int_{0}^{t}\int_{\mathbb{R}^{3}}\frac{im_{rl}^{j}}{\Phi
_{rl}}e^{is\Phi _{rl}}\hat{\omega}_{r}(\xi -\eta )e^{is(-1)^{l}p(\eta )} 
\hat{\mathcal{N}_{l}}(\eta )d\eta ds
\end{split}%
\end{equation*}
We then change back to $\hat{\omega}_{j}(s)=e^{(-1)^{r+1}ip(|\xi |)s}\hat{
\alpha}_{j}(s),$ and using \eqref{DefOfN}, we write

\begin{eqnarray}
&&e^{(-1)^{j}ip(|\xi |)t}\left(\hat{\alpha}_{j}(t)+\mathfrak{B}_{j}(\alpha )
\right)  \notag \\
&=&\hat{\alpha}_{j}(0)+\mathfrak{B}_{j}(\alpha
(0))+\int_{0}^{t}e^{(-1)^{j}ip(|\xi |)s}\hat{Q}_{j}(\alpha
)(s)ds+\int_{0}^{t}e^{(-1)^{j}ip(|\xi |)s}\mathcal{\hat{N}}_{j}(\alpha )(s)ds
\notag \\
&=&\hat{\alpha}_{j}(0)+\mathfrak{B}_{j}(\alpha
(0))+\int_{0}^{t}e^{(-1)^{j}ip(|\xi |)s}\mathcal{\hat{N}}_{j}(\alpha )(s)ds 
\notag \\
&&+\int_{0}^{t}e^{(-1)^{j}ip(|\xi |)s}\frac{im_{lk}^{j}(\xi ,\eta )}{\Phi
_{lk}}\hat{\alpha}_{r}(\xi -\eta )\hat{h}_{l}(\alpha (\eta ))(s)ds(s)ds
\label{alpha}
\end{eqnarray}
where the normal form transformation is 
\begin{equation}
\mathcal{F}\mathfrak{B}_j(\alpha _{j})(\xi )=\sum_{r,l=1}^{2}\int_{\mathbb{R}
^{3}}\frac{m_{rl}^{j}}{i\Phi _{rl}}\hat{\alpha}_{r}(\xi -\eta )\hat{\alpha}
_{l}(\eta )d\eta  \label{DefOfB}
\end{equation}
and 
\begin{equation}  \label{DefOfH}
\hat{h}_{l}(\alpha )\equiv \int_{\mathbb{R}^{3}}m_{r_{1}l_{1}}(\eta ,\zeta ) 
\hat{\alpha}_{r_{1}}(\eta -\zeta )\hat{\alpha}_{l_{1}}(\zeta )d\zeta + 
\mathcal{\hat{N}}_{l}
\end{equation}
is the associated cubic nonlinearity.

\medskip

We next show that $h(\alpha )$ behaves like a quadratic term in $\alpha .$

\begin{lemma}
Assuming that $\alpha$ has small $X$-norm, then 
\begin{equation}  \label{PropertiesOfN}
\Vert \vert\nabla\vert^{-1}h(\alpha (t))\Vert_{H^{2k}}+\Vert
\vert\nabla\vert^{-1}\mathcal{N}\Vert_{H^{2k}}\lesssim (1+t)^{-\frac{16}{15}%
}\Vert \alpha \Vert_{X}^{2}.
\end{equation}
\end{lemma}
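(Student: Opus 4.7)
The plan is to split $|\nabla|^{-1}h_l$, as presented in \eqref{DefOfH}, into its bilinear part (from the multiplier $m^l_{r_1 l_1}$) and the cubic-and-higher part $|\nabla|^{-1}\mathcal{N}_l$, and estimate them separately. The observation driving the whole argument is that the quadratic multipliers factor, by \eqref{DefOfPhim}, as
$$m^l_{r_1 l_1}(\eta,\zeta)=|\eta|\,n^l_{1,r_1 l_1}(\eta)\,n^l_{2,r_1 l_1}(\eta-\zeta)\,n^l_{3,r_1 l_1}(\zeta),$$
where each $n^l_{i,r_1 l_1}$ is a product of the smooth bounded symbols $q$, $1/q$, $\langle\cdot\rangle^{-2}$ with angular Riesz-type factors $x\mapsto x/|x|$, hence a zeroth-order Fourier multiplier bounded on $L^p$ for $1<p<\infty$ and on every Sobolev space. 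The outer $|\nabla|^{-1}$ therefore exactly cancels the output factor $|\eta|$ in $m^l_{r_1 l_1}$; and for $\mathcal{N}_l$, \eqref{DefOfN} shows it likewise cancels the prefactor $|\nabla|$ in front of $\ln(1+\rho)-\rho+\rho^2/2-R(\rho)$, leaving only the bounded symbol $1/q(|\nabla|)$.

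After this cancellation the bilinear contribution reads, in physical space,
$$n^l_{1,r_1 l_1}(D)\Bigl[\bigl(n^l_{2,r_1 l_1}(D)\alpha_{r_1}\bigr)\bigl(n^l_{3,r_1 l_1}(D)\alpha_{l_1}\bigr)\Bigr],$$
and applying the tame estimate \eqref{tameEst} together with the Sobolev boundedness of each $n^l_{i,r_1 l_1}(D)$ gives an upper bound of $\|\alpha\|_{L^\infty}\|\alpha\|_{H^{2k}}$. For the first factor, Sobolev embedding in $\mathbb{R}^3$ with $k\ge 5$ yields $\|\alpha\|_{L^\infty}\lesssim \|(1-\Delta)^{k/2}\alpha\|_{L^{10}}\lesssim (1+t)^{-16/15}\|\alpha\|_X$; for the second, the pointwise symbol inequality $\langle\xi\rangle^{2k}\le|\xi|^{-1}\langle\xi\rangle^{2k+1}$ gives $\|\alpha\|_{H^{2k}}\lesssim\|\alpha\|_X$ from the $L^2$-piece of the $X$-norm. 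Multiplying produces the desired bound.

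For the cubic piece, a Taylor expansion of $\ln(1+\rho)$ and the observation, read from \eqref{DefOfElectricField}, that $R(\rho)=O(\rho^3)$ near $\rho=0$ reduce the estimate to bounding a convergent series in $\rho$ starting at order three. Iterating \eqref{tameEst} yields
$$\||\nabla|^{-1}\mathcal{N}_l\|_{H^{2k}}\lesssim \|\rho\|_{L^\infty}^2\|\rho\|_{H^{2k}}\lesssim (1+t)^{-32/15}\|\alpha\|_X^3\lesssim \varepsilon(1+t)^{-16/15}\|\alpha\|_X^2,$$
the smallness of $\|\alpha\|_X$ both absorbing one power of $\alpha$ and ensuring convergence of the series in $H^{2k}$.

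The main obstacle I anticipate is verifying the cancellation between $|\xi|$ and the outer $|\nabla|^{-1}$ uniformly across the many different quadratic terms listed in \eqref{EquationForQ}. Because the $L^2$-piece of the $X$-norm controls $\alpha$ only in $\dot H^{-1}$ at low frequencies, a direct $H^{2k}$-estimate on $h_l$ without the outer $|\nabla|^{-1}$ would genuinely fail there, and one has to ensure that the angular Riesz-type factors hidden inside the $n^l_{i,r_1 l_1}$ do not silently reintroduce a low-frequency obstruction in any of the listed multipliers.
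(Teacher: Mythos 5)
For the quadratic part of $h$ and for the explicit cubic remainder $\ln(1+\rho)-\rho+\rho^2/2$ in $\mathcal{N}$, your plan is essentially what the paper does (and dismisses in one line): factor the $|\eta|$ out of the multiplier, cancel it against $|\nabla|^{-1}$, and estimate the remaining product in $H^{2k}$ via the tame estimate \eqref{tameEst} plus Sobolev embedding $W^{s,10}\hookrightarrow L^\infty$. One small caution: the Riesz-type factors $n^l_i(D)$ are unbounded on $L^\infty$ but bounded on $L^{10}$, so the $L^\infty$ factor must be routed through $\|(1-\Delta)^{k/2}\alpha\|_{L^{10}}$ \emph{before} the multiplier is stripped off; your displayed chain does this, but the prose ``Sobolev boundedness of each $n^l_{i,r_1l_1}(D)$ gives $\|\alpha\|_{L^\infty}\|\alpha\|_{H^{2k}}$'' slightly overstates what the multipliers give you.

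The genuine gap is in the treatment of $R(\rho)$. Equation \eqref{DefOfElectricField} does \emph{not} assert $R(\rho)=O(\rho^3)$ in any norm; it merely names the remainder of a formal two-term expansion of $\phi$ and says only that ``$R$ satisfies good properties.'' Since $\phi$ is determined by $\rho$ implicitly through the nonlinear elliptic relation \eqref{EPphi}, and hence $R$ through the quadratic fixed-point equation \eqref{EquForR}, the bounds $\||\nabla|^{-1}R(\rho)\|_{H^{2k}}\lesssim(1+t)^{-16/15}\|\alpha\|_X^2$ and $(1+t)^{16/15}\|R(\rho)\|_{W^{k+2,10}}\lesssim\|\alpha\|_X^3$ are not obtained by ``iterating the tame estimate on a convergent power series''; $R$ is not given to you as a series. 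The paper devotes essentially the entire proof of this lemma to precisely this point: it derives \eqref{EquForR}, sets up the iteration $R_0=0$, $(1-\Delta)R_{k+1}=\dots$, propagates $H^{2(s+1)}$ and $W^{s+2,10}$ bounds through the iteration via \eqref{tameEst} and the smallness of $\|\alpha\|_X$, shows the sequence is Cauchy in $H^2$, and then passes to the limit to obtain \eqref{ControlOfR}. Your proposal skips this fixed-point construction entirely, so the estimate for $|\nabla|^{-1}\mathcal{N}$ is not actually established.
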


\begin{proof}
When $h$ is a product of $\alpha $'s, this follows directly from the Sobolev
embedding $L^{\infty }\subset W^{\frac{3}{10},10}$.

\medskip

When $h=\mathcal{N}$, we see from \eqref{DefOfN} that, except for the term
involving $R$, a similar proof works. For the terms involving $R$, we
proceed as follows: Letting $E(x)=e^{x}-1-x-\frac{x^{2}}{2}$, we see from %
\eqref{EPphi}. \eqref{DefOfElectricField} that 
\begin{equation}  \label{EquForR}
\begin{split}
& (1-\Delta )R+\frac{1}{2}\left[ (1-\Delta )^{-1}\rho -(1-\Delta
)^{-1}\left( (1-\Delta )^{-1}\rho \right) ^{2}\right] R+\frac{R^{2}}{2} \\
& +E((1-\Delta )^{-1}\rho -\frac{1}{2}(1-\Delta )^{-1}\left[ (1-\Delta
)^{-1}\rho \right] ^{2}+R) \\
& =\frac{1}{2}(1-\Delta )^{-1}\rho \left[ (1-\Delta )^{-1}\left[ (1-\Delta
)^{-1}\rho \right] ^{2}\right] -\frac{1}{8}\left[ (1-\Delta )^{-1}\left[
(1-\Delta )^{-1}\rho \right] ^{2}\right] ^{2}.
\end{split}%
\end{equation}

In order to solve \eqref{EquForR}, we define the following iterative scheme.
For $\rho $ sufficiently small in $X$-norm, we let 
\begin{equation*}
\begin{split}
& R_{0}=0 \\
& (1-\Delta )R_{k+1}=-\frac{1}{2}\left[ (1-\Delta )^{-1}\rho -(1-\Delta
)^{-1} \left( (1-\Delta )^{-1}\rho \right) ^{2}\right] R_{k}-\frac{R_{k}^{2}%
}{2} \\
& -E((1-\Delta )^{-1}\rho -\frac{1}{2}(1-\Delta )^{-1}\left[ (1-\Delta
)^{-1}\rho \right] ^{2}+R_{k}) \\
& +\frac{1}{2}(1-\Delta )^{-1}\rho \left[ (1-\Delta )^{-1}\left[ (1-\Delta
)^{-1}\rho \right] ^{2}\right] -\frac{1}{8}\left[ (1-\Delta )^{-1}\left[
(1-\Delta )^{-1}\rho \right] ^{2}\right] ^{2}
\end{split}%
\end{equation*}
We see that, if $s>3/2$ and $p\ge 2$, using the tame estimate \eqref{tameEst}
\begin{equation}  \label{EstimForNormRk}
\begin{split}
\Vert R_{k+1}\Vert _{W^{s+2,p}}& \lesssim \Vert \rho \Vert _{L^\infty}\Vert
R_{k}\Vert _{W^{s,p}}+\Vert \rho\Vert_{W^{s-2,p}}\Vert
R_k\Vert_{L^\infty}+\Vert R_{k}\Vert _{L^\infty}\Vert R_k\Vert_{W^{s,p}} \\
&+\Vert \rho \Vert_{L^\infty}\left(\Vert\rho\Vert _{W^{s-2,p}}^{2}+\Vert
\rho \Vert _{W^{s-2,p}}^{3}\right) \\
&+C\left( \Vert \rho \Vert _{L^{\infty }}+\Vert R_{k}\Vert _{L^{\infty
}}\right) \left( \Vert R_{k}\Vert _{W^{s,p}}+\Vert \rho \Vert
_{W^{s-2,p}}\right) ^{3}
\end{split}%
\end{equation}
and, assuming that 
\begin{equation*}
\sup_{k}\Vert R_{k}\Vert _{L^{\infty }}+\Vert \rho \Vert _{L^{\infty }}\leq 2
\end{equation*}
we also see that 
\begin{equation*}
\Vert R_{k+1}-R_{k}\Vert _{H^{2}}\lesssim \left( \Vert \rho \Vert
_{L^{\infty }}+\sup_{k}\Vert R_{k}\Vert _{L^{\infty }}\right) \Vert
R_{k}-R_{k-1}\Vert _{L^{2}}.
\end{equation*}%
Hence, if $\Vert \rho \Vert _{X}<1$ is sufficiently small, there holds that 
\begin{equation*}
(1+t)^{\frac{16}{15}}\Vert R_{k}\Vert _{W^{s+2,10}}+\Vert R_{k}\Vert
_{H^{2(s+1)}}\lesssim \Vert \rho \Vert _{X}^{3}\lesssim 1
\end{equation*}
for all $0\le s\le k$ and that $(R_{k})_{k}$ is a Cauchy sequence in $H^{2}$%
, hence converges to a unique limit $R=R(\rho )$, the given function which
solves \eqref{EquForR} and satisfies 
\begin{equation}  \label{ControlOfR}
(1+t)^{\frac{16}{15}}\Vert R(\rho )\Vert _{W^{k+2,10}}+\Vert R(\rho )\Vert
_{H^{2(k+1)}}\lesssim \Vert \alpha \Vert _{X}^{3}.
\end{equation}
Using now that $W^{2,10}\subset L^\infty$, one recovers from %
\eqref{EstimForNormRk} that for all $k$, 
\begin{equation*}
(1+t)^\frac{16}{15}\Vert R_k(t)\Vert_{H^{2k+2}}\lesssim \Vert
\alpha\Vert_X^3.
\end{equation*}
Passing to the limit in $k$, we finish the proof of \eqref{PropertiesOfN}.
\end{proof}


\section{The $L^2$-type norm}

\label{SecH-1}

In this section, we get control on the first part of the $X$-norm, namely,
we control the $L^2$-based norms as follows

\begin{proposition}
\label{ControlL2NormProp} Let $\alpha$ correspond to a solution of \eqref{EP}
by \eqref{DefOfAlpha}, then if $\alpha$ has small $X$-norm there holds that 
\begin{equation*}
\Vert \alpha\Vert_{H^{-1}\cap H^{2k}}\lesssim \Vert
\alpha(0)\Vert_{Y}+\Vert\alpha\Vert_X^\frac{3}{2}.
\end{equation*}
\end{proposition}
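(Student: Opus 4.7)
The plan is a pair of $L^{2}$-based energy estimates, one at the $H^{-1}$ end and one at the $H^{2k}$ end. Both rest on the fact that in the $\alpha$-formulation \eqref{EPShort} the linear propagator $(-1)^{j}ip(|\nabla|)$ is skew-adjoint on $L^{2}$ (since $p$ is real-valued), so the only contribution to $\tfrac{d}{dt}\|\alpha\|_{L^{2}}$ comes from the semilinear/quasilinear terms $Q_{j}+\mathcal{N}_{j}$ of \eqref{DefOfN}. The $L^{10}$-decay built into the $X$-norm will make all these contributions integrable in time, and the estimate will then follow by integrating the differential inequality together with Cauchy-Schwarz.

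For the $H^{2k}$ part, the cleanest route is to work in the hydrodynamic variables $(\sigma,v)$ with $\sigma=\ln(1+\rho)$, in which the system \eqref{EPrho}-\eqref{EPv} becomes symmetric hyperbolic,
\begin{equation*}
\partial_{t}\sigma+v\cdot\nabla\sigma+\mathrm{div}\,v=0,\qquad \partial_{t}v+v\cdot\nabla v+\nabla\sigma+\nabla\phi=0,
\end{equation*}
with $\phi=(1-\Delta)^{-1}\sigma+O(\sigma^{2})$. I would use the weighted energy
\begin{equation*}
E_{2k}=\sum_{|\tau|\le 2k}\int\Big(|\partial^{\tau}v|^{2}+\partial^{\tau}\sigma\cdot\tfrac{2-\Delta}{1-\Delta}\partial^{\tau}\sigma\Big)\,dx,
\end{equation*}
apply $\partial^{\tau}$ to the system and pair against the appropriate weighted derivative of $\sigma,v$. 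The linear terms cancel; the quasilinear transport contributions $v\cdot\nabla\sigma$, $v\cdot\nabla v$ are reduced (after integration by parts or a commutator with $\partial^{\tau}$) to expressions of the form handled by Lemma \ref{LemProdEnergy} and the tame estimate \eqref{tameEst}; the remaining semilinear pieces from $\ln(1+\rho)-\rho$ and the $\phi^{2}$, $e^{\phi}$ expansions are controlled using \eqref{tameEst} and the bound \eqref{PropertiesOfN}. The upshot is
\begin{equation*}
\tfrac{d}{dt}E_{2k}\lesssim \|v\|_{W^{2,10}}E_{2k}\lesssim (1+t)^{-16/15}\|\alpha\|_{X}\cdot\|\alpha\|_{H^{2k}}^{2}\lesssim (1+t)^{-16/15}\|\alpha\|_{X}^{3},
\end{equation*}
where in the last step I used $\|\alpha\|_{H^{2k}}\le \|\alpha\|_{X}$. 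Integrating in time and observing that $E_{2k}(0)\simeq\|\alpha(0)\|_{H^{2k}}^{2}\le\|\alpha(0)\|_{Y}^{2}$ yields $\|\alpha(t)\|_{H^{2k}}\lesssim\|\alpha(0)\|_{Y}+\|\alpha\|_{X}^{3/2}$.

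For the $H^{-1}$ part, I would return to \eqref{EPShort}, apply $|\nabla|^{-1}$, and pair against $|\nabla|^{-1}\bar\alpha$ in $L^{2}$; the linear part drops out because $|\nabla|^{-2}p(|\nabla|)$ is real and self-adjoint. The structural observation is that the right-hand side has a hidden gradient: every quadratic term in $Q_{j}$ from \eqref{DefOfN} carries either the divergence $-\mathrm{div}(\rho v)$ or the explicit prefactor $|\nabla|/(2q(|\nabla|))$, and $\mathcal{N}_{j}$ itself factors through $|\nabla|/q(|\nabla|)$. Therefore $|\nabla|^{-1}(Q+\mathcal{N})$ is a zeroth-order Fourier multiplier applied to $\rho v$, $\rho^{2}$, $|v|^{2}$, plus the cubic remainder already estimated by \eqref{PropertiesOfN}. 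Using H\"older with $1/2=1/10+3/10\cdot\ldots$ and Sobolev $H^{3/10}\hookrightarrow L^{5/2}$ in dimension three, the quadratic piece satisfies
\begin{equation*}
\||\nabla|^{-1}Q\|_{L^{2}}\lesssim \|\alpha\|_{L^{10}}\|\alpha\|_{L^{5/2}}\lesssim (1+t)^{-16/15}\|\alpha\|_{X}^{2},
\end{equation*}
while the cubic piece is directly controlled by \eqref{PropertiesOfN}. This gives $\tfrac{d}{dt}\||\nabla|^{-1}\alpha\|_{L^{2}}^{2}\lesssim (1+t)^{-16/15}\|\alpha\|_{X}^{3}$, which integrates to produce the claimed $H^{-1}$ bound.

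The main obstacle is the apparent top-order derivative loss in $\partial^{2k}\mathrm{div}(\rho v)$ if one tries to do the high-regularity estimate directly on the $\alpha$-equation: since $Q$ carries one net derivative, a naive bound would demand $H^{2k+1}$ control of $\alpha$, which the $X$-norm does not provide. This is precisely why the $H^{2k}$ step must be done in the $(\sigma,v)$ variables, where the symmetric hyperbolic structure forces the top-order contributions of $v\cdot\nabla$ either to cancel after integration by parts or to reduce to Moser-type commutators fitting Lemma \ref{LemProdEnergy}.
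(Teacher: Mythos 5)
Your proposal is correct and follows essentially the same route as the paper. For the $H^{2k}$ part, you use the same device the paper does: symmetrize in $u=(\ln(1+\rho),v)$, apply $D^\tau$, and recognize that pairing $\nabla D^\tau\phi$ with $D^\tau v$ produces, via the continuity equation and the Poisson kernel $(1-\Delta)^{-1}$, a total time-derivative of $\|(1-\Delta)^{-1/2}D^\tau\rho\|_{L^2}^2$; your weight $\tfrac{2-\Delta}{1-\Delta}$ on $\sigma$ is exactly this modified energy in disguise. The commutator pieces are handled by Lemma~\ref{LemProdEnergy}, the $R(\rho)$ remainder by the estimate underlying \eqref{PropertiesOfN}, and integration in time gives the $3/2$ power. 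Your identification of the \emph{obstacle} --- that a direct $H^{2k}$ estimate on the $\alpha$-equation would lose a derivative, forcing the passage to the symmetric-hyperbolic variables --- is also the reason the paper splits the argument in the same way.

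For the $H^{-1}$ part you run a differential energy inequality, pairing $|\nabla|^{-1}(\partial_t\alpha)$ against $|\nabla|^{-1}\bar\alpha$, whereas the paper works directly on the integrated Duhamel formula \eqref{DuhamelForAlpha} and bounds $\int_0^t\||\nabla|^{-1}Q\|_{L^2}\,ds$ using unitarity of $e^{itp(|\nabla|)}$. These are interchangeable; both hinge on the same structural point you identify --- every term of $Q_j$ and $\mathcal{N}_j$ in \eqref{DefOfN} carries a factored-out $|\nabla|$ (or a $\mathrm{div}$), so $|\nabla|^{-1}Q_j$ is a bounded multiplier applied to quadratic products, which then lands in $L^2$ by an $L^{10}\times L^{5/2}$ (or, as in the paper, $L^\infty\times L^2$) H\"older split, furnishing the integrable $(1+s)^{-16/15}$ decay. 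The Duhamel route gives the marginally cleaner exponent $\|\alpha\|_X^2$ on the $H^{-1}$ piece, but since the $H^{2k}$ estimate already produces only $\|\alpha\|_X^{3/2}$, this makes no difference to the stated Proposition. In short: same mechanism, cosmetically different bookkeeping on the low-frequency end.
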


The remaining of this section is devoted to the proof of Proposition \ref%
{ControlL2NormProp}. We first control the high derivatives and then the $%
H^{-1}$-norm.

\subsection{The Energy estimate}

\label{SecEnergyEstimate}

In this subsection, we use energy methods to control high derivatives of the
solution in $L^2$, assuming a control on the $X$-norm, and most notably
integrability of the solution in $L^{10}$-norms.

In order to prove this, we rewrite \eqref{EP} into the symmetrized form 
\begin{equation}
\partial _{t}u+A_{j}(u)\partial _{j}u=(0,-\nabla \phi )  \label{SymHypSys}
\end{equation}
where $u=(\ln (1+\rho ),v_{1},v_{2},v_{3})$, 
\begin{equation*}
A_{j}= 
\begin{pmatrix}
v_{j} & e_{j}^{T} \\ 
e_{j} & v_{j}I_{3}%
\end{pmatrix}
.
\end{equation*}
Now, for a multi-index $\tau $, we derive \eqref{SymHypSys} $\tau $ times
and take the scalar product with $D^{\tau }u$ to get 
\begin{equation*}
\begin{split}
\frac{1}{2}\frac{d}{dt}\Vert D^{\tau }u\Vert _{L^{2}}^{2}&
=-(A_{j}(u)D^{\tau }\partial _{j}u,D^{\tau }u)_{L^{2}\times L^{2}} \\
& -\sum_{\gamma <\tau }c_{\gamma }(D^{\tau -\gamma }[A_{j}(u)]D^{\gamma
}\partial _{j}(u),D^{\tau }u)-(\nabla D^{\tau }\phi ,D^{\tau
}v)_{L^{2}\times L^{2}} \\
& \lesssim \Vert \hbox{div}(v)\Vert_{L^\infty}\Vert D^{\tau }u\Vert
_{L^{2}}^{2}+\sum_{\gamma <\tau }c_{\gamma }\vert (D^{\tau -\gamma
}[A_{j}(u)]D^{\gamma }\partial _{j}(u),D^{\tau }u)_{L^{2}\times L^{2}}\vert
\\
& +\vert (D^{\tau }\phi ,D^{\tau }\hbox{div}(v))_{L^{2}\times L^{2}}\vert. \\
&
\end{split}%
\end{equation*}
Besides, using \eqref{EPrho} and \eqref{DefOfElectricField}, one sees that 
\begin{equation*}
\begin{split}
(D^{\tau }\phi ,D^{\tau }\hbox{div}(v))& =(D^{\tau }(1-\Delta )^{-1}\rho
,D^{\tau }\hbox{div}(v))-(\nabla D^{\tau }\tilde{R}(\rho ),D^{\tau }v) \\
& =-(D^{\tau }(1-\Delta )^{-1}\rho ,D^{\tau }\partial _{t}\rho )-(D^{\tau
}(1-\Delta )^{-1}\rho ,D^{\tau }\hbox{div}(\rho v)) \\
& -(\nabla D^{\tau }\tilde{R}(\rho ),D^{\tau }v)
\end{split}%
\end{equation*}
with 
\begin{equation*}
\tilde{R}(\rho )=\frac{1}{2}(1-\Delta )^{-1}\left[ (1-\Delta )^{-1}\rho %
\right] ^{2}-R(\rho )
\end{equation*}
and $R$ given in \eqref{DefOfElectricField}. Now, using Lemma \ref%
{LemProdEnergy}, we remark that for all $\gamma<\tau$, there holds that 
\begin{equation*}
\Vert D^{\tau-\gamma}uD^\gamma\partial_ju\Vert_{L^2}\lesssim \Vert
u\Vert_{W^{2,10}}\Vert u\Vert_{H^{\vert\tau\vert}}
\end{equation*}
and combining this with \eqref{ControlOfR}, we obtain 
\begin{equation*}
\begin{split}
\frac{1}{2}\frac{d}{dt}\left( \Vert D^{\tau }u\Vert _{L^{2}}^{2}+\Vert
(1-\Delta )^{-\frac{1}{2}}D^{\tau }\rho \Vert _{L^{2}}^{2}\right) & \lesssim
\Vert u\Vert _{W^{2,10}}^2\Vert u\Vert _{H^{\vert\tau\vert}}+\Vert
R\Vert_{H^{\vert\tau\vert+1}}\Vert u\Vert_{H^{\vert\tau\vert}} \\
&\lesssim (1+t)^{-\frac{16}{15}}\Vert u\Vert_X^3
\end{split}%
\end{equation*}
as long as $\tau \le 2k$ and that $\Vert \alpha\Vert _{X}$ is sufficiently
small. Finally, integrating this in time and remarking that 
\begin{equation*}
\rho =\hbox{Re}(\alpha )\hskip.2cm\hbox{and}\hskip.2cmv=q(|\nabla |)\mathcal{%
R}\hbox{Im}(\alpha ),
\end{equation*}
we obtain that 
\begin{equation}
\Vert u\Vert _{H^{\tau }}^{2}\lesssim \Vert u(0)\Vert _{H^{\tau }}^{2}+\Vert
u\Vert _{X}^{3}  \label{EnergyEstimate}
\end{equation}
provided that $\tau \leq 2k$. Since control of $\ln (1+\rho )$ in $%
L_{t}^{\infty }H_{x}^{\tau }$-norm gives control of $\rho $ in $%
L_{t}^{\infty }H_{x}^{\tau }$ -norm, this gives us the global bound on the
derivatives we needed.

\subsection{The $H^{-1}$-norm}

In this section, we control the $H^{-1}$ norm of the solution, which the
other $L^2$ component of the component of the $X$-norm. We use %
\eqref{DuhamelForAlpha} and we first deal with the quadratic terms $%
Q_j(\alpha )$, whose contribution can be written as a finite sum of terms
like (recall that $\alpha_1=\overline{\alpha}_2$) 
\begin{equation*}
I=\mathcal{F}^{-1}\int_{0}^{t}e^{i(t-s)p(\xi )}|\xi |\frac{m}{|\xi |}\hat{
\alpha}(\xi -\eta )\hat{\alpha}(\eta )d\eta
\end{equation*}
where, from \eqref{DefOfPhim} we see that one can write 
\begin{equation*}
m=|\xi |n_{1}(\xi )n_{2}(\xi -\eta )n_{3}(\eta )
\end{equation*}
with $n_{i}(\zeta )=\frac{\zeta }{|\zeta |}\tilde{n}(\zeta )$ or $%
n_{i}(\zeta )=\tilde{n}(\zeta )$ for $\tilde{n}$ an $S^{0}$-symbol. In
particular, 
\begin{equation*}  \label{SymbolEst}
\Vert n_{i}(|\nabla |)f\Vert _{L^{r}}\lesssim \Vert f\Vert _{L^{r}}
\end{equation*}
for $1<r<\infty $. We use a standard energy estimate and the inclusion $%
L^\infty\subset W^{1,10}$ to get 
\begin{equation*}
\begin{split}
\Vert \frac{I}{|\xi |}\Vert _{L^{2}}& \lesssim \int_{0}^{t}\Vert \int_{%
\mathbb{R}^{3}}\frac{m}{|\xi |}\hat{\alpha}(\xi -\eta )\hat{\alpha}(\eta
)d\eta ds\Vert _{L^{2}}ds \\
& \lesssim \int_{0}^{t}\Vert \int_{\mathbb{R}^{3}}\left( n_{2}(\xi -\eta )%
\hat{\alpha}(\xi -\eta )\right) \left( n_{3}(\eta )\hat{\alpha}(\eta
)\right) d\eta \Vert _{L^{2}}ds \\
& \lesssim \int_{0}^{t}\Vert \left( n_{2}(|\nabla |)\alpha \right) \left(
n_{3}(|\nabla |)\alpha \right) \Vert _{L^{2}}ds \\
& \lesssim \int_{0}^{t}\Vert n_{2}(|\nabla |)\alpha \Vert _{L_{t}^{\infty
}L_{x}^{2}}\Vert n_{3}(|\nabla |)\alpha (s)\Vert _{L_{x}^{\infty }}ds \\
& \lesssim \Vert \alpha \Vert _{X}\int_{0}^{t}\Vert (1-\Delta )^{\frac{1}{2}%
}n_{3}(|\nabla |)\alpha (s)\Vert _{L_{x}^{10}}ds \\
& \lesssim \Vert \alpha \Vert _{X}\int_{0}^{t}\Vert (1-\Delta )^{\frac{1}{2}%
}\alpha (s)\Vert _{L_{x}^{10}} \lesssim \Vert \alpha \Vert
_{X}^{2}\int_{0}^{t}\frac{ds}{(1+s)^{\frac{16}{ 15}}} \\
&\lesssim \Vert \alpha \Vert _{X}^{2}.
\end{split}%
\end{equation*}
Next we control the contribution of the cubic term $\mathcal{N}$ as follows
using the fact that $e^{itp(\vert\nabla\vert)}$ is a unitary operator and %
\eqref{PropertiesOfN}, 
\begin{equation*}
\begin{split}
\Vert |\nabla |^{-1}\int_{0}^{t}e^{-i(t-s)p(|\nabla |)}\mathcal{N}(s)ds\Vert
_{L^{2}}& \lesssim \int_{0}^{t}\Vert |\nabla |^{-1}\mathcal{N}\Vert
_{L^{2}}ds \\
& \lesssim \Vert \alpha \Vert _{X}^{2}\int_{0}^{t} \frac{ds}{(1+s)^{\frac{16%
}{15}}}\lesssim \Vert \alpha \Vert _{X}^{2}.
\end{split}%
\end{equation*}
Combining the two above estimates give that 
\begin{equation}
\begin{split}
\Vert |\nabla |^{-1}\alpha\Vert _{L_{t}^{\infty }L_{x}^{2}}& \lesssim \Vert
|\nabla |^{-1}\alpha _{0}\Vert _{L^{2}}+\Vert \frac{I(\xi )}{ |\xi |}\Vert
_{L^{2}}+\Vert \int_{0}^{t}\frac{e^{-i(t-s)p(|\xi |)}}{|\xi |} \hat{\mathcal{%
N}}(s)ds\Vert _{L^{2}} \\
& \lesssim \Vert \alpha _{0}\Vert _{Y}+\Vert \alpha \Vert _{X}^{2}
\end{split}
\label{H1Norm}
\end{equation}
so that we control the first part in the $X$-norm.


\section{Bilinear Multiplier Theorem}

\label{SecMult}

\subsection{A general multiplier theorem}

In order to control the last part of the norm, we need to deal with bilinear
terms in \eqref{DefOfB}, \eqref{DefOfH} which involve convolution with a
singular symbol. Note that since $p(0)=0$, the symbol is quite singular on
the whole parameter space and especially near $(\xi ,\eta )=(0,0)$. In
particular, we cannot use the traditional Coifman-Meyer multiplier theorem 
\cite{CoiMey}, or a more refined version as in Muscalu, Pipher, Tao and
Thiele \cite{Mus,MusPipTaoThi} since in all these cases, the multiplier need
to satisfy some homogeneity conditions. In order to overcome this we use
estimates inspired from Gustafson, Nakanishi and Tsai \cite{GNT} that we
present now. Although most of the results in this subsection are essentially
contained in Gustafson, Nakanishi and Tsai \cite{GNT}, for selfcontainness,
we give a direct proof.

We introduce the following multiplier norm: 
\begin{equation}
\Vert \mathfrak{m}\Vert _{M_{\xi ,\eta }^{s,b}}=\sum_{N\in 2^{\mathbb{Z}%
}}\Vert P_{N}^{\eta }\mathfrak{m}(\xi ,\eta )\Vert _{L_{\xi }^{b}\dot{H}%
_{\eta }^{s}}  \label{MsNorm}
\end{equation}%
and we let $\mathcal{M}_{\xi ,\eta }^{s}=\mathcal{M}_{\xi ,\eta }^{s,\infty
} $, which will be the norm that we mostly use. To a multiplier $\mathfrak{m}
$, we associate the bilinear \textquotedblleft
pseudo-product\textquotedblright\ operator 
\begin{equation}
B[f,g]=\mathcal{F}_{\xi }^{-1}\int_{\mathbb{R}^{3}}\mathfrak{m}(\xi ,\eta )%
\hat{f}(\xi -\eta )\hat{g}(\eta )d\eta .  \label{b}
\end{equation}
Our goal in this section is to obtain robust estimates on $B$.

\begin{lemma}
\label{infinity}If $\Vert \mathfrak{m}\Vert _{L_{\xi }^{\infty }\dot{H}
_{\eta }^{s-\varepsilon }}+\Vert \mathfrak{m}\Vert _{L_{\xi }^{\infty }\dot{%
H
}_{\eta }^{s+\varepsilon }}<\infty $, then the $M_{\xi ,\eta }^{s}$-norm
of $\mathfrak{m}$ is finite.
\end{lemma}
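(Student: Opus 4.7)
My plan is to prove this by a standard Littlewood-Paley decomposition in the $\eta$ variable combined with a Bernstein-type inequality, splitting the dyadic sum into low and high frequency pieces and using the two different hypotheses on each regime.

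The strategy is the following. Fix any dyadic $N \in 2^{\mathbb{Z}}$ and work pointwise in $\xi$. Since $P_N^\eta$ is a Fourier multiplier in $\eta$ supported on $|\eta|\simeq N$, the Littlewood-Paley characterization of homogeneous Sobolev norms gives
\begin{equation*}
\Vert P_N^\eta \mathfrak{m}(\xi,\cdot)\Vert_{\dot{H}_\eta^s}\simeq N^s\Vert P_N^\eta \mathfrak{m}(\xi,\cdot)\Vert_{L^2_\eta}\simeq N^{\pm\varepsilon}\Vert P_N^\eta \mathfrak{m}(\xi,\cdot)\Vert_{\dot{H}_\eta^{s\mp\varepsilon}}.
\end{equation*}
Taking the supremum in $\xi$ and using that $P_N^\eta$ is bounded on $\dot{H}_\eta^{s\pm\varepsilon}$ uniformly in $N$ yields, for all $N$,
\begin{equation*}
\Vert P_N^\eta \mathfrak{m}\Vert_{L^\infty_\xi \dot{H}_\eta^s}\lesssim N^{\varepsilon}\Vert \mathfrak{m}\Vert_{L^\infty_\xi \dot{H}^{s-\varepsilon}_\eta}\quad\hbox{and}\quad \Vert P_N^\eta \mathfrak{m}\Vert_{L^\infty_\xi \dot{H}_\eta^s}\lesssim N^{-\varepsilon}\Vert \mathfrak{m}\Vert_{L^\infty_\xi \dot{H}^{s+\varepsilon}_\eta}.
\end{equation*}

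Now I split the sum defining the $M^s_{\xi,\eta}$ norm at $N=1$. For $N\le 1$, I use the first inequality, which gives the summable factor $N^{\varepsilon}$; for $N\ge 1$, I use the second inequality, which gives the summable factor $N^{-\varepsilon}$. Summing the two geometric series yields
\begin{equation*}
\sum_{N\in 2^{\mathbb{Z}}}\Vert P_N^\eta \mathfrak{m}\Vert_{L^\infty_\xi \dot{H}^s_\eta}\lesssim \Vert \mathfrak{m}\Vert_{L^\infty_\xi \dot{H}^{s-\varepsilon}_\eta}+\Vert \mathfrak{m}\Vert_{L^\infty_\xi \dot{H}^{s+\varepsilon}_\eta},
\end{equation*}
which is the desired conclusion.

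There is no serious obstacle here; the proof is essentially a bookkeeping argument. The only small point to notice is that everything is done uniformly in $\xi$ since the Littlewood-Paley operator acts only in the $\eta$ variable, so the $L^\infty_\xi$ norm passes freely through both $P_N^\eta$ and the Bernstein inequalities. The two hypothesized norms $\dot{H}^{s\pm\varepsilon}_\eta$ serve precisely to tame, respectively, the low-frequency and high-frequency dyadic pieces, with the gap $2\varepsilon$ producing the convergent geometric factor.
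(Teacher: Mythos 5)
Your proof is correct and follows essentially the same route as the paper: both use the Bernstein-type equivalence on each dyadic block to trade $\dot H^s_\eta$ for $\dot H^{s\mp\varepsilon}_\eta$ at the cost of a factor $N^{\pm\varepsilon}$, then split the dyadic sum at $N=1$ and sum the two geometric series.
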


\begin{proof}
Indeed, we have that 
\begin{equation*}
\begin{split}
\Vert P_{N}^{\eta }\mathfrak{m}(\xi ,\eta )\Vert _{L_{\xi }^{\infty }\dot{H}
_{\eta }^{s}}& \leq \min (N^{-\varepsilon }\Vert \mathfrak{m}\Vert _{L_{\xi
}^{\infty }\dot{H}_{\eta }^{s+\varepsilon }},N^{\varepsilon }\Vert \mathfrak{%
m}\Vert _{L_{\xi }^{\infty }\dot{H}_{\eta }^{s-\varepsilon }}),\hskip.2cm %
\hbox{so that} \\
\sum_{N}\Vert P_{N}^{\eta }\mathfrak{m}(\xi ,\eta )\Vert _{L_{\xi }^{\infty
} \dot{H}_{\eta }^{s}}& \lesssim \left(\sum_{N\leq 1}+\sum_{N\geq
1}\right)\Vert P_{N}^{\eta }\mathfrak{m}(\xi ,\eta )\Vert _{L_{\xi }^{\infty
} \dot{H}_{\eta }^{s}} \\
& \lesssim \sum_{N\leq 1}N^{\varepsilon }\Vert \mathfrak{m}\Vert _{L_{\xi
}^{\infty }\dot{H}_{\eta }^{s-\varepsilon }}+\sum_{N\geq 1}N^{-\varepsilon
}\Vert \mathfrak{m}\Vert _{L_{\xi }^{\infty }\dot{H}_{\eta }^{s+\varepsilon
}}<+\infty .
\end{split}%
\end{equation*}
\end{proof}

\begin{theorem}
\label{CorGNT} Suppose that $0\le s\le n/2$ and $\Vert \mathfrak{m}\Vert
_{M_{\eta ,\xi }^{s,\infty }}=\Vert \mathfrak{m}\Vert _{M_{\eta ,\xi
}^{s}}<\infty ,$ then 
\begin{equation}
\Vert B[f,g]\Vert _{L^{l_{1}^{\prime }}}\lesssim \Vert \mathfrak{m}\Vert
_{M_{\eta ,\xi }^{s}}\Vert f\Vert _{L^{l_{2}}}\Vert g\Vert _{L^{2}},
\end{equation}%
for $l_{1},l_{2}$ satisfying 
\begin{equation}
2\leq l_{1},l_{2}\leq \frac{2n}{n-2s}\text{ and }\frac{1}{l_{1}}+ \frac{1}{%
l_{2}}=1-\frac{s}{n}.  \label{CondpqBilEstGNT}
\end{equation}
\end{theorem}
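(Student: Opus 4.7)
My plan is to establish the estimate by combining a Littlewood--Paley decomposition of the multiplier with duality and a reduction to a bilinear Fourier-side inequality that can then be proven by classical tools (Cauchy--Schwarz, Minkowski's integral inequality, Plancherel, and Hausdorff--Young), together with interpolation to cover the full range of $(l_1, l_2)$.

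Specifically, I would first write $\mathfrak{m} = \sum_N P_N^\xi \mathfrak{m}$ and reduce, by the triangle inequality, to a bound on each dyadic piece $B_N[f, g]$ with constant $\|\mathfrak{m}_N\|_{L^\infty_\eta \dot H^s_\xi}$, and then sum over $N$ to obtain the $\|\mathfrak{m}\|_{M^s_{\eta, \xi}}$-norm. By duality, the norm $\|B_N[f,g]\|_{L^{l_1'}}$ is the supremum over $\|h\|_{L^{l_1}} \le 1$ of $|\langle B_N[f, g], h\rangle|$, and Parseval rewrites this trilinear pairing as $\int \mathfrak{m}_N(\xi,\eta) \hat f(\xi-\eta) \hat g(\eta) \overline{\hat h(\xi)}\, d\xi\, d\eta$. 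Cauchy--Schwarz in $\eta$ (using $\hat g \in L^2$) and H\"older in $\xi$ at the Sobolev exponent $r = 2n/(n-2s)$, together with the embedding $\dot H^s_\xi \hookrightarrow L^r_\xi$, reduce matters to the bilinear Fourier inequality
\begin{equation*}
\|\hat f(\xi - \eta) \hat h(\xi)\|_{L^2_\eta L^{r'}_\xi} \lesssim \|f\|_{L^{l_2}} \|h\|_{L^{l_1}}, \qquad r' = \frac{2n}{n+2s},
\end{equation*}
which is scaling-consistent with $1/l_1 + 1/l_2 = 1 - s/n$.

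The bilinear Fourier inequality I would prove at the endpoint $(l_2, l_1) = (2, r)$ by Minkowski's integral inequality (valid since $2/r' \ge 1$) together with Plancherel in $\eta$ (which gives $\|\hat f(\xi-\cdot)\|_{L^2_\eta} = \|f\|_{L^2}$) and Hausdorff--Young, and at the dual endpoint $(l_2, l_1) = (r, 2)$ by the substitution $(\xi, \eta) \mapsto (\eta - \xi, \eta)$, which interchanges the roles of $f$ and $h$ in the trilinear pairing while preserving the multiplier norm $\|\mathfrak{m}\|_{M^s_{\eta, \xi}}$ (since translations and reflections are isometries on $\dot H^s_\xi$ and commute with Littlewood--Paley in $\xi$); bilinear interpolation then covers the intermediate exponents. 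The main obstacle is the careful verification of the endpoint estimates, where the exponents arising from Sobolev embedding, Minkowski, Plancherel, and Hausdorff--Young must all align precisely with the condition $1/l_1 + 1/l_2 = 1 - s/n$; in particular, handling both endpoints of the range $[2, r]$ simultaneously requires the duality/symmetry argument above, since direct Hausdorff--Young on both $f$ and $h$ is only available when both exponents are at most $2$, while the hypothesis forces both to be at least $2$.
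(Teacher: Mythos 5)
Your approach is correct but genuinely different from the paper's. The paper proves the more general Lemma \ref{LemGNT} first, via the double Fourier identity $\hat f(\eta)\hat g(\xi-\eta)=\mathcal F_x^\eta\mathcal F_y^\xi f(x+y)g(y)$ followed by successive applications of Parseval; crucially, the Littlewood--Paley localization of the multiplier is turned into a \emph{spatial} cutoff $\chi(x/N)$ on the kernel, so that $\|\chi(x/N)\|_{L^{n/s}}\simeq N^s$ feeds into Young's inequality and supplies the factor $N^s$ that converts $\|\mathfrak m_N\|_{L^b_\xi L^2_\eta}$ into the $\dot H^s$ norm. Your argument instead stays entirely on the Fourier side: the Sobolev embedding $\dot H^s_\xi\hookrightarrow L^r_\xi$ ($r=2n/(n-2s)$) absorbs the multiplier directly, the Littlewood--Paley sum is used only for the Besov-type summability hidden in the $M^s_{\eta,\xi}$ norm, and the remaining bilinear Fourier estimate is obtained at the two endpoints $(l_2,l_1)=(2,r)$ (Minkowski, Plancherel, Hausdorff--Young) and $(r,2)$ (the substitution in the trilinear pairing, which permutes $f$ and $h$ while leaving the $M^s_{\eta,\xi}$ norm unchanged) and then interpolated. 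This buys a more elementary route that avoids the paper's double-Fourier-transform trick and the auxiliary Lemma \ref{LemGNT} entirely, at the mild cost that the Sobolev embedding degenerates at $s=n/2$ (where $r=\infty$), a boundary case the paper's spatial-localization argument does handle; you should also note explicitly that the substitution step must be performed on the \emph{trilinear pairing} rather than on the bilinear Fourier inequality itself, since $\|\hat f(\xi-\eta)\hat h(\xi)\|_{L^2_\eta L^{r'}_\xi}\lesssim\|f\|_{L^r}\|h\|_{L^2}$ is false as stated and only the pairing with $g\in L^2$ enjoys the needed $f\leftrightarrow h$ symmetry. Finally, since $g$'s exponent is fixed at $L^2$, plain Riesz--Thorin on $f\mapsto B_N[f,g]$ suffices and no genuinely bilinear interpolation theorem is needed.
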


\begin{remark}
Actually, by changing coordinates $(\xi ,\eta )$ to $(\xi ,\zeta =\xi -\eta
) $, we could replace the norm $M_{\xi ,\eta }^{s}$ by 
\begin{equation*}
\min \left( \Vert \mathfrak{m}\Vert _{M_{\xi ,\eta }^{s}},\Vert \mathfrak{m}%
\Vert _{M_{\xi ,\zeta }^{s}}\right) .
\end{equation*}
\end{remark}

Theorem \ref{CorGNT} follows by duality from the following estimate which is
an adaptation of an estimate from Gustafson, Nakanishi and Tsai \cite{GNT}.

\begin{lemma}
\label{LemGNT} Let $0\le s\le n/2$, $2\leq l_{1},l_{2},l_{3}\leq \frac{2n}{%
n-2s}$, then 
\begin{equation}
\Vert B[f,g]\Vert _{L^{l_{1}}}\lesssim \Vert \mathfrak{m}\Vert _{\mathcal{M}
_{\xi ,\eta }^{s,b}}\Vert f\Vert _{L^{l_{2}}}\Vert g\Vert _{L^{l_{3}}}
\label{BilEstGNT}
\end{equation}
for all $f\in L^{l_2}$, $g\in L^{l_3}$, where $\frac{1}{b}+\frac{1}{l_{1}}=%
\frac{1}{2}$, $\frac{1}{l_{2}}+\frac{1}{ l_{3}}=1-\frac{s}{n}.$
\end{lemma}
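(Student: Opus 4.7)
\emph{Reduction to a single scale.} The plan is to start from a Littlewood--Paley decomposition of $\mathfrak{m}$ in $\eta$: write $\mathfrak{m} = \sum_{N \in 2^{\mathbb{Z}}} P_N^\eta \mathfrak{m}$ and let $B_N[f,g]$ denote the corresponding pieces of $B[f,g]$. Since $\|\cdot\|_{\mathcal{M}^{s,b}_{\xi,\eta}}$ is by definition the $\ell^1$-sum in $N$, it suffices to establish the single-scale bound
\begin{equation*}
\|B_N[f,g]\|_{L^{l_1}} \lesssim \|P_N^\eta \mathfrak{m}\|_{L^b_\xi \dot H^s_\eta} \|f\|_{L^{l_2}}\|g\|_{L^{l_3}}
\end{equation*}
uniformly in $N$ and sum. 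Since $P_N^\eta \mathfrak{m}$ has $\eta$-Fourier support at scale $N$, one may replace $g$ by $\tilde P_N g$ for a fattened Littlewood--Paley projector (built from the auxiliary cutoff $\chi$ with $\chi\varphi=\varphi$) at no cost, so $g$ is effectively $\eta$-frequency localized at scale $N$ as well.

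\emph{Physical-space representation and single-scale estimate.} The key step is to introduce the $\xi$-dependent kernel
\begin{equation*}
K_N(\xi,y) = \mathcal{F}^{-1}_\eta\bigl[P_N^\eta \mathfrak{m}(\xi,\cdot)\bigr](y),
\end{equation*}
which is $y$-frequency localized at scale $N$ uniformly in $\xi$. A direct Fourier-inversion computation gives
\begin{equation*}
\widehat{B_N[f,g]}(\xi) = \int_{\mathbb{R}^n} e^{-ix\cdot\xi}\, f(x)\,\bigl[K_N(\xi,\cdot) \ast g\bigr](x)\,dx.
\end{equation*}
Because $l_1\geq 2$, Hausdorff--Young reduces $\|B_N[f,g]\|_{L^{l_1}}$ to $\|\widehat{B_N[f,g]}\|_{L^{l_1'}_\xi}$. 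Pointwise in $\xi$, Hölder's inequality in $x$ with exponents $(l_2,l_2')$ gives
\begin{equation*}
\bigl|\widehat{B_N[f,g]}(\xi)\bigr| \leq \|f\|_{L^{l_2}_x}\,\bigl\|K_N(\xi,\cdot) \ast g\bigr\|_{L^{l_2'}_x}.
\end{equation*}
The convolution is handled using that $K_N\ast g$ is $y$-frequency localized at scale $N$, so that Bernstein's inequality moves between $L^p$ spaces at the cost of powers of $N$. Combined with Plancherel in $\eta$ together with Bernstein on the frequency-localized multiplier, one has
\begin{equation*}
\|K_N(\xi,\cdot)\|_{L^r_y} \lesssim N^{n/2 - n/r - s} \|P_N^\eta \mathfrak{m}(\xi,\cdot)\|_{\dot H^s_\eta}
\end{equation*}
for all $r \in [2,\infty]$. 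Integrating in $\xi$ with the $L^b_\xi$-norm of the right-hand side, the powers of $N$ cancel exactly because of the scaling identities $\tfrac{1}{l_2}+\tfrac{1}{l_3} = 1-\tfrac{s}{n}$ and $\tfrac{1}{b}+\tfrac{1}{l_1} = \tfrac{1}{2}$, while the condition $l_i \leq \tfrac{2n}{n-2s}$ keeps the intermediate exponents in the admissible range for Bernstein and Sobolev embedding $\dot H^s_\eta \hookrightarrow L^{2n/(n-2s)}_\eta$. This closes the single-scale bound, and summing in $N$ using $\|\tilde P_N g\|_{L^{l_3}} \lesssim \|g\|_{L^{l_3}}$ and the definition of $\mathcal{M}^{s,b}_{\xi,\eta}$ finishes the proof.

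\emph{Main obstacle.} The hard part is the asymmetric, non-homogeneous nature of the multiplier norm: $\mathfrak{m}$ is controlled only in $L^b$ with respect to $\xi$ and in $\dot H^s$ with respect to $\eta$, with no joint regularity and no symbol-type estimates. This rules out classical Coifman--Meyer arguments, which demand bounds on all mixed derivatives. A more concrete obstruction is that a naive application of Young's inequality to $K_N \ast g$ would require $K_N \in L^a$ with $1/a = 1+s/n > 1$ whenever $s>0$, which is impossible. The Littlewood--Paley decomposition and the resulting frequency localization at scale $N$ rescue the argument precisely because Bernstein can then act as a surrogate for fractional integration on each single scale, with the dyadic summation absorbed into the $\ell^1$-summation that defines $\mathcal{M}^{s,b}_{\xi,\eta}$.
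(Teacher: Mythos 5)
The proof has a genuine gap, centered on a confusion between physical and Fourier localization of the multiplier kernel. The projector $P_N^\eta$ is the standard Littlewood--Paley projector \eqref{DefLitPalOp} applied in the $\eta$-variable, so it localizes the $\eta$-\emph{Fourier transform} $\mathcal F_\eta^{-1}\mathfrak m$ to the annulus $\{|y|\simeq N\}$. Thus the kernel $K_N(\xi,y)=\mathcal F_\eta^{-1}[P_N^\eta\mathfrak m](\xi,\cdot)(y)$ that you introduce is \emph{compactly supported} in $y$ at scale $N$, not frequency-localized. Your key inequality $\|K_N(\xi,\cdot)\|_{L^r_y}\lesssim N^{n/2-n/r-s}\|P_N^\eta\mathfrak m(\xi,\cdot)\|_{\dot H^s_\eta}$ is then a Bernstein-type bound for a compactly supported function, and for $r>2$ it is false: since $\|P_N^\eta\mathfrak m(\xi,\cdot)\|_{\dot H^s_\eta}\simeq N^s\|K_N(\xi,\cdot)\|_{L^2_y}$, the claim reduces to $\|K_N\|_{L^r}\lesssim N^{n/2-n/r}\|K_N\|_{L^2}$, which fails for a narrow bump of width $\rho\to 0$ sitting on the annulus $|y|\simeq N$. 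Correspondingly, the assertion that $K_N * g$ is ``$y$-frequency localized at scale $N$'' has physical and frequency sides switched; convolution with a compactly supported kernel does not gain integrability, so one cannot pass from $\|g\|_{L^{l_3}}$ to $\|K_N(\xi,\cdot)*g\|_{L^{l_2'}}$ with $l_2'<l_3$ by Young/Bernstein. Your own ``main obstacle'' remark makes the difficulty explicit, but the Littlewood--Paley decomposition in the standard convention does not resolve it in the way you claim.

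There is also a second, structural, mismatch. After your Hölder in $x$, the bound on $|\widehat{B_N}(\xi)|$ depends on $\xi$ only through a quantity controlled by $\|\mathfrak m_N(\xi,\cdot)\|_{\dot H^s_\eta}$, so taking the $L^{l_1'}_\xi$-norm produces an $L^{l_1'}_\xi\dot H^s_\eta$ bound; but the target norm is $L^b_\xi\dot H^s_\eta$, and $1/l_1'-1/b=1/2$, so there is a missing $L^2_\xi$ factor that cannot be manufactured. The paper's proof generates that factor differently: it pairs $B_N[f,g]$ against a test function $h$, uses $\hat f(\eta)\hat g(\xi-\eta)=\mathcal F_x^\eta\mathcal F_y^\xi[f(x+y)g(y)]$ together with Parseval to transfer the Fourier transforms onto the multiplier, reads off the cutoff $\chi(x/N)$ from the compact $x$-support of $\mathcal F_\eta^x\mathfrak m_N$, and applies Cauchy--Schwarz \emph{in the variable $x$ dual to $\eta$}. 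This yields the factored integrand $|\hat h(\xi)|\,\|\mathcal F_\eta^x\mathfrak m_N(\xi,\cdot)\|_{L^2_x}\,\|\mathcal F_y^\xi\{\chi(x/N)f(x+y)g(y)\}\|_{L^2_x}$, for which Hölder in $\xi$ with exponents $(l_1,b,2)$ matches all norms, Plancherel produces $\|\chi(x/N)f(x+y)g(y)\|_{L^2_{x,y}}$, and the $N^s$ gain comes from a plain Hölder/Young estimate on this last quantity (not from Bernstein). Performing the Cauchy--Schwarz in the kernel variable dual to $\eta$, rather than in the physical $x$-variable as you do, is precisely the step needed to make both the $N$-powers and the $\xi$-exponents close.
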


\begin{proof}
We consider $\mathfrak{m}$ with finite $\mathcal{M}_{\xi ,\eta }^{s,b}$
norm. Let $\mathcal{F}_{x}^{\eta }$ denote the Fourier transform from $%
x\rightarrow \eta .$ By definition, we have 
\begin{equation}
\hat{f}(\eta )\hat{g}(\xi -\eta )=\mathcal{F}_{x}^{\eta }\mathcal{F}%
_{y}^{\xi }f(x+y)g(y)  \label{DoubleFT}
\end{equation}%
and we let $\mathfrak{m}_{N}(\xi ,\eta )=P_{N}^{\eta }\mathfrak{m}(\xi ,\eta
)$ so that $\mathcal{F}_{z}^{\eta }\mathfrak{m}_{N}(\xi ,\eta )=\chi(\frac{z 
}{N})\mathcal{F}_{\eta }^{z}\mathfrak{m}_N(\xi ,\eta )$. Using first
Parseval's equality in $x$, then in $\eta $ and then in $\xi$, we see that 
\begin{equation*}
\begin{split}
\int_{\mathbb{R}^{n}}B[f,g](x)h(x)dx& =\int_{\mathbb{R}^{2n}}\mathfrak{m}
_{N}(\xi ,\eta )\hat{h}(\xi )\hat{f}(\eta )\hat{g}(\xi -\eta )d\eta d\xi \\
& =\int_{\mathbb{R}^{n}}\hat{h}(\xi )\int_{\mathbb{R}^{n}}\mathfrak{m}
_{N}(\xi ,\eta )\left( \mathcal{F}_{x}^{\eta }\mathcal{F}_{y}^{\xi
}f(x+y)g(y)\right) d\eta d\xi \\
& =\int_{\mathbb{R}^{n}}\hat{h}(\xi )\int_{\mathbb{R}^{n}}\mathcal{F}
_{x}^{\eta }\mathfrak{m}_{N}(\xi ,\eta )\left( \mathcal{F}_{y}^{\xi
}f(x+y)g(y)\right) d\eta d\xi \\
& =\int_{\mathbb{R}^{n}}\hat{h}(\xi )\int_{\mathbb{R}^{n}}\left( \chi(\frac{x%
}{N})\mathcal{F}_{\eta }^{x}\mathfrak{m}_N(\xi ,\eta )\right) \left(\mathcal{%
F}_{y}^{\xi }f(x+y)g(y)\right) dxd\xi \\
& =\int_{\mathbb{R}^{n}}\hat{h}(\xi )\int_{\mathbb{R}^{n}}\left(\mathcal{F}%
_{\eta }^{x}\mathfrak{m}_N(\xi ,\eta )\right) \mathcal{F} _{y}^{\xi} \left(
\chi (\frac{ x}{N})f(x+y)g(y)\right) dxd\xi
\end{split}%
\end{equation*}

We then use Cauchy-Schwarz's inequality for the inner integral for $x,$ and
then use the H\"{o}lder inequality with $\frac{1}{a}+\frac{1}{b}=\frac{1}{2}$
to get 
\begin{eqnarray*}
&&\int_{\mathbb{R}^{n}}|\hat{h}(\xi )\vert \Vert \mathcal{F} _{\eta }^{x}%
\mathfrak{m}_N(\xi ,\eta )\Vert_{L_{x}^{2}}(\xi )\Vert \mathcal{F} _{y}^{\xi
}\{\chi (\frac{ x}{N}) \left( f(x+y)g(y)\right) \}\Vert_{L_{x}^{2}}(\xi )d\xi
\\
&\leq &\Vert\hat{h}\Vert_{L^{a}_\xi}\Vert \mathfrak{m}_N(\xi ,\eta
)\Vert_{L_{\xi }^{b}(L_{\eta }^{2})}\Vert\mathcal{F}_{y}^{\xi }\{\chi (\frac{
x}{N})\left( f(x+y)g(y)\right) \}\Vert_{L_{x,\xi }^{2}} \\
&\leq &\Vert h\Vert_{L^{a^{\prime }}_x}\Vert\mathfrak{m}_N(\xi ,\eta
)\Vert_{L_{\xi }^{b}(L_{\eta }^{2})}\Vert \chi (\frac{ x}{N}%
)f(x+y)g(y)\Vert_{L_{x,y}^{2}},
\end{eqnarray*}
where we have used the Hausdroff-Young's inequality for $a>2,$ and the
Parseval's equality in $\eta $ for the second factor, as well as the
Parseval's equality in $\xi $ for the third factor. Finally, since $%
\Vert\chi (\frac{ x}{N})\Vert_{L^{n/s}}\lesssim N^{s},$ we employ the
Hardy-Littlewood Young's inequality with $\frac{s}{n}+\frac{1}{l_{2}}+\frac{1%
}{l_{3}}=1$ to get that 
\begin{equation*}
\Vert \chi (\frac{ x}{N})f(x+y)g(y)\Vert _{L_{x,y}^{2}}\lesssim N^{s}\Vert
f\Vert _{L^{l_{1}}}\Vert g\Vert _{L^{l_{2}}}.
\end{equation*}
Combining $N^{s}$ with $||\mathfrak{m}(\xi ,\eta )||_{L_{\xi }^{b}(L_{\eta
}^{2})}$ with \eqref{CondpqBilEstGNT}, we complete the proof.
\end{proof}

In order to prove theorem \ref{CorGNT}, it suffices to remark that 
\begin{equation*}
\begin{split}
\int_{\mathbb{R}^{n}}B[f,g](x)h(x)dx& =\int_{\mathbb{R}^{n}}\mathcal{F}%
_{x}^{\xi }B[f,g](\xi )\hat{h}(\xi )d\xi \\
& =\int_{\mathbb{R}^{2n}}\hat{f}(\eta )\mathfrak{m}(\xi ,\eta )\hat{h}(\xi )%
\hat{g}(\xi -\eta )d\xi d\eta \\
& =\int_{\mathbb{R}^{n}}f(x)B^{\ast }[h,\bar{g}](x)dx.
\end{split}%
\end{equation*}%
Applying \eqref{BilEstGNT} to $B^{\ast }$ with $l_{1}=2$ to the bilinear
operator corresponding to the multiplier $\mathfrak{m}^{\ast }(\xi ,\eta )=%
\mathfrak{m}(\eta ,\xi )$, we get the Theorem.

\subsection{Multiplier Analysis}

The control of $L^{10}$ norm is the main mathematical difficulty in this
paper. In this subsection, we prove the relevant estimate to apply Theorem %
\ref{CorGNT} to the multipliers that appear in our analysis.

\begin{lemma}
\label{EstimPhiGen} Let $a=b+c\in \mathbb{R}^{3},$ and let $|c|\leq \min
\{|a|,|b|\},$ then%
\begin{equation}
|p(a)-p(b)-p(c)|\gtrsim |c|\{1-\cos [c,a]+1-\cos [b,a]\}+\frac{|a||b||c|}{%
(1+|a||b|)(1+|c|^{2})}.  \label{EstimOnPhi}
\end{equation}%
where $[\cdot ,\cdot ]$ denote the angle between two vectors.
\end{lemma}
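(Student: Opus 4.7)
My plan is to exploit the factorization $p(r)=r\,q(r)$ with $q$ strictly positive and strictly decreasing, together with the lower bound $p'\gtrsim 1$ coming from \eqref{Estimp}. This structure makes $p$ strictly subadditive, and the desired lower bound will split cleanly into a ``collinear'' (size) contribution and an ``angular'' one.

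First, I would collect reductions from the hypothesis $|c|\le \min(|a|,|b|)$. The triangle inequality gives $|a|\sim|b|$, and the law of cosines shows that the angles $\theta_1=[a,c]$ and $\theta_2=[a,b]$ both satisfy $\cos\theta_i\ge 1/2$. Since $a=b+c$ lies in the plane of $b$ and $c$, I can work in that plane and verify the angular identity $[b,c]=\theta_1+\theta_2$ (placing $a$ along an axis, $b$ and $c$ end up on opposite sides of $a$).

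Next I would split
\begin{equation*}
p(b)+p(c)-p(a) = \bigl[p(|b|)+p(|c|)-p(|b|+|c|)\bigr]+\bigl[p(|b|+|c|)-p(|a|)\bigr],
\end{equation*}
noting that both brackets are non-negative: the first by subadditivity (which follows from $q$ decreasing, via $p(r+s)=(r+s)q(r+s)\le r\,q(r)+ s\,q(s)$), the second by monotonicity of $p$ together with $|a|\le|b|+|c|$. For the first bracket, setting $r=|b|$, $s=|c|$ with $s\le r$, I write
\begin{equation*}
p(r)+p(s)-p(r+s)\ge s\bigl[q(s)-q(r+s)\bigr] = s\int_s^{r+s}\frac{u\,du}{q(u)(1+u^2)^2},
\end{equation*}
and since $q\sim 1$ uniformly, explicit integration gives the lower bound $\gtrsim \frac{s r^2}{(1+s^2)(1+r^2)}$, which together with $|a|\sim|b|$ matches the second term on the right of \eqref{EstimOnPhi}. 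For the second bracket, $p'\gtrsim 1$ and the identity
\begin{equation*}
|b|+|c|-|a| = \frac{2|b||c|(1-\cos[b,c])}{|a|+|b|+|c|} \sim |c|(1-\cos[b,c])
\end{equation*}
yield $p(|b|+|c|)-p(|a|)\gtrsim |c|(1-\cos[b,c])$.

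It then remains to convert $1-\cos[b,c]$ into $(1-\cos\theta_1)+(1-\cos\theta_2)$. Using $[b,c]=\theta_1+\theta_2$ together with
\begin{equation*}
1-\cos(\theta_1+\theta_2) = (1-\cos\theta_1)+\cos\theta_1\,(1-\cos\theta_2)+\sin\theta_1\,\sin\theta_2,
\end{equation*}
and $\cos\theta_i\ge 1/2$, I get $1-\cos[b,c]\gtrsim (1-\cos\theta_1)+(1-\cos\theta_2)$, which closes the estimate. The main (minor) obstacle is the angular bookkeeping: verifying the planar identity $[b,c]=\theta_1+\theta_2$ and combining the two $1-\cos$ terms without losing constants. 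The rest is one-variable calculus on $p$ and $q$; the structural input making the argument robust is the subadditivity of $p$ inherited from monotonicity of $q$.
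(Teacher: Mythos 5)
Your decomposition is genuinely different from the paper's and, up to one fixable step, gives a correct proof. The paper writes
\begin{equation*}
p(a)-p(b)-p(c)=\bigl(|a|-|b|-|c|\bigr)q(a)-|b|\bigl(q(b)-q(a)\bigr)-|c|\bigl(q(c)-q(a)\bigr),
\end{equation*}
reads the angular contribution off the first term via the projection identity $|a|=|b|\cos[b,a]+|c|\cos[c,a]$, and extracts the size contribution from $|c|(q(a)-q(c))$ by a three-way case analysis on $|a|$ (small, intermediate, large). You instead split at $p(|b|)+p(|c|)-p(|b|+|c|)$ plus $p(|b|+|c|)-p(|a|)$: subadditivity of $p$ (equivalently, monotonicity of $q$) handles the first bracket in a single integral calculation with no case split, while the second bracket delivers the angular term from the cosine defect $|b|+|c|-|a|\sim|c|(1-\cos[b,c])$. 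Your route avoids the paper's case analysis for the size term, at the price of having to convert $1-\cos[b,c]$ into $(1-\cos[a,b])+(1-\cos[a,c])$.

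The one genuine slip is in that last conversion: the law of cosines does \emph{not} give $\cos[a,c]\ge 1/2$. Under $|c|\le\min(|a|,|b|)$ it is true that $\cos[a,b]\ge 1/2$, since $|c|^2\le|a||b|$ gives $\cos[a,b]\ge\frac{|a|^2-|a||b|+|b|^2}{2|a||b|}\ge\frac12$; but $\cos[a,c]$ can be close to $0$ or even negative (take $|b|=1$, $|a|=|c|=0.6$, which is a valid triangle). Your addition formula still yields what you need if you write it with the roles of the two angles interchanged, so that the good factor $\cos[a,b]\ge\frac12$ multiplies $(1-\cos[a,c])$; or, more simply, since $1-\cos$ is increasing on $[0,\pi]$ and $[b,c]=\theta_1+\theta_2\ge\max(\theta_1,\theta_2)$, one has
\begin{equation*}
1-\cos[b,c]\ge\max_i\bigl(1-\cos\theta_i\bigr)\ge\tfrac12\bigl[(1-\cos\theta_1)+(1-\cos\theta_2)\bigr],
\end{equation*}
which sidesteps the addition formula and the $\cos\ge\frac12$ claim entirely. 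With that repair your argument is sound.
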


\begin{proof}
We first note that if $|b|\geq |a|,$ then $p(b)\geq p(a)$ and 
\begin{equation*}
|p(a)-p(b)-p(c)|\geq p(c)\gtrsim |c|
\end{equation*}%
and the lemma follows. We assume $|b|\leq |a|.$ We remark that, as written
in \eqref{DefOfp}, $p(r)=rq(r)$, where $1\leq q(r)\leq q(0)=\sqrt{2}$ and 
\begin{equation}
\begin{split}
q^{\prime }(r)& =-\frac{r}{(1+r^{2})^{2}\sqrt{\frac{2+r^{2}}{1+r^{2}}}}\sim
_{r\rightarrow \infty }-\frac{1}{r^{3}} \\
q^{\prime }(0)& =0,\hskip.3cmq^{\prime \prime }(0)=-\frac{1}{\sqrt{2}}.
\end{split}
\label{EstimOnQ}
\end{equation}%
From this, we get that 
\begin{equation}
p(a)-p(b)-p(c)\leq \left[ |a|-|b|-|c|\right] q(a)-|b|\left( q(b)-q(a)\right)
-|c|\left( q(c)-q(a)\right) .  \label{EstonPhiABCGenPos}
\end{equation}%
From \eqref{EstimOnQ}, we see that $q$ is decreasing and hence each term is
non positive. Remarking that $|a|=|b|\cos [b,a]+|c|\cos [c,a]$, the first
term above gives the first term on the right hand side in \eqref{EstimOnPhi}.

We now consider the last term in the right hand side. Notice first that if $%
\cos [c,a]\leq 9/10$, then the last term is bounded by $\vert c\vert(1-\cos[%
c,a])$ and the lemma is clearly valid. So we can assume that $c$ and $a$ are
almost collinear with $\cos [c,a]\geq 9/10 $. In which case, we get that $%
|a|\geq 4/3|c|$ and 
\begin{equation*}
|a|-|c|\sim |b|\sim |a|.
\end{equation*}%
Using \eqref{EstimOnQ}, we see that there exists $\delta >0$ such that $%
-s\leq q^{\prime }(s)\leq -\frac{s}{2}$ for $0\leq s\leq \delta $.
Consequently, if $|a|\leq \delta $, we get that 
\begin{equation*}
\begin{split}
|c|(q(a)-q(c))& =|c|\int_{|c|}^{|a|}q^{\prime }(s)ds\leq -|c|\frac{%
|a|^{2}-|c|^{2}}{4}\lesssim -|a||c|(|a|-|c|)\lesssim-|a||b||c|.
\end{split}%
\end{equation*}
On the other hand, since $q^{\prime }(r)\sim -r^{-3}$ at $\infty $, we see
that 
\begin{equation*}
q(a)-q(c)=\int_{|c|}^{|a|}q^{\prime }(s)ds\sim _{\infty }-\int_{|c|}^{|a|} 
\frac{ds}{s^{3}}=\frac{|c|^{2}-|a|^{2}}{2|a|^{2}|c|^{2}}\lesssim -\frac{
|a|-|c|}{|a||c|^{2}}
\end{equation*}
so that if $|c|\geq \delta ^{-1}$ is sufficiently large, we get that 
\begin{equation*}
|c|(q(a)-q(c))\lesssim-\frac{1}{|c|}.
\end{equation*}
Finally, in the last case $\delta\le\vert a\vert\le\delta^{-1}$ and $%
|a|=|c|+(|a|-|c|)\geq |c|+\delta /2$. Therefore, 
\begin{equation*}
\int_{|c|}^{|a|}q^{\prime }(s)ds\lesssim \int_{|c|}^{|c|+\delta /2}q^{\prime
}(s)ds\lesssim -\frac{\delta }{2}q^{\prime }(2\delta ^{-1})
\end{equation*}
and we recover the last term once again.
\end{proof}

In the remaining part of this section, we consider the triangle with
vertices $\xi ,\eta ,\xi -\eta $ and let $\theta $ be the angle between $\xi 
$ and $\eta $ ($0\leq \theta \leq \pi $), $\gamma $ the angle between $\xi $
and $\xi -\eta $ ($0\leq \gamma \leq \pi $) and we let the angle between $%
\eta $ and $\eta -\xi $ by $\pi -\beta $ such that $\beta =\gamma +\theta .$
We note that $\sin \frac{\beta }{2}\leq \frac{\beta }{2}$ and $\sin \frac{%
\beta }{2}\backsim \beta $ for $0\leq \beta \leq \pi $ so that 
\begin{equation*}
1-\cos \beta =2\sin ^{2}\frac{\beta }{2}\backsim \beta ^{2}.
\end{equation*}

We now obtain general bounds on the multipliers that arise in our analysis.
We first focus on the multiplier associated with the phase $\Phi_1$. In the
end, in Section \ref{SecL10}, we recover the bounds on the other multipliers
using symmetry.

\begin{lemma}
\label{EstimDerOfPhi} The following estimates on $\Phi_1$ are globally true: 
\begin{align}
|\partial _{\xi }\Phi _{1}(\xi ,\eta )|&\lesssim \frac{|\eta |}{\langle \max
\{|\xi -\eta |,|\xi |\}\rangle \langle \min \{|\xi -\eta |,|\xi |\}\rangle
^{2}}+|\sin \gamma |,\text{ \ \ \ \ \ }  \label{phixi} \\
|\partial _{\eta }\Phi _{1}(\xi ,\eta )|& \lesssim \frac{|\xi |}{\langle
\max \{|\xi -\eta |,|\eta |\}\rangle \langle \min \{|\xi -\eta |,|\eta
|\}\rangle ^{2}}+|\sin \beta |,  \label{phieta} \\
|\Delta _{\xi }\Phi _{1}(\xi ,\eta )|& \lesssim \frac{|\eta |}{\langle \max
\{|\xi -\eta |,|\xi |\}\rangle \langle \min \{|\xi -\eta |,|\xi |\}\rangle
^{3}}+\frac{|\eta |}{|\xi -\eta ||\xi |} ,\text{ \ \ \ \ \ }  \label{phixixi}
\\
|\Delta _{\eta }\Phi _{1}(\xi ,\eta )|& \lesssim \frac{1}{%
\min(\vert\xi-\eta\vert,\vert\eta\vert)}  \label{phietaeta}
\end{align}
for all $\xi,\eta\in\mathbb{R}^3$.
\end{lemma}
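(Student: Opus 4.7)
The plan is to compute each derivative of $\Phi_1$ directly in polar form via
\[
\nabla_\xi p(|\xi|) = p'(|\xi|)\frac{\xi}{|\xi|}, \qquad \Delta_\xi p(|\xi|) = p''(|\xi|) + \frac{2}{|\xi|}p'(|\xi|),
\]
and analogous identities for the derivatives in $\eta$. Everything then reduces to the three pointwise scalar bounds
\[
|p'(r)| \lesssim 1, \qquad |p''(r)| \lesssim \frac{r}{(1+r^2)^2}, \qquad |p'''(r)| \lesssim \frac{1}{(1+r^2)^2},
\]
which follow from the explicit formulas \eqref{Estimp} together with the expansion $p(r) = r + O(r^{-1})$ at infinity and continuity at $0$.

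For \eqref{phixi} I would set $A = |\xi|$, $B = |\xi-\eta|$, $u = \xi/A$, $v = (\xi-\eta)/B$, and decompose
\[
\nabla_\xi \Phi_1 = p'(A)u - p'(B)v = \bigl(p'(A) - p'(B)\bigr) u + p'(B)(u - v).
\]
Writing $m = \min(A,B)$, $M = \max(A,B)$, the first summand is controlled by
\[
|p'(A) - p'(B)| \le \int_m^M \frac{s\,ds}{(1+s^2)^2} = \frac{M^2 - m^2}{2(1+m^2)(1+M^2)} \lesssim \frac{|\eta|}{\langle M\rangle \langle m\rangle^2},
\]
using $M - m \le \bigl||\xi|-|\xi-\eta|\bigr| \le |\eta|$ and $M+m \le 2\langle M\rangle$; the second is bounded by $p'(B)|u-v|$, which is the angular term $|\sin\gamma|$ in \eqref{phixi}. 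The bound \eqref{phieta} is obtained by exactly the same argument applied to $\nabla_\eta \Phi_1 = p'(B)v - p'(|\eta|)w$, with $w = \eta/|\eta|$, using $\bigl|B - |\eta|\bigr| \le |\xi|$ and the fact that $\beta$ is the angle between $v$ and $w$.

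For \eqref{phixixi} I would split
\[
\Delta_\xi \Phi_1 = \bigl[p''(A) - p''(B)\bigr] + 2\Bigl[\frac{p'(A)}{A} - \frac{p'(B)}{B}\Bigr],
\]
integrating $p'''$ exactly as above to bound the first bracket by $|\eta|/[\langle M\rangle \langle m\rangle^3]$. For the quotient difference I would apply the algebraic identity
\[
\frac{p'(A)}{A} - \frac{p'(B)}{B} = \frac{p'(A) - p'(B)}{A} + \frac{B-A}{AB}\,p'(B);
\]
the second summand yields exactly the $|\eta|/(|\xi||\xi-\eta|)$ term, while the first, combined with its $A \leftrightarrow B$ symmetric version, is absorbed into the two target terms via a short case split according as $B \le 1$ or $B \ge 1$. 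Finally, \eqref{phietaeta} is immediate from the triangle inequality and the global bounds above: $|\Delta_\eta \Phi_1| \lesssim 1 + 1/B + 1/|\eta| \lesssim 1/\min(B,|\eta|)$, with no difference estimate required.

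The main difficulty is purely bookkeeping: one must verify that the elementary integrals $\int_m^M s^j/(1+s^2)^2\,ds$ reproduce precisely the target weights $\langle M\rangle^{-1}\langle m\rangle^{-k}$ in every asymptotic regime (both small, both large, mixed), and that the auxiliary $|\eta|/[A\langle M\rangle \langle m\rangle^2]$ term coming from the quotient identity is correctly absorbed by $|\eta|/(|\xi||\xi-\eta|)$ and $|\eta|/[\langle M\rangle \langle m\rangle^3]$ in each regime of $B$. No multilinear cancellation or harmonic analysis enters the proof, which is entirely scalar calculus applied to the explicit formula $p(r) = r\sqrt{(2+r^2)/(1+r^2)}$.
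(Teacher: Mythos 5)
Your proposal follows essentially the same route as the paper: polar formulas for $\nabla_\xi p(|\xi|)$ and $\Delta_\xi p(|\xi|)$, a triangle-inequality split of $\nabla_\xi\Phi_1$ into the radial difference $p'(A)-p'(B)$ plus the angular term $p'(B)(u-v)$, and the same algebraic identity for $\frac{p'(A)}{A}-\frac{p'(B)}{B}$ in the Laplacian estimate. One place where your argument is genuinely cleaner than the paper's: for the central claim $|p'(A)-p'(B)|\lesssim |\eta|/[\langle M\rangle\langle m\rangle^2]$ (the paper's display \eqref{DifferenceinPPrime}), the paper uses a three-case split with hard-wired thresholds $10$ and $20$, whereas you compute $\int_m^M s\,ds/(1+s^2)^2=(M^2-m^2)/(2\langle m\rangle^2\langle M\rangle^2)$ in closed form and then use $M-m\le|\eta|$, $M+m\le 2\langle M\rangle$; this handles all regimes uniformly in one line, and the same cleanliness carries over to the integral of $p'''$ with only the mild case-checking you acknowledge.

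There is, however, one concrete gap, in \eqref{phietaeta}. You write $|\Delta_\eta\Phi_1|\lesssim 1+1/B+1/|\eta|\lesssim 1/\min(B,|\eta|)$, but the second inequality is false: take $B=|\eta|=100$ and the left side is about $1$ while the right side is $1/100$. The constant "$1$" comes from using the crude bound $|p''|\lesssim 1$ for $p''(B)+p''(|\eta|)$. Replace it with the sharper bound you already stated, $|p''(r)|\lesssim r/(1+r^2)^2\le 1/(2r)$ (since $(1+r^2)^2\ge 4r^2$), to get $|p''(B)|+|p''(|\eta|)|\lesssim 1/B+1/|\eta|$, after which the inequality $1/B+1/|\eta|\lesssim 1/\min(B,|\eta|)$ is true. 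This is exactly what the paper does by invoking $|p''(r)|\lesssim 1/(1+r^3)$ at this step. With that one-line repair, and the bookkeeping you already flagged for absorbing the auxiliary $|\eta|/[A\langle M\rangle\langle m\rangle^2]$ term, the proof closes.
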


\begin{proof}
Recall $\Phi_1=p(\xi)-p(\xi-\eta)-p(\eta)$. We compute 
\begin{equation*}
\begin{split}
\left\vert \nabla_{\xi }\Phi _{1}\right\vert & =\left\vert p^{\prime }(\xi )%
\frac{\xi }{|\xi |}-p^{\prime }(\xi -\eta )\frac{\xi -\eta }{|\xi -\eta |}%
\right\vert \\
& \leq \left\vert p^{\prime }(\xi )-p^{\prime }(\xi -\eta )\right\vert
+|p^{\prime }(\xi )|\left\vert \frac{\xi }{|\xi |}-\frac{\xi -\eta }{|\xi
-\eta |}\right\vert \\
& \lesssim
\left\vert\int_{\vert\xi-\eta\vert}^{\vert\xi\vert}p^{\prime\prime}(s)ds%
\right\vert +\left\vert \frac{\xi }{|\xi |}-\frac{\xi -\eta }{|\xi -\eta |}%
\right\vert \\
& \lesssim
\left\vert\int_{\vert\xi-\eta\vert}^{\vert\xi\vert}p^{\prime\prime}(s)ds%
\right\vert+2\sin \frac{\gamma }{2}.
\end{split}%
\end{equation*}
We claim that 
\begin{equation}  \label{DifferenceinPPrime}
\left\vert p^\prime(\xi)-p^\prime(\xi-\eta)\right\vert\lesssim\frac{|\eta |}{%
\langle \max \{|\xi -\eta |,|\xi |\}\rangle \langle \min \{|\xi -\eta |,|\xi
|\}\rangle ^{2}}
\end{equation}

In fact, if $\max(\vert\xi\vert,\vert\xi-\eta\vert)\le 20$, from %
\eqref{Estimp}, using the crude bound $\vert
p^{\prime\prime}(s)\vert\lesssim 1$, we obtain that 
\begin{equation*}
\left\vert\int_{\vert\xi-\eta\vert}^{\vert\xi\vert}p^{\prime\prime}(s)ds%
\right\vert\lesssim \left\vert \vert\xi\vert-
\vert\xi-\eta\vert\right\vert\lesssim \vert\eta\vert\lesssim \frac{|\eta |}{%
\langle \max \{|\xi -\eta |,|\xi |\}\rangle \langle \min \{|\xi -\eta |,|\xi
|\}\rangle ^{2}}.
\end{equation*}
Therefore, we only need to consider the case $\max\{\vert\xi\vert,\vert\xi-%
\eta\vert\}\ge 20$. Then, if $\min\{\vert\xi\vert,\vert\xi-\eta\vert\}\le 10$%
, we get that $\vert\eta\vert\simeq\max\{\vert\xi\vert,\vert\xi-\eta\vert\}$
and the right-hand side of \eqref{phixi} is of order $1$ and the claim is
valid. Finally, if $\min\{\vert\xi\vert,\vert\xi-\eta\vert\}\ge 10$, from %
\eqref{Estimp}, $p^{\prime \prime }(r)\sim \frac{1}{r^{3}}$ as $r\rightarrow
\infty ,$ and we conclude that claim since 
\begin{equation}  \label{min}
\begin{split}
\left\vert\int_{\vert\xi-\eta\vert}^{\vert\xi\vert}p^{\prime\prime}(s)ds%
\right\vert
&\lesssim\left\vert\int_{\min\{\vert\xi\vert,\vert\xi-\eta\vert\}}^{\max\{%
\vert \xi\vert,\vert\xi-\eta\vert\}} \frac{1}{ r^{3}}dr\right\vert \lesssim 
\frac{1}{\min\{\vert\xi\vert,\vert\xi-\eta\vert\}^{2}}-\frac{1}{%
\max\{\vert\xi\vert ,\vert \xi-\eta\vert\}^2} \\
&=\frac{\left\vert \vert\xi\vert- |\xi -\eta
|\right\vert(\vert\xi\vert+\vert \xi -\eta\vert)}{\vert\xi\vert^{2}|\xi
-\eta |^{2}}\lesssim \frac{\vert\eta\vert}{\min\{\vert\xi\vert,\vert\xi-\eta%
\vert\}^{2}\max\{\vert \xi\vert,\vert \xi -\eta\vert\}} \\
&\lesssim \frac{|\eta |}{\langle \max \{|\xi -\eta |,|\xi |\}\rangle \langle
\min \{|\xi -\eta |,|\xi |\}\rangle ^{2}}.
\end{split}%
\end{equation}

Similarly, as in \eqref{DifferenceinPPrime}, 
\begin{equation*}
\begin{split}
\left\vert \nabla _{\eta }\Phi _{1}\right\vert & =\left\vert -p^{\prime
}(\xi -\eta )\frac{\xi -\eta }{|\xi -\eta |}-p^{\prime }(\eta )\frac{\eta }{%
|\eta |}\right\vert \\
& \leq \left\vert p^{\prime }(\eta )-p^{\prime }(\xi -\eta )\right\vert
+|p^{\prime }(\xi )|\left\vert \frac{\eta }{|\eta |}-\frac{\xi -\eta }{|\xi
-\eta |}\right\vert \\
& \leq \left\vert p^{\prime }(\eta )-p^{\prime }(\xi -\eta )\right\vert +2%
\sqrt{2}\sin \frac{\beta }{2} \\
&\lesssim\frac{|\xi |}{\langle \max \{|\xi -\eta |,|\eta |\}\rangle \langle
\min \{|\xi -\eta |,|\eta |\}\rangle ^{2}}+|\sin \beta |.
\end{split}%
\end{equation*}
Using the fact that $p^{(3)}(r)\sim r^{-4}$ as $r\to+\infty$, we now
compute, by \eqref{DifferenceinPPrime}, that 
\begin{equation*}
\begin{split}
\vert\Delta _{\xi}\Phi _{1}\vert & =\vert \Delta p(\xi )-\Delta p(\xi -\eta
)\vert \\
& =\left\vert p^{\prime\prime}(\xi)-p^{\prime \prime}(\xi -\eta ) +2\left( 
\frac{p^{\prime}(\xi)}{\vert\xi\vert}-\frac{p^{\prime}(\xi -\eta)}{\vert \xi
-\eta \vert}\right) \right\vert \\
& \lesssim \frac{|\eta |}{\langle \max \{|\xi -\eta |,|\xi |\}\rangle
\langle \min \{|\xi -\eta |,|\xi |\}\rangle ^{3}} +\frac{\vert p^{\prime
}(\xi )-p^{\prime }(\xi -\eta )\vert}{\vert\xi\vert} \\
&+\left\vert\frac{1}{|\xi |}-\frac{1}{|\xi -\eta |}\right\vert \vert
p^{\prime }(\xi -\eta )\vert \\
& \lesssim \frac{|\eta |}{\langle \max \{|\xi -\eta |,|\xi |\}\rangle
\langle \min \{|\xi -\eta |,|\xi |\}\rangle ^{3}} \\
&+\frac{|\eta |}{\langle \max \{|\xi -\eta |,|\xi |\}\rangle \langle \min
\{|\xi -\eta |,|\xi |\}\rangle ^{2}|\xi |}+\frac{|\eta |}{\vert \xi -\eta
\vert\vert \xi \vert }.
\end{split}%
\end{equation*}
Finally, we also get that 
\begin{equation*}
\begin{split}
\Delta _{\eta }\Phi _{1}& =-\Delta _{\eta }p(\xi -\eta )-\Delta _{\eta
}p(\eta ) \\
& =-\left( p^{\prime \prime }(\xi -\eta )+p^{\prime \prime }(\eta )\right)
-\left( \frac{2}{|\xi -\eta |}p^{\prime }(\xi -\eta )+\frac{2}{|\eta |}%
p^{\prime }(\eta )\right) \\
& \lesssim \frac{1}{1+|\xi -\eta |^{3}}+\frac{1}{1+|\eta |^{3}}+\frac{1}{%
|\eta |}+\frac{1}{|\xi -\eta |}.
\end{split}%
\end{equation*}
This ends the proof.
\end{proof}

\begin{proposition}
\label{EstimGenPhase} Define 
\begin{equation*}
\mathfrak{M}_{1}=\frac{|\xi ||\xi -\eta ||\eta |}{\Phi _{1}\langle \xi -\eta
\rangle ^{2\lambda }\langle \eta \rangle ^{2\lambda }}
\end{equation*}
then if $f$ is either $\frac{\chi }{\langle \xi -\eta \rangle ^{\frac{1}{2}}}
$ or $\frac{\chi }{\langle \eta \rangle ^{\frac{1}{2}}}$ for any cutoff
function $\chi $ with support in 
\begin{equation}  \label{RegionOmega}
\Omega =\{\max \{|\xi |,|\xi -\eta |,|\eta |\}\gtrsim 1\},
\end{equation}
we have that, for any $\varepsilon >0,$ 
\begin{eqnarray}
||\mathfrak{M}_{1}||_{L_{\eta }^{\infty }(H_{\xi }^{\frac{5}{4}-\varepsilon
})}+||\mathfrak{M}_{1}||_{L_{\xi }^{\infty }(H_{\eta }^{\frac{5}{4}%
-\varepsilon })} &\lesssim_{\varepsilon}1\text{\ \ }for\text{ \ \ }\lambda >%
\frac{9}{8}.  \label{mglobal} \\
||f\mathfrak{M}_{1}||_{L_{\eta }^{\infty }(H_{\xi }^{\frac{3}{2}-\varepsilon
})}+||f\mathfrak{M}_{1}||_{L_{\xi }^{\infty }(H_{\eta }^{\frac{3}{2}%
-\varepsilon })} &\lesssim_{\varepsilon}1\text{ \ \ }for\text{ }\lambda >1.
\label{mlocal}
\end{eqnarray}
\end{proposition}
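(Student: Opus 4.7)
By the symmetry $\eta\leftrightarrow\xi-\eta$ in $\Phi_1$ and in the weight $\langle\xi-\eta\rangle^{-2\lambda}\langle\eta\rangle^{-2\lambda}$, together with the change of variables $\eta\mapsto\xi-\eta$ (which relates the $L^\infty_\xi H^s_\eta$-type estimate to the $L^\infty_\eta H^s_\xi$-type estimate), both bounds in each of \eqref{mglobal} and \eqref{mlocal} reduce to a single analysis, which I would carry out for the $H^s_\eta$-norm with $\xi$ held fixed. I would localize dyadically in $|\eta|\sim N$ using the projectors $P^\eta_N$ of \eqref{DefLitPalOp} and interpolate, via the Bernstein inequality \eqref{BernSobProp}, between a crude $L^\infty_\eta$ bound and integer-order $\dot H^1_\eta$, $\dot H^2_\eta$ bounds on each dyadic shell. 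The critical regularities $5/4-\varepsilon$ and $3/2-\varepsilon$ will emerge as the largest Sobolev exponents for which the geometric sum over $N\in 2^{\mathbb Z}$ converges.

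Next, I would perform the pointwise and derivative analysis of $\mathfrak{M}_1$ by decomposing the $\eta$-space according to which of $|\xi|,|\xi-\eta|,|\eta|$ is smallest. Applying Lemma \ref{EstimPhiGen} with $c$ the smallest vector yields
\begin{equation*}
|\Phi_1|\gtrsim |c|\bigl(1-\cos[c,a]+1-\cos[b,a]\bigr)+\frac{|a||b||c|}{(1+|a||b|)(1+|c|^2)},
\end{equation*}
and, combined with $|a|\lesssim\langle b\rangle\langle c\rangle$, this gives the pointwise bound $|\mathfrak{M}_1|\lesssim\langle\xi-\eta\rangle^{2-2\lambda}\langle\eta\rangle^{2-2\lambda}$, already requiring $\lambda>1$. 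For first and second derivatives I would use the quotient rule together with Lemma \ref{EstimDerOfPhi}, which controls $\partial_\eta\Phi_1$ by $|\sin\beta|$ plus smoother terms and $\Delta_\eta\Phi_1$ by $1/\min(|\eta|,|\xi-\eta|)$. Working in spherical coordinates centered at $\xi$, the three-dimensional measure $\sin\beta\,d\beta$ partially absorbs the angular singularity $1/\sin\beta$ arising from $\nabla\Phi_1/\Phi_1^2$; carrying this through in $L^2$ produces a logarithmic divergence precisely at $s=3/2$, which is why we can only reach $s=3/2-\varepsilon$.

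The main obstacle, and the source of the two distinct thresholds in \eqref{mglobal} and \eqref{mlocal}, is the joint degeneracy where $|\eta|$ (or $|\xi-\eta|$) is small \emph{and} the three vectors are near-collinear. In that corner the radial contribution of the lower bound on $|\Phi_1|$ behaves like $|a||b||c|/(1+|c|^2)$, so that differentiating $1/\Phi_1$ introduces an unwanted $|\eta|^{-1}$-type singularity at the origin. Without a cutoff, one is forced to sacrifice a quarter derivative: the $N^{3/2-s}$ gain from Bernstein on the shell $|\eta|\sim N$ must dominate the growth of the weights, and this works out exactly when $\lambda>9/8$ and $s<5/4$, yielding \eqref{mglobal}. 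In the localized setting of \eqref{mlocal}, the support cutoff $\chi$ excludes the fully small-frequency regime $\Omega^c$, while the extra factor $\langle\xi-\eta\rangle^{-1/2}$ or $\langle\eta\rangle^{-1/2}$ in $f$ provides precisely the half-derivative of decay needed to restore the optimal threshold $3/2-\varepsilon$ under the weaker assumption $\lambda>1$. Summing the dyadic pieces and angular integrals rigorously, with a careful case split according to the relative sizes of $|\xi|,|\xi-\eta|,|\eta|$, completes the proof.
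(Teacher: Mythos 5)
Your overall strategy mirrors the paper's: decompose by the relative sizes of $|\xi|,|\xi-\eta|,|\eta|$; use Lemma~\ref{EstimPhiGen} for the lower bound on $\Phi_1$ and Lemma~\ref{EstimDerOfPhi} for its derivatives; localize dyadically; interpolate between $L^2$ and $\dot H^2$ on each shell; integrate angularly; and sum the dyadic series, identifying the thresholds $5/4-\varepsilon$, $3/2-\varepsilon$ and $\lambda>9/8$, $\lambda>1$. That framework is the paper's. But there is one genuine gap at the outset, and your quantitative sketch of the endgame is off in a way worth flagging.

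The symmetry reduction does not give you both halves of the estimate. The substitution $\eta\mapsto\xi-\eta$ keeps $\xi$ fixed and is an $\dot H^s_\eta$-isometry, so it relates $L^\infty_\xi H^s_\eta$ only to itself, merely expressing $\mathfrak{M}_1(\xi,\eta)=\mathfrak{M}_1(\xi,\xi-\eta)$; it does \emph{not} produce the $L^\infty_\eta H^s_\xi$ bound. Nor does transposition help: $\Phi_1(\xi,\eta)=p(\xi)-p(\xi-\eta)-p(\eta)$ is not symmetric under $\xi\leftrightarrow\eta$, and the weight $\langle\xi-\eta\rangle^{-2\lambda}\langle\eta\rangle^{-2\lambda}$ privileges $\eta$ over $\xi$. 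You must therefore carry out both the $\xi$-derivative analysis (controlled through $|\nabla_\xi\Phi_1|\lesssim|\sin\gamma|+\dots$) and the $\eta$-derivative analysis (controlled through $|\nabla_\eta\Phi_1|\lesssim|\sin\beta|+\dots$) separately, as the paper does; the two are not interchangeable because the angles $\gamma$ and $\beta$ scale differently in, say, the regime $|\eta|\ll|\xi|\simeq|\xi-\eta|$, which is precisely the degenerate region.

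On the numerics: after the $L^2$/$\dot H^2$ interpolation on the shell of size $N$, the dyadic prefactor in the two degenerate regions is $N^{5/2-2\sigma}$ (times decaying $\langle N\rangle$-, $\langle\eta\rangle$-weights), not the $N^{3/2-s}$ you quote; it is $5/2-2\sigma>0$ that forces $\sigma<5/4$ in \eqref{mglobal}, while the weight $\langle N\rangle^{-(2\lambda-1)+\sigma}$ supplies $\lambda>9/8$. The factor $N^{3/2-\sigma}$ only appears in the balanced region $|\xi|\lesssim|\eta|\simeq|\xi-\eta|$, where $\sigma<3/2$ already suffices. Your asserted pointwise bound $|\mathfrak{M}_1|\lesssim\langle\xi-\eta\rangle^{2-2\lambda}\langle\eta\rangle^{2-2\lambda}$ also needs more care: Lemma~\ref{EstimPhiGen} yields $|\mathfrak{M}_1|\lesssim(1+|a||b|)(1+|c|^2)\langle\xi-\eta\rangle^{-2\lambda}\langle\eta\rangle^{-2\lambda}$, whose comparison with your expression depends on which of $\xi,\xi-\eta,\eta$ plays the roles of $a,b,c$, and in any case the pointwise bound alone does not drive the result; the angular $L^2$ integrals do.
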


\begin{proof}[Proof of Proposition \protect\ref{EstimGenPhase}]
In order to prove this proposition, we split $\mathbb{R}^3$ into a union of
three regions: $\{|\xi |<\frac{1}{2} |\eta |\},$ $\{|\eta |<\frac{1}{2}|\xi
|\}$ and $\{\frac{1}{3}<\frac{|\xi |}{ |\eta |}<3\}.$ Before we start, we
remark that, in the triangle defined by $\xi ,\eta $ and $\xi -\eta ,$ we
have that 
\begin{equation*}
\frac{|\eta -\xi |}{\sin \theta }=\frac{|\xi |}{\sin \beta }=\frac{|\eta |}{
\sin \gamma }.
\end{equation*}
\textbf{Case 1. The region} $\Omega _{1}=\{|\xi |<\frac{1}{2}|\eta |\}.$ In
this case, $|\xi -\eta |\geq |\eta |-|\xi |>|\xi |,$ so $|\xi |$ has the
smallest size. We also deduce that $|\xi -\eta |\simeq |\eta |$ and
consequently, since $p(\xi)\le p(\eta)$ 
\begin{equation*}
|\Phi _{1}(\xi ,\eta )|=|p(\xi )-p(\xi -\eta )-p(\eta )|\gtrsim \max \{|\eta
|,|\xi -\eta |\}.
\end{equation*}
We note that since $p^{\prime }$ is bounded, $|\nabla _{\xi ,\eta }\Phi
_{1}|\lesssim 1$ and from Lemma \ref{EstimDerOfPhi}, we obtain that 
\begin{eqnarray*}
|\nabla _{\xi ,\eta }\{\frac{1}{\Phi _{1}}\}| &=&\left\vert -\frac{\nabla
_{\xi ,\eta }\Phi _{1}}{\Phi _{1}^{2}}\right\vert \lesssim \frac{1}{\{|\eta
|+|\xi -\eta |\}^{2}}, \\
|\Delta _{\xi }\{\frac{1}{\Phi _{1}}\}| &=&\left\vert -\frac{\Delta _{\xi
}\Phi _{1}}{\Phi _{1}^{2}}+2\frac{|\nabla _{\xi }\Phi _{1}|^{2}}{\Phi
_{1}^{3}}\right\vert \\
&\lesssim &\frac{1}{\{|\eta |+|\xi -\eta |\}^{2}}\left\{ \frac{1}{ \{1+|\xi
|^{3}\}}+\frac{1}{|\xi|}\right\} +\frac{1}{\{|\eta |+|\xi -\eta |\}^{3}} \\
&\lesssim & \frac{1}{|\xi |^{3}}, \\
|\Delta _{\eta }\{\frac{1}{\Phi _{1}}\}| &=&\left\vert -\frac{\Delta _{\eta
}\Phi _{1}}{\Phi _{1}^{2}}+2\frac{|\nabla _{\eta }\Phi _{1}|^{2}}{\Phi
_{1}^{3}}\right\vert \\
&\lesssim &\frac{1}{\{|\eta |+|\xi -\eta |\}^{2}}\frac{1}{|\eta |} +\frac{1}{%
\{|\eta |+|\xi -\eta |\}^{3}} \lesssim \frac{1}{|\eta |^{3}}.
\end{eqnarray*}
Recall the definition of $\chi,\varphi$ from \eqref{DefLitPalOp} and denote $%
g=\frac{|\xi ||\xi -\eta ||\eta |}{\langle \xi -\eta \rangle ^{2\lambda
}\langle \eta \rangle ^{2\lambda }}\varphi (\frac{\xi }{N} )\chi (\frac{%
2|\xi |}{|\eta |}),$ so that 
\begin{eqnarray*}
\left\vert\mathfrak{M}_{1}\varphi (\frac{\xi}{N})\chi (\frac{2|\xi |}{|\eta |%
} )\right\vert&\lesssim & \frac{1}{\langle N\rangle ^{4\lambda-2}}\varphi (%
\frac{\xi }{N}),\hskip.1cm\hbox{and} \\
\left\vert \Delta _{\xi }\{\mathfrak{M}_{1}\varphi (\frac{\xi}{N})\chi (%
\frac{2|\xi |}{ |\eta |})\}\right\vert &=&\left\vert \Delta _{\xi }\{\frac{1%
}{\Phi _{1}}\}g+2\nabla _{\xi }\{ \frac{1}{\Phi _{1}}\}\cdot \nabla _{\xi }g+%
\frac{1}{\Phi _{1}}\Delta _{\xi }g\right\vert \\
&\lesssim &\frac{1}{N^{2}\langle N\rangle ^{4\lambda-2}}\varphi(\frac{\xi}{N}%
).
\end{eqnarray*}
We thus have that 
\begin{eqnarray*}
||\mathfrak{M}_{1}\varphi (\frac{\xi}{N})\chi (\frac{2|\xi |}{|\eta |}
)||_{L_{\xi }^{2}} &\lesssim &\frac{N^{3/2}}{\langle N\rangle ^{4\lambda-2}}
\\
\Vert\Delta _{\xi }\{\mathfrak{M}_{1}\varphi (\frac{\xi}{N})\chi (\frac{%
2|\xi | }{|\eta |})\}\Vert_{L^{2}}+\Vert\Delta _{\eta }\{\mathfrak{M}%
_{1}\varphi (\frac{\eta }{N})\chi (\frac{2|\xi |}{|\eta |})\}\Vert_{L^{2}}
&\lesssim &\frac{1}{ N^{1/2}\langle N\rangle ^{4\lambda-2}}.
\end{eqnarray*}
Interpolating between the above estimates, we get that for any $\varepsilon
>0$ and any fixed fixed $\eta$, 
\begin{eqnarray*}
\Vert \mathfrak{M}_{1}\chi (\frac{2|\xi |}{ |\eta |})\Vert_{\dot{H}_{\xi
}^{\sigma}}&\lesssim&\sum_{N}\Vert\mathfrak{M}_{1}\varphi (\frac{\xi}{N}%
)\chi (\frac{2|\xi |}{|\eta |})\Vert_{\dot{H}_{\xi }^{\sigma }} \\
&\lesssim &\sum_{N}||\mathfrak{M}_{1}\varphi (\frac{\xi}{N})\chi (\frac{
2|\xi |}{|\eta |})\Vert_{L^{2}}^{1-\frac{\sigma }{2}} \Vert\Delta _{\xi }\{%
\mathfrak{M}_{1}\varphi (\frac{\xi}{N})\chi (\frac{2|\xi |}{|\eta |}
)\}\Vert_{L^{2}}^{\frac{\sigma }{2}} \\
&\lesssim &\sum_{N}\frac{N^{\frac{3}{2}-\sigma }}{\langle N\rangle
^{4\lambda-2}},
\end{eqnarray*}
which is summable in $N$ for $\lambda >1$ and $0\le\sigma <\frac{3}{2}.$ The
same proof (switching $\xi $ to $\eta $) works for $\sum_{N}||\mathfrak{M}%
_{1}\varphi (\frac{\eta}{N})\chi (\frac{2|\xi |}{|\eta |})||_{\dot{H}_{\eta
}^{\sigma }}.$ Both (\ref{mglobal}) and (\ref{mlocal}) are valid in this
case.

\medskip

\textbf{Case 2. In the region }$\Omega _{2}=\{|\eta |\leq \frac{1}{2}|\xi
|\}.$ \textbf{\ }We note that $|\eta |$ is the smallest, and $|\xi -\eta
|\simeq |\xi |.$ We first claim that 
\begin{equation}
|\Phi _{1}|\gtrsim |\eta |\{\theta ^{2}+\frac{|\xi |^{2}}{\langle \eta
\rangle ^{2}\langle \xi \rangle ^{2}}\}\equiv |\eta |(\theta ^{2}+d^{2}).
\label{d}
\end{equation}%
In fact, if $|\xi |$ is not the largest, then we know that $|\Phi _{1}|\geq
|\eta |$ and the claim is clearly valid. If $|\xi |$ is the largest, then $%
\theta $ is the angle between $|\xi |$ and $|\eta |,$ and $1-\cos \theta
\gtrsim \theta ^{2}$. Therefore we deduce \eqref{d} from Lemma \ref%
{EstimPhiGen}.

We note from Lemma \ref{EstimDerOfPhi} that in this case, 
\begin{align*}
& |\nabla _{\xi }\Phi _{1}(\xi ,\eta )|\lesssim \frac{|\eta |}{1+|\xi |^{3}}%
+|\sin \gamma |, \\
& |\Delta _{\xi }\Phi _{1}(\xi ,\eta )|\lesssim \frac{|\eta |}{1+|\xi |^{4}}+%
\frac{|\eta |}{|\xi |^{2}}\lesssim \text{ }\frac{|\eta |}{|\xi |^{2}}.
\end{align*}%
Besides, using that $\frac{\sin \gamma }{|\eta |}=\frac{\sin \beta }{|\xi |}$%
, the inequality above and \eqref{d}, we can obtain that 
\begin{equation}
\begin{split}
|\nabla _{\xi }\{\frac{1}{\Phi _{1}}\}|& =\left\vert -\frac{\nabla _{\xi
}\Phi _{1}}{\Phi _{1}^{2}}\right\vert \lesssim \frac{1}{|\eta |\{\theta
^{2}+d^{2}\}^{2}(1+|\xi |^{3})}+\frac{\sin \beta }{|\eta ||\xi |\{\theta
^{2}+d^{2}\}^{2}}, \\
|\Delta _{\xi }\{\frac{1}{\Phi _{1}}\}|& =\left\vert -\frac{\Delta _{\xi
}\Phi _{1}}{\Phi _{1}^{2}}+2\frac{|\nabla _{\xi }\Phi _{1}|^{2}}{\Phi
_{1}^{3}}\right\vert \\
& \lesssim \frac{1}{|\eta |\{\theta ^{2}+d^{2}\}^{2}}\frac{1}{|\xi |^{2}}+%
\frac{\frac{|\eta |^{2}}{(1+|\xi |^{3})^{2}}+\sin ^{2}\gamma }{|\eta
|^{3}\{\theta ^{2}+d^{2}\}^{3}} \\
& \lesssim \frac{1}{|\eta |\{\theta ^{2}+d^{2}\}^{2}|\xi |^{2}}+\frac{1}{%
(1+|\xi |^{3})^{2}|\eta |\{\theta ^{2}+d^{2}\}^{3}}+\frac{\sin ^{2}\beta }{%
|\eta ||\xi |^{2}\{\theta ^{2}+d^{2}\}^{3}}.
\end{split}
\label{ComputationOfNablaxi}
\end{equation}%
Now, for fixed $\eta ,$ and for any cutoff function for $\chi (\frac{2|\eta |%
}{|\xi |}),$ denote 
\begin{equation*}
g=\frac{|\xi ||\xi -\eta ||\eta |}{\langle \xi -\eta \rangle ^{2\lambda
}\langle \eta \rangle ^{2\lambda }}\varphi (\frac{\xi }{N})\chi (\frac{%
2|\eta |}{|\xi |})
\end{equation*}%
since $|\xi |\backsim N$ and $|\eta |\leq |\xi |,$ direct computation yields 
\begin{eqnarray*}
|\nabla _{\xi }g| &\lesssim &\frac{1}{|\xi |}\frac{|\xi ||\xi -\eta ||\eta |%
}{\langle \xi -\eta \rangle ^{2\lambda }\langle \eta \rangle ^{2\lambda }}%
\mathbf{1}_{|\xi |\backsim N,|\eta |\leq |\xi |}, \\
|\partial _{\xi }^{2}g| &\lesssim &\frac{1}{|\xi |^{2}}\frac{|\xi ||\xi
-\eta ||\eta |}{\langle \xi -\eta \rangle ^{2\lambda }\langle \eta \rangle
^{2\lambda }}\mathbf{1}_{|\xi |\backsim N,|\eta |\leq |\xi |}.
\end{eqnarray*}%
Therefore, we have that 
\begin{equation*}
|\mathfrak{M}_{1}\varphi (\frac{\xi }{N})\chi (\frac{2|\xi |}{|\eta |}%
)|\lesssim \frac{|\xi |^{2}}{\langle N\rangle ^{2\lambda }\langle \eta
\rangle ^{2\lambda }(\theta ^{2}+d^{2})}\mathbf{1}_{|\xi |\backsim N,|\eta
|\leq |\xi |},
\end{equation*}%
and by \eqref{ComputationOfNablaxi} we also have that 
\begin{eqnarray*}
&&|\Delta _{\xi }\{\mathfrak{M}_{1}\varphi (\frac{\xi }{N})\chi (\frac{2|\xi
|}{|\eta |})\}| \\
&=&|\Delta _{\xi }\{\frac{1}{\Phi _{1}}\}g+2\nabla _{\xi }\{\frac{1}{\Phi
_{1}}\}\cdot \nabla _{\xi }g+\frac{1}{\Phi _{1}}\Delta _{\xi }g| \\
&\lesssim &\frac{|\xi |^{2}\mathbf{1}_{|\xi |\backsim N,|\eta |\leq |\xi |}}{%
\langle N\rangle ^{2\lambda }\langle \eta \rangle ^{2\lambda }}\left\{ \frac{%
1}{\{\theta ^{2}+d^{2}\}^{2}|\xi |^{2}}+\frac{1}{(1+|\xi |^{3})^{2}\{\theta
^{2}+d^{2}\}^{3}}+\frac{\sin ^{2}\beta }{|\xi |^{2}\{\theta ^{2}+d^{2}\}^{3}}%
\right\} .
\end{eqnarray*}%
By using $\frac{\eta }{|\eta |}$ as the north pole, we thus compute: 
\begin{equation}
\begin{split}
\int |\mathfrak{M}_{1}\varphi (\frac{\xi }{N})\chi (\frac{2|\xi |}{|\eta |}%
)|^{2}d\xi & \lesssim \frac{N^{4}}{\langle N\rangle ^{4\lambda }\langle \eta
\rangle ^{4\lambda }}\int_{|\xi |\backsim N}\frac{|\xi |^{2}\sin \theta }{%
(\theta ^{2}+d^{2})^{2}}d\xi d\theta \\
& =\frac{N^{6}}{\langle N\rangle ^{4\lambda }\langle \eta \rangle ^{4\lambda
}}\int_{|\xi |\sim N}\left\{ \int_{\theta \leq d}\frac{\theta d\theta }{d^{4}%
}+\int_{\theta \geq d}\frac{d\theta }{\theta ^{3}}\right\} d|\xi | \\
& \lesssim \frac{N^{7}}{\langle N\rangle ^{4\lambda }\langle \eta \rangle
^{4\lambda }}\frac{1}{d^{2}}\lesssim \frac{N^{5}}{\langle N\rangle
^{4\lambda -2}\langle \eta \rangle ^{4\lambda -2}}.
\end{split}
\label{2M2}
\end{equation}%
Next, since $\beta =\theta +\gamma $ and $\gamma ,\beta \lesssim \theta ,$
and $d\backsim \frac{N}{\langle \eta \rangle \langle N\rangle },$ we have 
\begin{equation}
\begin{split}
& \int |\Delta _{\xi }\{\mathfrak{M}_{1}\varphi (\frac{\xi }{N})\chi (\frac{%
2|\xi |}{|\eta |})\}|^{2}d\xi \\
& \lesssim \frac{N^{6}}{\langle N\rangle ^{4\lambda }\langle \eta \rangle
^{4\lambda }}\int_{|\xi |\backsim N}\left\{ \frac{1}{\{\theta
^{2}+d^{2}\}^{4}N^{4}}+\frac{1}{\langle N\rangle ^{12}\{\theta
^{2}+d^{2}\}^{6}}+\frac{\sin ^{4}\beta }{N^{4}\{\theta ^{2}+d^{2}\}^{6}}%
\right\} \theta d\theta d|\xi | \\
& \lesssim \frac{N^{6}}{\langle N\rangle ^{4\lambda }\langle \eta \rangle
^{4\lambda }}\times \int_{|\xi |\backsim N}\Big(\left\{ \int_{\theta \leq d}%
\frac{\theta d\theta }{d^{8}N^{4}}+\int_{\theta \geq d}\frac{d\theta }{%
\theta ^{7}N^{4}}\right\} \\
& +\frac{1}{\langle N\rangle ^{12}}\left\{ \int_{\theta \leq d}\frac{\theta
d\theta }{d^{12}}+\int_{\theta \geq d}\frac{d\theta }{\theta ^{11}}\right\} +%
\frac{1}{N^{4}}\left\{ \int_{\theta \leq d}\frac{\theta ^{5}d\theta }{d^{12}}%
+\int_{\theta \geq d}\frac{d\theta }{\theta ^{7}}\right\} \Big)d|\xi | \\
& \lesssim \frac{N^{3}}{\langle N\rangle ^{4\lambda }\langle \eta \rangle
^{4\lambda }d^{6}}+\frac{N^{7}}{\langle N\rangle ^{4\lambda }\langle \eta
\rangle ^{4\lambda }\langle N\rangle ^{12}d^{10}}+\frac{N^{3}}{\langle
N\rangle ^{4\lambda }\langle \eta \rangle ^{4\lambda }d^{6}} \\
& \lesssim \frac{1}{\langle N\rangle ^{4\lambda -6}\langle \eta \rangle
^{4\lambda -6}N^{3}},
\end{split}
\label{2MH2}
\end{equation}%
where we have used the fact $\frac{1+|\eta |^{2}}{1+N^{2}}\lesssim 1.$ By
interpolation between \eqref{2M2} and \eqref{2MH2}, 
\begin{eqnarray*}
\Vert \mathfrak{M}_{1}\varphi (\frac{\xi }{N})\chi (\frac{2|\xi |}{|\eta |}%
)\Vert _{\dot{H}_{\xi }^{\sigma }} &\lesssim &\Vert \mathfrak{M}_{1}\varphi (%
\frac{\xi }{N})\chi (\frac{2|\xi |}{|\eta |})\Vert _{L^{2}}^{1-\frac{\sigma 
}{2}}\Vert \Delta _{\xi }\{\mathfrak{M}_{1}\varphi (\frac{\xi }{N})\chi (%
\frac{2|\xi |}{|\eta |})\}\Vert _{L^{2}}^{\frac{\sigma }{2}} \\
&\lesssim &\frac{N^{\frac{5}{2}(1-\frac{\sigma }{2})}}{\langle N\rangle
^{2\lambda -1}\langle \eta \rangle ^{2\lambda -1}}\frac{\langle \eta \rangle
^{\sigma }\langle N\rangle ^{\sigma }}{N^{\frac{3}{4}\sigma }} \\
&\lesssim &\frac{\langle \eta \rangle ^{\sigma }\langle N\rangle ^{\sigma
}N^{\frac{5}{2}-2\sigma }}{\langle N\rangle ^{2\lambda -1}\langle \eta
\rangle ^{2\lambda -1}}
\end{eqnarray*}%
as $|\eta |\lesssim N.$ By taking $\sigma =\frac{5}{4}-\varepsilon ,$ this
is summable for $N$ for $4\lambda -2>\frac{5}{2}.$ This concludes %
\eqref{mglobal}. On the other hand, in $\Omega $ (see \eqref{RegionOmega})
we have $|\xi |\simeq |\xi -\eta |\geq 1$ so that $N\gtrsim 1$ and we deduce
that this is summable for $N\geq 1$, $\sigma =3/2-\varepsilon $ when $%
\lambda >1.$ This concludes \eqref{mlocal}.

\medskip

We now turn to the $\eta$ derivatives. Using again Lemma \ref{EstimDerOfPhi}
, we have that 
\begin{equation*}
|\nabla _{\eta }\Phi _{1}(\xi ,\eta )|\lesssim \frac{|\xi |}{\langle \eta
\rangle ^{2}\langle \xi \rangle }+|\sin \beta |,\hskip.1cm\hbox{and}\hskip%
.1cm|\Delta _{\eta }\Phi _{1}(\xi ,\eta )|\lesssim \frac{1}{|\eta |}
\end{equation*}
and therefore 
\begin{eqnarray*}
|\nabla _{\eta }\{\frac{1}{\Phi _{1}}\}| &=&\left\vert -\frac{\nabla _{\eta
}\Phi _{1}}{\Phi _{1}^{2}}\right\vert \lesssim \frac{|\xi |}{|\eta
|^{2}\{\theta ^{2}+d^{2}\}^{2}\langle \xi \rangle \langle \eta \rangle ^{2}}+%
\frac{\sin \beta }{|\eta |^{2}\{\theta ^{2}+d^{2}\}^{2}}, \\
|\Delta _{\eta }\{\frac{1}{\Phi _{1}}\}| &=&\left\vert -\frac{\Delta _{\eta
}\Phi _{1}}{\Phi _{1}^{2}}+2\frac{|\nabla _{\eta }\Phi _{1}|^{2}}{\Phi
_{1}^{3}}\right\vert \\
&\lesssim &\frac{1}{|\eta |^{3}\{\theta ^{2}+d^{2}\}^{2}}+\frac{1}{|\eta
|^{3}\{\theta ^{2}+d^{2}\}^{3}}\left\{ \frac{|\xi |^{2}}{\langle \xi \rangle
^{2}\langle \eta \rangle ^{4}}+\sin ^{2}\beta \right\} .
\end{eqnarray*}%
Define $g$ by 
\begin{equation*}
g=\frac{|\xi ||\xi -\eta ||\eta |}{\langle \xi -\eta \rangle ^{2\lambda
}\langle \eta \rangle ^{2\lambda }}\varphi (\frac{\eta }{M})\chi (\frac{%
2|\eta |}{|\xi |}).
\end{equation*}%
Since $|\eta |\backsim M$ and $|\eta |\lesssim |\xi |,$ direct computation
yields 
\begin{eqnarray*}
|\nabla _{\eta }g| &\lesssim &\frac{1}{|\eta |}\frac{|\xi ||\xi -\eta ||\eta
|}{\langle \xi -\eta \rangle ^{2\lambda }\langle \eta \rangle ^{2\lambda }}%
\mathbf{1}_{|\eta |\backsim M,|\eta |\leq |\xi |} \\
|\partial _{\eta }^{2}g| &\lesssim &\frac{1}{|\eta |^{2}}\frac{|\xi ||\xi
-\eta ||\eta |}{\langle \xi -\eta \rangle ^{2\lambda }\langle \eta \rangle
^{2\lambda }}\mathbf{1}_{|\eta |\backsim M,|\eta |\leq |\xi |}.
\end{eqnarray*}%
Hence, since $\sin \beta \lesssim \sin \theta $, 
\begin{eqnarray*}
|\Delta _{\eta }\{\mathfrak{M}_{1}\varphi (\frac{\eta }{M})\chi (\frac{2|\xi
|}{|\eta |})\}| &=&|\Delta _{\eta }\{\frac{1}{\Phi _{1}}\}g+2\nabla _{\eta
}\{\frac{1}{\Phi _{1}}\}\cdot \nabla _{\eta }g+\frac{1}{\Phi _{1}}\Delta
_{\eta }g| \\
&\lesssim &\frac{|\xi |^{2}}{M^{2}\{\theta ^{2}+d^{2}\}^{2}\langle M\rangle
^{2\lambda }\langle \xi \rangle ^{2\lambda }} \\
&&+\frac{|\xi |^{2}}{M^{2}\{\theta ^{2}+d^{2}\}^{3}\langle M\rangle
^{2\lambda }\langle \xi \rangle ^{2\lambda }}\left\{ \frac{|\xi |^{2}}{%
\langle \xi \rangle ^{2}\langle \eta \rangle ^{4}}+\sin ^{2}\theta \right\} .
\end{eqnarray*}%
By using $\frac{\xi}{|\xi|}$ as the north pole, and $d\backsim \frac{|\xi |}{%
\langle M\rangle \langle \xi \rangle },$ we thus compute that 
\begin{equation}  \label{2M2eta}
\begin{split}
\int_{|\eta |\sim M}|\mathfrak{M}_{1}\varphi (\frac{\eta }{M})\chi (\frac{%
2|\xi |}{|\eta |})|^{2}d\eta &\lesssim \frac{|\xi |^{4}}{\langle M\rangle
^{4\lambda }\langle \xi \rangle ^{4\lambda }}\int_{|\eta |\backsim M}\frac{%
\sin \theta }{(\theta ^{2}+d^{2})^{2}}d\eta d\theta \\
&\lesssim \frac{|\xi |^{4}M^{2}}{\langle M\rangle ^{4\lambda }\langle \xi
\rangle ^{4\lambda }}\int_{|\eta |\backsim M}\left\{ \int_{\theta \leq d}%
\frac{\theta d\theta }{d^{4}}+\int_{\theta \geq d}\frac{d\theta }{\theta ^{3}%
}\right\} d|\eta | \\
&\lesssim \frac{|\xi |^{4}M^{3}}{\langle M\rangle ^{4\lambda }\langle \xi
\rangle ^{4\lambda }}\frac{1}{d^{2}}\lesssim \frac{|\xi |^{2}M^{3}}{\langle
M\rangle ^{4\lambda -2}\langle \xi \rangle ^{4\lambda -2}}.
\end{split}%
\end{equation}
Next, since $\beta =\theta +\gamma $ and $\gamma ,\beta \lesssim \theta $,
and we have that 
\begin{equation}  \label{2MH2eta}
\begin{split}
&\int_{|\eta |\sim M}|\Delta _{\eta }\{\mathfrak{M}_{1}\varphi (\frac{\eta }{%
M})\chi (\frac{2|\xi |}{|\eta |})\}|^{2}d\eta \\
&\lesssim \frac{1}{M^{2}\langle M\rangle ^{4\lambda }\langle \xi \rangle
^{4\lambda }} \\
&\times \int_{\substack{  \\ |\eta |\backsim M}}\left\{ \frac{|\xi |^{4}}{%
\{\theta ^{2}+d^{2}\}^{4}}+\frac{|\xi |^{8}}{\{\theta
^{2}+d^{2}\}^{6}\langle \xi \rangle ^{4}\langle \eta \rangle ^{8}}+\frac{%
|\xi |^{4}\sin ^{4}\theta }{\{\theta ^{2}+d^{2}\}^{6}}\right\} \theta
d\theta d|\eta | \\
&\lesssim \frac{|\xi |^{4}}{M^{2}\langle M\rangle ^{4\lambda }\langle \xi
\rangle ^{4\lambda }}\int_{|\eta |\backsim M}\Big(\left\{ \int_{\theta
\lesssim d}\frac{\theta d\theta }{d^{8}}+\int_{\theta \geq d}\frac{d\theta }{%
\theta ^{7}}\right\} \\
&+\frac{|\xi |^{8}}{\langle \xi \rangle ^{4}\langle M\rangle ^{8}}\left\{
\int_{\theta \leq d}\frac{\theta d\theta }{d^{12}}+\int_{\theta \geq d}\frac{%
d\theta }{\theta ^{11}}\right\} +\left\{ \int_{\theta \lesssim d}\frac{%
\theta ^{5}d\theta }{d^{12}}+\int_{\theta \geq d}\frac{d\theta }{\theta ^{7}}%
\right\} \Big)d|\eta | \\
&\lesssim \frac{|\xi |^{4}}{M\langle M\rangle ^{4\lambda }\langle \xi
\rangle ^{4\lambda }d^{6}}+\frac{|\xi |^{8}}{M\langle M\rangle ^{4\lambda
+8}\langle \xi \rangle ^{4\lambda +4}d^{10}}+\frac{|\xi |^{4}}{M\langle
M\rangle ^{4\lambda }\langle \xi \rangle ^{4\lambda }d^{6}} \\
&\lesssim \frac{1}{M\langle M\rangle ^{4\lambda -6}\langle \xi \rangle
^{4\lambda -6}|\xi |^{2}}.
\end{split}%
\end{equation}
Interpolating between \eqref{2M2eta} and \eqref{2MH2eta}, we obtain 
\begin{eqnarray*}
\Vert \mathfrak{M}_{1}\varphi (\frac{\eta }{M})\chi (\frac{2|\xi |}{|\eta |}%
)\Vert _{\dot{H}_{\eta }^{\sigma }} &\lesssim &\Vert \mathfrak{M}_{1}\varphi
(\frac{\eta }{M})\chi (\frac{2|\xi |}{|\eta |})\Vert _{L^{2}}^{1-\frac{%
\sigma }{2}}\Vert \Delta _{\eta }\{\mathfrak{M}_{1}\varphi (\frac{\eta }{M}%
)\chi (\frac{2|\xi |}{|\eta |})\}\Vert _{L^{2}}^{\frac{\sigma }{2}} \\
&\lesssim &\left\{ \frac{|\xi |^{2}M^{3}}{\langle M\rangle ^{4\lambda
-2}\langle \xi \rangle ^{4\lambda -2}}\right\} ^{\frac{1}{2}-\frac{\sigma }{4%
}}\left\{ \frac{1}{M\langle M\rangle ^{4\lambda -6}\langle \xi \rangle
^{4\lambda -6}|\xi |^{2}}\right\} ^{\frac{\sigma }{4}} \\
&\lesssim &\frac{M^{\frac{3}{2}-\sigma }}{\langle M\rangle ^{2\lambda
-1-\sigma }\langle \xi \rangle ^{2\lambda -1-\sigma }|\xi |^{\sigma -1}}
\end{eqnarray*}%
By taking $\sigma =\frac{5}{4}-\varepsilon ,$ this is summable in $M$ if $%
2\lambda -1-\sigma >0$ and we conclude \eqref{mglobal}. On the other hand,
in $\Omega ,$ we have $|\xi |\geq 1$ so that by taking $\sigma =\frac{3}{2}%
-\varepsilon ,$ this is summable for $M\geq 1$ if $\lambda >1$ and if $f=%
\frac{1}{\langle \xi -\eta \rangle ^{\frac{1}{2}}}$ or $\frac{1}{\langle
\eta \rangle ^{\frac{1}{2}}}$.

\medskip

\textbf{Case 3 Region }$\Omega _{3}=\{\frac{1}{3}<\frac{|\xi |}{|\eta |}
<3\}. $

In this region, we have $|\xi -\eta |\leq 4\min (|\xi |,|\eta |)$, $|\xi
|\simeq |\eta |$ are of the order of the longest side and $\sin \gamma
\simeq \sin \beta $. Therefore 
\begin{equation}
|\Phi _{1}|\geq |\xi -\eta |(\gamma ^{2}+\beta ^{2}+\frac{|\xi |^{2}+|\eta
|^{2}}{(1+|\xi |^{2}+|\eta |^{2})\langle \xi -\eta \rangle ^{2}})\equiv |\xi
-\eta |(\gamma ^{2}+\beta ^{2}+d_{1}^{2}).  \label{d1}
\end{equation}%
The above lower bound is trivial if $|\xi |$ is not the largest. If $|\xi |$
is the largest and $|\xi -\eta |$ is not the smallest, then $\xi ,\eta ,\xi
-\eta $ are all comparable so that $\gamma \simeq \theta \simeq \pi -\beta $
and from \eqref{EstimOnPhi}, 
\begin{equation*}
|\Phi _{1}|\gtrsim \theta ^{2}|\eta |+\frac{|\eta |}{1+|\eta |^{2}}\gtrsim
|\xi -\eta |(\gamma ^{2}+\beta ^{2}+d_{1}^{2}).
\end{equation*}%
Finally, when $|\xi -\eta |$ is the smallest, this follows from %
\eqref{EstimOnPhi}. Moreover, from \eqref{phixi} and \eqref{phieta}, 
\begin{equation}
|\partial _{\xi ,\eta }\Phi _{1}(\xi ,\eta )|\lesssim \frac{|\eta |}{\langle
\eta \rangle \langle \xi -\eta \rangle ^{2}}+|\sin \gamma |,\text{ \ \ \ \ \ 
}|\Delta _{\xi ,\eta }\Phi _{1}(\xi ,\eta )|\lesssim \frac{1}{|\xi -\eta |}
\end{equation}%
and therefore, 
\begin{eqnarray*}
|\nabla _{\xi ,\eta }\{\frac{1}{\Phi _{1}}\}| &=&\left\vert -\frac{\nabla
_{\xi }\Phi _{1}}{\Phi _{1}^{2}}\right\vert \\
&\lesssim &\frac{|\eta |}{|\xi -\eta |^{2}(\beta ^{2}+\gamma
^{2}+d_{1}^{2})^{2}\langle \eta \rangle \langle \xi -\eta \rangle ^{2}}+%
\frac{\sin \gamma }{|\xi -\eta |^{2}(\beta ^{2}+\gamma ^{2}+d_{1}^{2})^{2}}
\\
|\Delta _{\xi ,\eta }\{\frac{1}{\Phi _{1}}\}| &=&\left\vert -\frac{\Delta
_{\xi }\Phi _{1}}{\Phi _{1}^{2}}+2\frac{|\nabla _{\xi }\Phi _{1}|^{2}}{\Phi
_{1}^{3}}\right\vert \\
&\lesssim &\frac{1}{|\xi -\eta |^{3}(\beta ^{2}+\gamma ^{2}+d_{1}^{2})^{2}}+%
\frac{|\eta |^{2}+|\xi |^{2}}{|\xi -\eta |^{3}(\beta ^{2}+\gamma
^{2}+d_{1}^{2})^{3}\langle \eta \rangle ^{2}\langle \xi -\eta \rangle ^{4}}
\\
&&+\frac{\sin ^{2}\gamma +\sin ^{2}\beta }{|\xi -\eta |^{3}(\beta
^{2}+\gamma ^{2}+d_{1}^{2})^{3}}.
\end{eqnarray*}%
For fixed $\eta $ and a dyadic number $N$, denote 
\begin{equation*}
g=\frac{|\xi ||\xi -\eta ||\eta |}{\langle \xi -\eta \rangle ^{2\lambda
}\langle \eta \rangle ^{2\lambda }}\varphi (\frac{\xi -\eta }{N})\varphi (%
\sqrt{\frac{|\eta |}{|\xi |}})
\end{equation*}%
As before, direct computation yields 
\begin{eqnarray*}
|\nabla _{\xi ,\eta }g| &\lesssim &\frac{1}{|\xi -\eta |}\frac{|\xi ||\xi
-\eta ||\eta |}{\langle \xi -\eta \rangle ^{2\lambda }\langle \eta \rangle
^{2\lambda }}\mathbf{1}_{|\xi -\eta |\backsim N,|\xi |\backsim |\eta |} \\
|\partial _{\xi ,\eta }^{2}g| &\lesssim &\frac{1}{|\xi -\eta |^{2}}\frac{%
|\xi ||\xi -\eta ||\eta |}{\langle \xi -\eta \rangle ^{2\lambda }\langle
\eta \rangle ^{2\lambda }}\mathbf{1}_{|\xi -\eta |\backsim N,|\xi |\backsim
|\eta |}.
\end{eqnarray*}%
Therefore, 
\begin{equation*}
|\mathfrak{M}_{1}\varphi (\frac{\xi -\eta }{N})\varphi (\sqrt{\frac{|\eta |}{%
|\xi |}})|\lesssim \frac{|\eta |^{2}\mathbf{1}_{|\xi -\eta |\backsim N,|\xi
|\backsim |\eta |}}{\langle N\rangle ^{2\lambda }\langle \eta \rangle
^{2\lambda }(\beta ^{2}+d_{1}^{2})},
\end{equation*}%
and 
\begin{eqnarray*}
&&|\Delta _{\xi }\{\mathfrak{M}_{1}\varphi (\frac{\xi -\eta }{N})\varphi (%
\sqrt{\frac{|\eta |}{|\xi |}})\}| \\
&\lesssim &\left\{ \frac{|\eta |^{2}}{(\beta ^{2}+d_{1}^{2})^{2}}+\frac{%
|\eta |^{4}}{(\beta ^{2}+d_{1}^{2})^{3}\langle \eta \rangle ^{2}\langle \xi
-\eta \rangle ^{4}}+\frac{|\eta |^{2}\sin ^{2}\beta }{(\beta
^{2}+d_{1}^{2})^{3}}\right\} \frac{\mathbf{1}_{|\xi -\eta |\backsim N,|\xi
|\backsim |\eta |}}{N^{2}\langle N\rangle ^{2\lambda }\langle \eta \rangle
^{2\lambda }}.
\end{eqnarray*}%
By using $-\frac{\eta }{|\eta |}$ as the north pole, and $d_{1}\backsim 
\frac{|\eta |}{\langle \eta \rangle \langle N\rangle },$ we thus compute
from \eqref{d1}: 
\begin{equation}  \label{3M2}
\begin{split}
&\int |\mathfrak{M}_{1}\varphi (\frac{\xi -\eta }{N})\varphi (\sqrt{\frac{%
|\eta |}{|\xi |}})|^{2}d\xi \\
&\lesssim \frac{|\eta |^{4}N^{2}}{\langle N\rangle ^{4\lambda }\langle \eta
\rangle ^{4\lambda }}\int_{|\xi -\eta |\backsim N}\frac{\sin \beta }{(\beta
^{2}+d_{1}^{2})^{2}}d|\xi |d\beta \\
&=\frac{|\eta |^{4}N^{3}}{\langle N\rangle ^{4\lambda }\langle \eta \rangle
^{4\lambda }}\int_{|\xi -\eta |\sim N}\left\{ \int_{\beta \leq d_{1}}\frac{%
\beta d\beta }{d_{1}^{4}}+\int_{\beta \geq d}\frac{d\beta }{\beta ^{3}}%
\right\} d|\xi | \\
&\lesssim \frac{|\eta |^{4}N^{3}}{\langle N\rangle ^{4\lambda }\langle \eta
\rangle ^{4\lambda }}\frac{1}{d_{1}^{2}}\lesssim \frac{|\eta |^{2}N^{3}}{%
\langle N\rangle ^{4\lambda -2}\langle \eta \rangle ^{4\lambda -2}}.
\end{split}%
\end{equation}
Next, we have that 
\begin{equation}  \label{3MH2}
\begin{split}
&\int_{|\xi -\eta |\sim N}|\Delta _{\xi }\{\mathfrak{M}_{1}\varphi (\frac{%
\xi -\eta }{N})\varphi (\sqrt{\frac{|\eta |}{|\xi |}})\}|^{2}d\xi \\
&\lesssim \frac{|\eta |^{4}}{\langle N\rangle ^{4\lambda }\langle \eta
\rangle ^{4\lambda }N^{4}}\int_{|\xi -\eta |\backsim N}\left\{ \frac{1}{%
(\beta ^{2}+d_{1}^{2})^{4}}+\frac{|\eta |^{4}}{(\beta
^{2}+d_{1}^{2})^{6}\langle \eta \rangle ^{4}\langle N\rangle ^{8}}+\frac{%
\beta ^{4}}{(\beta ^{2}+d_{1}^{2})^{6}}\right\} d(\xi -\eta ) \\
&\lesssim \frac{|\eta |^{4}N^{2}}{\langle N\rangle ^{4\lambda }\langle \eta
\rangle ^{4\lambda }N^{4}}\int_{|\xi -\eta |\backsim N}d|\xi -\eta |\Big(%
\left\{ \int_{\beta \leq d_{1}}\frac{\beta d\beta }{d_{1}^{8}}+\int_{\beta
\geq d_{1}}\frac{d\beta }{\beta ^{7}}\right\} \\
&+\frac{|\eta |^{4}}{\langle \eta \rangle ^{4}\langle N\rangle ^{8}}\left\{
\int_{\beta \leq d_{1}}\frac{\beta d\beta }{d_{1}^{12}}+\int_{\beta \geq
d_{1}}\frac{d\beta }{\beta ^{11}}\right\} +\left\{ \int_{\beta \leq d_{1}}%
\frac{\beta ^{5}d\beta }{d_{1}^{12}}+\int_{\beta \geq d_{1}}\frac{d\beta }{%
\beta ^{7}}\right\} \Big) \\
&\lesssim \frac{|\eta |^{4}}{\langle N\rangle ^{4\lambda }\langle \eta
\rangle ^{4\lambda }Nd^{6}}+\frac{|\eta |^{8}}{\langle N\rangle ^{4\lambda
+8}\langle \eta \rangle ^{4\lambda +4}Nd_{1}^{10}}+\frac{|\eta |^{4}}{%
\langle N\rangle ^{4\lambda }N\langle \eta \rangle ^{4\lambda }d_{1}^{6}} \\
&\lesssim \frac{1}{\langle N\rangle ^{4\lambda -6}\langle \eta \rangle
^{4\lambda -6}|\eta |^{2}N}.
\end{split}%
\end{equation}
Interpolating between \eqref{3M2} and \eqref{3MH2}, we have 
\begin{eqnarray*}
&&||\mathfrak{M}_{1}\varphi (\frac{\xi -\eta }{N})\varphi (\sqrt{\frac{|\eta
|}{|\xi |}})||_{\dot{H}_{\xi }^{\sigma }} \\
&\lesssim &||\mathfrak{M}_{1}\varphi (\frac{\xi -\eta }{N})\varphi (\sqrt{%
\frac{|\eta |}{|\xi |}})||_{L^{2}}^{1-\frac{\sigma }{2}}||\Delta _{\xi }\{%
\mathfrak{M}_{1}\varphi (\frac{\xi -\eta }{N})\varphi (\sqrt{\frac{|\eta |}{%
|\xi |}})\}||_{L^{2}}^{\frac{\sigma }{2}} \\
&\lesssim &\left\{ \frac{|\eta |^{2}N^{3}}{\langle N\rangle ^{4\lambda
-2}\langle \eta \rangle ^{4\lambda -2}}\right\} ^{\frac{1}{2}-\frac{\sigma }{%
4}}\left\{ \frac{1}{\langle N\rangle ^{4\lambda -6}\langle \eta \rangle
^{4\lambda -6}|\eta |^{2}N}\right\} ^{\frac{\sigma }{4}} \\
&\lesssim &\frac{N^{\frac{3}{2}-\sigma }}{\langle N\rangle ^{2\lambda
-1-\sigma }\langle \eta \rangle ^{2\lambda -1-\sigma }|\eta |^{\sigma -1}}
\end{eqnarray*}%
as $N\lesssim |\eta |.$ By taking $\sigma =\frac{5}{4}-\varepsilon ,$ this
is summable for $N$ when $4\lambda -2>\frac{5}{2}$ hence we deduce (\ref%
{mglobal})$.$ On the other hand, in $\Omega ,$ we know that $|\eta |\simeq
|\xi |\geq 1.$ Hence for $f=\frac{1}{\langle \xi -\eta \rangle ^{\frac{1}{2}}%
}$ or $\frac{1}{\langle \eta \rangle ^{\frac{1}{2}}}$, we can take $\sigma =%
\frac{3}{2}-\varepsilon $ and still get a convergent series. \eqref{mlocal}
therefore follows. The $\eta $ derivatives can be controlled similarly since
we had the same control.
\end{proof}


\section{The $L^{10}$ Bound and end of the proof}

\label{SecL10}

\subsection{Estimating the $L^{10}$ bound}

Using the results of Section \ref{SecMult}, we can now estimate the last
part of the $X$ norm.

\begin{proposition}
\label{EstimL10NormForAlphaProp} Let $\alpha$ be a solution of \eqref{alpha}%
, then 
\begin{equation}  \label{EstimL10NormForAlpha}
\sup_{t\geq 0}(1+t)^{\frac{16}{15}}\Vert \alpha(t)\Vert _{W^{k,10}}\lesssim
\Vert \alpha _{0}\Vert _{Y}+\Vert \alpha \Vert _{X}^{2}.
\end{equation}
\end{proposition}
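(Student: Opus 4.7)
The plan is to estimate, in $W^{k,10}$, each of the four contributions to $\alpha(t)$ appearing in the Duhamel-type formula \eqref{alpha}: the boundary correction $\mathfrak{B}(\alpha(t))$; the linear evolution $e^{-itp(|\nabla|)}[\alpha(0)+\mathfrak{B}(\alpha(0))]$ of the modified data; the time integral $\int_0^t e^{-i(t-s)p(|\nabla|)}\mathcal{N}(s)\,ds$ of the cubic source; and the cubic integration-by-parts term whose symbol is $m^j_{lk}/\Phi_{lk}$ and whose second argument is $h(\alpha)$. For each term I aim to produce the bound $(1+t)^{-16/15}(\Vert\alpha_0\Vert_Y+\Vert\alpha\Vert_X^2)$, which sums to \eqref{EstimL10NormForAlpha}.

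For the linear evolution term, the dispersive estimate \eqref{L10Decay} of Proposition \ref{PropLinDecay} gives the decay $(1+t)^{-16/15}$ at the cost of $\Vert\alpha_0+\mathfrak{B}(\alpha_0)\Vert_{W^{k+12/5,10/9}}$. The first summand is absorbed in $\Vert\alpha_0\Vert_Y$ by definition of the $Y$-norm. For the bilinear piece $\mathfrak{B}(\alpha_0)$, I will dualize against $L^{10}$ and apply Theorem \ref{CorGNT} (which is available because the required output exponent $10/9$ is below $2$ and hence in the admissible range). The cubic-source term is treated in the same manner, with $\mathcal{N}$ controlled via \eqref{PropertiesOfN}; the resulting convolution $\int_0^t (1+t-s)^{-16/15}(1+s)^{-16/15}\,ds$ is bounded by $(1+t)^{-16/15}$ since $16/15>1$.

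The main work concerns the boundary term $\mathfrak{B}(\alpha(t))$ and the integration-by-parts cubic term. The decisive algebraic step is to factor
\begin{equation*}
\frac{m(\xi,\eta)}{\Phi(\xi,\eta)}=\mathfrak{M}_1(\xi,\eta)\cdot\frac{\langle\xi-\eta\rangle^{2\lambda}\langle\eta\rangle^{2\lambda}}{|\xi-\eta||\eta|}\cdot(\text{bounded symbols}),
\end{equation*}
so that the bilinear operator, modulo bounded pseudo-multipliers, becomes $B_{\mathfrak{M}_1}[\langle\nabla\rangle^{2\lambda}|\nabla|^{-1}\alpha,\langle\nabla\rangle^{2\lambda}|\nabla|^{-1}\alpha]$. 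The singular factor $|\nabla|^{-1}$ is exactly what the $H^{-1}$ component of the $X$-norm controls via Proposition \ref{ControlL2NormProp}, while $\langle\nabla\rangle^{2\lambda}$ is absorbed by the high-regularity component. The derivatives $\langle\nabla\rangle^k$ coming from $W^{k,10}$ are distributed onto one argument via $\langle\xi\rangle^k\lesssim\langle\xi-\eta\rangle^k+\langle\eta\rangle^k$. Applying the bilinear multiplier machinery of Section \ref{SecMult}, together with Lemma \ref{infinity} and the Sobolev bounds $\mathfrak{M}_1\in L^\infty_\eta\dot{H}^{5/4-\varepsilon}_\xi\cap L^\infty_\xi\dot{H}^{5/4-\varepsilon}_\eta$ from Proposition \ref{EstimGenPhase}, yields the desired $L^{10}$-bound. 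The cubic integration-by-parts term is handled analogously, the third $\alpha$ being hidden in $h(\alpha)$ and controlled through \eqref{PropertiesOfN}.

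The main obstacle is to coordinate Sobolev and Lebesgue exponents. Since $\mathfrak{M}_1$ just barely lies in $\dot{H}^{5/4-\varepsilon}$, below the $\dot{H}^{3/2}$ threshold that would give an $L^\infty$ output, the admissible range in the bilinear multiplier estimate is narrow; this forces a specific balance between an interpolated Lebesgue norm of $\alpha$ (obtained by interpolating the $L^2$ and $L^{10}$ parts of $\Vert\cdot\Vert_X$) and an $L^2$ factor. A secondary but essential technical point is to split into low and high frequencies so that the sharper local bound $\mathfrak{M}_1\in\dot{H}^{3/2-\varepsilon}$ (valid for $\lambda>1$) can be used at high frequencies to avoid excessive derivative loss in the $X$-norm, while the weaker global bound (valid for $\lambda>9/8$) handles the dangerous low-frequency behavior.
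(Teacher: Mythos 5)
Your proposal is correct and follows essentially the same architecture as the paper's proof: decompose $\alpha(t)$ via the identity \eqref{alpha} into the boundary term $\mathfrak{B}(\alpha(t))$, the propagated initial data (including $\mathfrak{B}(\alpha_0)$), the propagated cubic source $\mathcal{N}$, and the integration-by-parts cubic term; apply the dispersive estimate \eqref{L10Decay} to the latter two; and control all bilinear pieces by factoring out $\mathfrak{M}_1$ and invoking Theorem \ref{CorGNT} with Proposition \ref{EstimGenPhase}. You correctly locate the two sources of difficulty — the $H^{-1}$ singularity handled by the $L^2$ component of the $X$-norm, and the global-vs.-local $\dot H^{5/4-\varepsilon}$/$\dot H^{3/2-\varepsilon}$ dichotomy handled by the frequency split — and you correctly flag the narrow exponent window as the bottleneck. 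One remark: your distribution of derivatives via the triangle inequality $\langle\xi\rangle^k\lesssim\langle\xi-\eta\rangle^k+\langle\eta\rangle^k$ is morally right but coarser than what the paper does; in the region $\Omega$ the paper absorbs the quotient $|\xi|^{k+6/5}/\langle\xi-\eta\rangle^{k+6/5}$ into the multiplier and re-verifies that the modified symbol still lies in the $M^{3/2-\varepsilon}$ class, which is what ultimately avoids the $L^{10}$ norm landing on $\langle\nabla\rangle^k|\nabla|^{-1}\alpha$ (unavailable from $X$). If you are going to carry out the exponent bookkeeping, you will need that sharper version rather than the crude split; otherwise the strategy is sound, and your explicit treatment of $e^{itp}\mathfrak{B}(\alpha_0)$ via Theorem \ref{CorGNT} is in fact a gap the paper leaves implicit.
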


\begin{proof}
We use Theorem \ref{CorGNT} and Proposition \ref{EstimGenPhase} to control
the nonlinear terms appearing in \eqref{alpha}. Our strategy is first to
establish the Proposition for $\Phi _{1},$ and then we use symmetry to
conclude all the other cases. We first deal with the cubic terms as follows.
We let 
\begin{equation*}
A(\xi-\eta)=\frac{n_2(\xi-\eta)}{\vert\xi-\eta\vert}\langle\xi-\eta%
\rangle^{2\lambda}\hskip.1cm\hbox{and}\hskip.1cm B(\eta)=\frac{n_3(\eta)}{%
\vert\eta\vert}\langle\eta\rangle^{2\lambda}
\end{equation*}
We first apply Proposition \ref{PropLinDecay} to get that, for a typical
term, 
\begin{equation*}
\begin{split}
& \Vert (1-\Delta)^{\frac{k}{2}}\mathcal{F}^{-1}\left\{%
\int_{0}^{t}e^{i(t-s)p(|\xi |)}\frac{im^1(\xi ,\eta )}{\Phi_{1}}\hat{\alpha}%
(s,\xi -\eta ) \hat{h}(\alpha)(s,\eta)ds\right\}\Vert _{L^{10}} \\
\lesssim &\int_{0}^{t}\frac{1}{(1+t-s)^{\frac{16}{15}}}\Vert \mathcal{F}
_{\xi }^{-1}\left\{n_{1}(\xi )\langle \xi \rangle ^{k}\int_{\mathbb{R}^{3}}%
\frac{|\xi |}{\Phi _{1}}n_{2}(\xi -\eta)\hat{\alpha}(\xi -\eta)n_{3}(\eta )%
\hat{h}(\eta )d\eta\right\} \Vert _{W^{\frac{12}{5},\frac{10}{9}}}ds \\
\lesssim &\int_{0}^{t}\frac{1}{(1+t-s)^{\frac{16}{15}}}\Vert \mathcal{F}
_{\xi }^{-1}\left\{\langle \xi \rangle ^{k+\frac{12}{5}}\int_{\mathbb{R}%
^{3}} \mathfrak{M}_{1}A(\xi-\eta)\hat{\alpha}(\xi -\eta )B(\eta)\hat{h}(\eta
)\right\}\Vert _{L^{\frac{10}{9}}}ds \\
\lesssim &\int_{0}^{t}\frac{1}{(1+t-s)^{\frac{16}{15}}}\Big(\Vert
A(\vert\nabla\vert)\alpha \Vert_{H^{k+\frac{12}{5}}}\Vert
B(\vert\nabla\vert)\beta \Vert_{L^{l_2}}+\Vert
A(\vert\nabla\vert)\alpha\Vert_{L^{l_2}}\Vert B(\vert\nabla\vert)\beta
\Vert_{H^{k+\frac{12}{5}}}\Big) ds \\
\lesssim &\int_{0}^{t}\frac{1}{(1+t-s)^{\frac{16}{15}}}\left( \Vert \alpha
\Vert _{H^{-1}\cap H^{k+2\lambda+\frac{7}{5}}}\Vert |\nabla|^{-1}\beta \Vert
_{H^{3}}+\Vert \alpha \Vert _{H^{-1}\cap H^{2}}\Vert
\vert\nabla\vert^{-1}\beta \Vert _{H^{k+2\lambda+\frac{12}{5}}}\right) ds \\
\lesssim &(1+t)^{-\frac{16}{15}}\Vert \alpha \Vert _{X}^{3}
\end{split}%
\end{equation*}
since $k\ge 2\lambda+\frac{7}{5}$. Here we have applied Lemma \ref{infinity}
around $s=\frac{5}{ 4}-\varepsilon,$ Proposition \ref{EstimGenPhase} and
Theorem \ref{CorGNT} with $l_{1}=10$, $b=\infty$, and $l_{2}= \frac{60}{%
29+20\varepsilon} >2$. To finish the analysis of the cubic term, we also
need to control the cubic term pre-normal form in \eqref{alpha}. We use the
fact that $e^{itp(\vert\nabla\vert)}$ is a unitary operator on $H^{k}$ and %
\eqref{PropertiesOfN} to get that 
\begin{equation*}
\begin{split}
\Vert \mathcal{F}^{-1}\int_0^te^{i(t-s)p(\xi)}\hat{\mathcal{N}}%
_1(\alpha)(s)ds\Vert_{W^{k,10}}&\lesssim \int_0^t\Vert
e^{i(t-s)p(\vert\nabla\vert)}\mathcal{N}_1(\alpha)(s)\Vert_{H^{k+2}}ds \\
&\lesssim \int_0^t\Vert\vert\nabla\vert^{-1}\mathcal{N}_1(\alpha)(s)%
\Vert_{H^{k+3}}ds \\
&\lesssim \Vert\alpha\Vert_X^2
\end{split}%
\end{equation*}

\medskip

To estimate the integrated term $\mathfrak{B}$ in \eqref{alpha}, we need to
separate the regions. First we control the integrated part when all the
terms are small, 
\begin{equation*}
M=\max (|\xi |,|\xi -\eta |,|\eta |)<3.
\end{equation*}
To do this, we first note that Sobolev's embedding $\dot{H}^{\frac{6}{5}
}\subset L^{10},$ and the fact that for bounded $\xi $, $|\xi |^{\frac{6}{5}
}\lesssim |\xi |\lesssim |\xi -\eta |+|\eta |$ to get\footnote{%
Here we forget the difference between $n_2$ and $n_3$ and treat the terms as
symmetric.} 
\begin{equation*}
\begin{split}
& \Vert \mathfrak{B}(\alpha (t),\alpha (t))\Vert _{L^{10}} \\
& \lesssim \Vert \mathcal{F}_{\xi }^{-1}\left\{|\xi |^{\frac{6}{5}}\int_{%
\mathbb{R}^{3}}\frac{|\xi ||\xi -\eta |||\eta |}{i\Phi _{1}}\frac{m}{|\xi |}%
\frac{\hat{\alpha}(t,\xi -\eta )}{|\xi -\eta |}\frac{\hat{\alpha}(t,\eta )}{%
|\eta |}d\eta\right\} \Vert _{L^{2}} \\
& \lesssim \Vert \int_{\mathbb{R}^{3}}\mathfrak{M}_{1}n_{1}(\xi )n_{2}(\xi
-\eta )n_{3}(\eta )\langle \xi -\eta \rangle ^{2\lambda }\hat{\alpha}(t,\xi
-\eta )\frac{\langle \eta \rangle ^{2\lambda }\hat{\alpha}(t,\eta )}{|\eta |}%
d\eta \Vert _{L^{2}} \\
& \lesssim \Vert n_{2}(|\nabla |)\alpha \Vert _{L^{10}}\Vert |\nabla
|^{-1}n_{3}(|\nabla |)\alpha \Vert _{L^{2}}\lesssim \Vert \alpha \Vert
_{X}\Vert \alpha \Vert _{L^{10}}
\end{split}%
\end{equation*}
where we have applied Proposition \ref{EstimGenPhase} for $\mathfrak{M}_{1}$
for $s=\frac{5}{4} -\varepsilon$, Lemma \ref{infinity}, and Lemma \ref%
{LemGNT} with $l_{1}=2$, $b=\infty $, $l_{2}=10$ and $l_{3}=\frac{60}{%
29+20\varepsilon}>2$.

Next we deal with the case when one of the frequencies is large $M>1.$ This
can happen in two cases. First, if $|\xi |\leq 1,M>1$. In this case, we have 
$|\eta |\simeq |\xi -\eta |\geq 1.$ We bound the $L^{10}$ norm by the $L^{2}$
norm via Sobolev's inequality (with bounded $\xi $) to get 
\begin{equation*}
\begin{split}
\Vert \mathfrak{B}\Vert _{L^{10}}& \lesssim \Vert \mathcal{F}_{\xi
}^{-1}\int_{\mathbb{R}^{3}}\frac{m(\xi ,\eta )}{i\Phi _{1}}\chi \hat{\alpha}%
(t,\xi -\eta )\hat{\alpha}(t,\eta )d\eta \Vert _{L^{10}} \\
& \lesssim \Vert \mathcal{F}_{\xi }^{-1}\int_{\mathbb{R}^{3}}n_{1}(\xi )f 
\mathfrak{M}_{j}n_{2}(\xi -\eta )\frac{\langle \xi -\eta \rangle ^{2\lambda +%
\frac{1}{4}}\hat{\alpha}(t,\xi -\eta )}{|\xi -\eta |}n_{3}(\eta )\frac{%
\langle \eta \rangle ^{2\lambda +\frac{1}{4}}\hat{\alpha}(t,\eta )}{|\eta |}%
d\eta \Vert _{L^{10}} \\
& \lesssim \Vert \mathcal{F}_{\xi }^{-1}\int_{\mathbb{R}^{3}}f\mathfrak{M}%
_{j}n_{2}(\xi -\eta )\frac{\langle \xi -\eta \rangle ^{2\lambda +\frac{1}{4}}%
\hat{\alpha}(t,\xi -\eta )}{|\xi -\eta |}n_{3}(\eta )\frac{\langle \eta
\rangle ^{2\lambda +\frac{1}{4}}\hat{\alpha}(t,\eta )}{|\eta |}d\eta \Vert
_{L^{2}} \\
& \lesssim \Vert \alpha \Vert _{H^{2}}\Vert n_{3}(|\nabla |)\alpha \Vert
_{W^{2,\frac{3}{\varepsilon }}}\lesssim \Vert \alpha \Vert _{X}\Vert \alpha
\Vert _{W^{3,10}}.
\end{split}%
\end{equation*}
We have applied Proposition \ref{EstimGenPhase} with $s=\frac{3}{2}%
-\varepsilon ,$ Lemma \ref{infinity} around $s=\frac{3}{2}-\varepsilon$ and
Lemma \ref{LemGNT} with $s=\frac{3}{2} -\varepsilon$, $b=\infty$, $l_{1}=2$, 
$l_{2}=10$ and $l_{3}=\frac{15}{6+5\varepsilon}>2.$ This concludes the
estimates in the region $\{|\xi |\leq 1\}\cap \{M\geq 1\}$.

The other case is included in the region 
\begin{equation*}
\Omega =\{|\eta |\leq 2|\xi -\eta |,|\xi |\geq 1/2\}\cup \{|\eta |>|\xi
-\eta |,|\xi |\geq 1/2\}
\end{equation*}
and leads to the worst loss in derivatives (whereas the region when all
frequencies are small leads to the loss of smoothness of the multiplier and
and hence to the loss of decay in time). In the case $|\eta |\leq 2|\xi
-\eta |,$ we choose $f=\frac{\chi }{\langle \eta \rangle ^{\frac{1}{2}}}$ in
Proposition \ref{EstimGenPhase}. We apply Lemma \ref{infinity} to deduce
that 
\begin{equation*}
\frac{\vert\xi \vert^{k+\frac{6}{5}}}{\langle \xi -\eta \rangle ^{k+ \frac{6%
}{5}}}\mathfrak{M}_{1}f\in M_{\xi ,\eta }^{\frac{3}{2}-\varepsilon }.
\end{equation*}
Hence 
\begin{equation*}
\begin{split}
& \Vert \vert\nabla\vert^k\mathfrak{B}(\alpha ,\alpha )\Vert _{L^{10}} \\
& \lesssim \Vert \mathcal{F}_{\xi }^{-1}\left\{ |\xi |^{k}n_{1}(\xi )\int_{ 
\mathbb{R}^{3}}f\mathfrak{M}_{j}n_{2}(\xi -\eta)\frac{\langle \xi -\eta
\rangle ^{2\lambda}\hat{\alpha}(\xi -\eta )}{|\xi -\eta|}n_{3}(\eta )\frac{
\langle \eta \rangle ^{2\lambda +\frac{1}{2}}\hat{\alpha}(\eta )}{|\eta |}
d\eta \right\} \Vert _{L^{10}} \\
& \lesssim \Vert \mathcal{F}^{-1}\int_{\mathbb{R}^{3}}[\frac{|\xi |^{k+\frac{%
6}{5}}}{\langle \xi -\eta \rangle ^{k+\frac{6}{5}}}\mathfrak{M}%
_{j}f]n_{2}(\xi -\eta )\frac{\langle \xi -\eta \rangle ^{k+2\lambda +\frac{6%
}{5}}\hat{\alpha}(\xi -\eta )}{|\xi -\eta |}n_{3}(\eta )\frac{\langle \eta
\rangle ^{2\lambda +\frac{1}{2}}}{|\eta |}\hat{\alpha}(\eta )d\eta \Vert
_{L^{2}} \\
& \lesssim \Vert \frac{|\xi |^{k+\frac{6}{5}}}{\langle \xi -\eta \rangle ^{k+%
\frac{6}{5}}}\mathfrak{M}_{j}f\Vert _{\mathcal{M}_{\xi ,\eta }^{\frac{3}{2}%
-\varepsilon }}\Vert |\nabla |^{k+\frac{11}{5}+2\delta }n_{2}(|\nabla
|)\alpha \Vert _{L^{l_{3}}}\Vert \frac{(1-\Delta )^{2}}{|\nabla |}%
n_{3}(|\nabla |)\alpha \Vert _{L^{l_{2}}} \\
& \lesssim \Vert |\nabla |^{k+\frac{11}{5}+2\delta }\alpha \Vert
_{L^{l_{2}}}\Vert \frac{(1-\Delta )^{2}}{|\nabla |}\alpha \Vert _{L^{l_{3}}}.
\end{split}%
\end{equation*}
We have applied Lemma with $s=\frac{3}{2}-\varepsilon ,$ $\frac{1}{l_{2}}=%
\frac{14}{50}+\frac{17}{15}\varepsilon $ and $\frac{1}{l_{3}}=\frac{11}{50}-%
\frac{4}{5}\varepsilon $. Now, using Bernstein estimates, we compute that 
\begin{equation*}
\begin{split}
\Vert P_{\leq 1}\frac{(1-\Delta )^{2}}{|\nabla |}\alpha \Vert _{L^{l_{2}}}&
\lesssim \sum_{N\leq 1}N^{-1}\Vert P_{N}\alpha \Vert _{L^{l_{2}}} \\
& \lesssim \sum_{N\leq 1}N^{-1}N^{3\left( \frac{1}{l_{4}}-\frac{1}{l_{2}}%
\right) }\Vert P_{N}\alpha \Vert _{L^{l_{4}}} \\
& \lesssim \sum_{N\leq 1}N^{\varepsilon }\left( N^{-1}\Vert P_{N}\alpha
\Vert _{L^{2}}\right) ^{1-\sigma }\Vert P_{N}\alpha \Vert _{L^{10}}^{\sigma
}\lesssim \Vert \alpha \Vert _{X}^{1-\sigma }\Vert \alpha \Vert
_{L^{10}}^{\sigma },
\end{split}%
\end{equation*}
for 
\begin{equation*}
\sigma =\frac{5}{11}\left( \frac{3}{b}-2\varepsilon \right) =\frac{3}{10}+ 
\frac{2\varepsilon }{11},\hskip.1cm\hbox{and}\hskip.1cm \frac{1}{l_{4}}=%
\frac{1}{2 }-\frac{2\sigma }{5}=\frac{19}{50}-\frac{4}{55}\varepsilon
\end{equation*}
while for the high frequencies, we have that 
\begin{equation*}
\begin{split}
\Vert P_{\geq 1}\frac{(1-\Delta )^{2}}{|\nabla |}\alpha \Vert _{L^{l_{2}}}&
\lesssim \Vert (1-\Delta )^{\frac{3}{2}}\alpha \Vert _{L^{10}}^{\sigma
}\Vert (1-\Delta )^{\frac{3}{2}}\alpha \Vert _{L^{l_{5}}}^{1-\sigma } \\
& \lesssim \Vert \alpha \Vert _{W^{3,10}}^{\sigma }\Vert \alpha \Vert
_{X}^{1-\sigma }
\end{split}%
\end{equation*}
for $\frac{1}{l_{5}}=(1/4+184\varepsilon )/(7/10-2/11\varepsilon )$.
Independently, we have that 
\begin{equation*}
\begin{split}
\Vert |\nabla |^{k+\frac{11}{5}+2\delta }\alpha \Vert _{L^{l_{3}}}& \lesssim
\Vert (1-\Delta )^{\frac{k}{2}}\alpha \Vert _{L^{10}}^{1-\sigma }\Vert
(1-\Delta )^{\frac{k}{2}+(\frac{11}{10}+2\delta )\frac{1}{\sigma }}\alpha
\Vert _{L^{2}}^{\sigma } \\
& \lesssim \Vert \alpha \Vert _{W^{k,10}}^{1-\sigma }\Vert \alpha \Vert
_{X}^{\sigma }
\end{split}%
\end{equation*}
provided that $k>11/(5\sigma )=\frac{22}{3}+\varepsilon .$

In the case $|\eta |>|\xi -\eta |$ we proceed similarly with $f=\frac{\chi }{%
\langle \xi -\eta \rangle ^{1/2}}$. We therefore conclude the Proposition
for $\Phi _{1}.$
\end{proof}

\medskip

We now have completed the proof for $j=1$ by Lemma and Theorem. To establish %
\eqref{EstimL10NormForAlpha} for $j\neq 1,$ we note that the proposition is
clearly valid for $\Phi _{2}$ because the proof in Case 1 shows that
Proposition \ref{EstimGenPhase} is also valid in this easier case (indeed, $%
\vert\Phi_2\vert\gtrsim \max(\vert\xi\vert,\vert\xi-\eta\vert,\vert\eta\vert)
$). For $\Phi _{4},$ we note $\Phi _{4}(\xi ,\eta )=-\Phi _{1}(\eta ,\xi ),$
and repeat the same proof in light of Proposition \ref{EstimGenPhase}.
Finally, for $\Phi _{3}(\xi ,\eta )=-\Phi _{1}(\xi -\eta ,\xi ),$ we make a
change of integration variable $\eta \rightarrow \xi -\eta $ in the
integrations in both the cubic terms and $\mathfrak{B}$ and get back to the
previous case. We thus conclude the proof.

\subsection{End of the proof}

Now, we are ready to finish the proof of Theorem \ref{MainThm}.

\begin{proof}[Proof of Theorem \protect\ref{MainThm}]
The existence of a local regular solution $\beta\in C(0,T^\ast),X)$ follows
from the standard method of Kato \cite{Kat}. Combining Proposition \ref%
{ControlL2NormProp} and Proposition \ref{EstimL10NormForAlphaProp}, we
obtain that 
\begin{equation*}
\Vert \beta\Vert_X\lesssim \Vert \alpha(0)\Vert_{Y}+\Vert \beta\Vert_X^2
\end{equation*}
so that if $\Vert\alpha(0)\Vert_Y$ is sufficiently small, we get a global
bound on the $X$-norm of the solution, which implies that $T^\ast=\infty$
and gives a global bound on the $X$-norm of $\rho$ and $v$. This ends the
proof.
\end{proof}


\end{document}